\documentclass[11pt,reqno]{amsart}
\pdfoutput=1 
\usepackage{xypic}
\usepackage{graphicx}
\usepackage{amsfonts}
\usepackage{mathrsfs}
\usepackage{amssymb}
\input{epsf.sty}
\catcode `\@=11
\def\numberbysection{\@addtoreset{equation}{section}
         \renewcommand{\theequation}{\thesection.\arabic{equation}}}
\numberbysection
\def\subsubsection{\@startsection{subsubsection}{3}%
  \normalparindent{.5\linespacing\@plus.7\linespacing}{-.5em}%
  {\normalfont\bfseries}}
\setcounter{tocdepth}{1}

\textwidth5.2 in \raggedbottom

\newtheorem{thm}{Theorem}[section]
\newtheorem{lem}[thm]{Lemma}
\newtheorem{prop}[thm]{Proposition}
\newtheorem{cor}[thm]{Corollary}

\newtheorem{ep}[thm]{Expectation}

\theoremstyle{definition}

\newtheorem{df}[thm]{Definition}

\newtheorem{rmk}[thm]{Remark}

\newtheorem{ex}[thm]{Example}

\def\nn{\nonumber}

\def\eps{\epsilon}
\def\G{\Gamma}

\def\R{{\mathbb R}}
\def\C{{\mathbb C}}
\def\Z{{\mathbb Z}}
\def\T{{\mathbb T}}
\def\F{{\mathbb F}}
\def\H{{\mathscr H}}

\def\Lhex{\Lambda_{hex}}
\def\Lhc{\Lambda_{hc}}
\def\B{{\mathscr B}}
\def\L{\Lambda}

\def\formTheta{\widehat\Theta}
\def\BformTheta{\B_{\formTheta}}
\def\BTheta{\B_{\Theta}}

\def\grep{\rho_{\Gamma}}

\def\a{\alpha}

\def\bgl{\bar\G_{\L}}
\def\gl{\G_{\L}}
\def\gp{\Gamma_+}
\def\bgp{\bar\Gamma_+}
\def\Cliff{\it Cliff}
\def\cube{\bar\Gamma_+^{\it crystal}}
\def\Lhon{\L_{\rm hon}}

\setcounter{tocdepth}{3}

\begin{document}

\title[The geometry of the Double Gyroid wire network]{
The geometry of the Double Gyroid wire network: Quantum and Classical}

\author
[Ralph M.\ Kaufmann]{Ralph M.\ Kaufmann}
\email{rkaufman@math.purdue.edu}

\address{Department of Mathematics, Purdue University, 
 West Lafayette, IN 47907
 and School of Mathematics, Institute for Advanced Study, Princeton NJ, 08540}

\author
[Sergei Khlebnikov]{Sergei Khlebnikov}
\email{skhleb@physics.purdue.edu}

\address{Department of  Physics, Purdue University,
 West Lafayette, IN 47907}

\author
[Birgit Kaufmann]{Birgit Wehefritz--Kaufmann}
\email{ebkaufma@math.purdue.edu}

\address{Department of Mathematics and Department of Physics, Purdue University,
 West Lafayette, IN 47907
 and  Department of Physics, Princeton University, Princeton NJ, 08544}

\subjclass[2000]{46L60, 81R60, 58B34, 53A10, 57M15, 52C07, 05C38, 81Q10}

\begin{abstract}
Quantum wire networks have recently become of great interest. Here
we deal with a novel nano material structure of a Double Gyroid wire
network. We use methods of commutative and 
non-commutative geometry to describe this
wire network.
Its non--commutative geometry is closely related to  non-commutative 3-tori as we
discuss in detail.
\end{abstract}

\maketitle

\section*{Introduction}
Interfaces that can be modeled by surfaces of constant mean curvature (CMC) are
ubiquitous in nature and can now be synthesized in laboratory. Recently, Urade et al.\
\cite{Hillhouse} have reported fabrication of a nano-porous silica film whose structure
is related to a specific CMC surface---the Gyroid. The structure
is three-dimensionally periodic and has three components:\footnote{And so
is sometimes referred to as tri-continuous.}
a thick surface and two channels, as detailed below. The interface between the wall
and the channels approximates a Double Gyroid.

Urade et al.\ \cite{Hillhouse} 
have also demonstrated a nanofabrication technique in which the channels are 
filled with a metal, while the silica wall can be either left in place or removed.
These novel materials open a wide field
of applications due to their topological and geometric features. The channels
are a few nanometers wide and, when filled with a conducting or semiconducting
material, are expected to acquire certain characteristics of one-dimensional 
quantum wires (such as a blueshift of the spectrum and an enhanced density of 
states), while remaining three-dimensional in other respects \cite{KW}.
Geometrically, the one--dimensional structure appears since
each of the channels can be retracted to a skeletal graph \cite{KGB}, 
the Gyroid graph.

We will concentrate on the resulting geometry and topology of these
networks in the possible presence of a constant magnetic field.
Topological quantities are of particular physical interest as they
remain stable under continuous perturbations. In practice,
those should be small as not to break the structure.

We use two approaches:
the first is purely classical and the
second is the quantum/non--commutative approach due to Alain Connes \cite{Connes}.
The classical results we provide are a study of the fundamental group
of the channel. Here we determine the fundamental group of each channel
system. This group is the commutator subgroup of the free group in
three generators. A surprising result coming directly from
the Gyroid geometry is that there is a new length function on
the free group which is different from the ordinary word length.
With the help of this length function we determine the shortest
loops at any given point. There are 15 such topologically distinct
shortest loops (30 if one includes orientation). They split into two groups
which are distinguished by their cyclic symmetry which is either of order two
or three. In both groups there are three generators using this additional
symmetry. These loops are of particular 
interest since numerical simulations \cite{KW} show the possibility of
an enhanced density of states in a double-gyroid quantum wire, due to states that
are nearly localized near such loops. We also calculate
the flux of a constant magnetic field
through these loops. The tool is an effective unit vector.
The result of the calculation is that these effective
unit vectors have a particularly simple form (see Table \ref{looptable}).
This fact should also be relevant for the study of the spin--orbit coupling
of the loop--localized states.

Our study of the non--commutative geometry is motivated by 
one of the big early successes of the
non-commutative geometry of Alain Connes. This was the description by
Bellissard et al. of the quantum Hall effect \cite{BE}. This allowed to explain
the integer effect in terms of the non-commutative geometric
properties. The underlying geometry in that situation is the quantum
2-torus. Recently there have been further analyses on the fractional
effect using hyperbolic geometry as a model \cite{Marcolli}.
The conceptual approach as outlined in \cite{belissard} is to replace
the Brillouin zone by a non--commutative Brillouin zone which is given
by a $C^*$--algebra that contains the translational symmetry operators 
and the Hamiltonian. In this geometry relevant quantities can often be 
expressed in terms of the $K$ theory of this algebra. This Abelian group 
captures information related to the the topology or better the homotopy type
of the algebra or geometric setup. These are again quantities which are
stable under continuous deformations. A prime example is the Hall conductance.
 
A general fact which is pertinent to our 
discussion is that the $K$--groups also serve to label the gaps 
in the spectrum. 
Roughly this goes as follows. Given a gap in the spectrum there is a projector
projecting to the energy levels below the gap. This projector
in turn gives rise to a $K$--theory element. If one knows the ordered
$K$--theory then one can also sometimes deduce if only 
 finitely or infinitely many gaps are possible.

In our situation, we determine the said $C^*$ algebra for 
one channel in the presence of a magnetic field and describe
the $K$--theories. The Hamiltonian we use is the generalization  
\cite{Sunada,belissard} of the Harper Hamiltonian \cite{Harper,belissard} 
adapted to our situation\footnote{The generalized Harper operator is also the operator
underlying the non--commutative geometry of the quantum Hall effect \cite{BE,Marcolli}}.   
We call the resulting algebra the Bellissard--Harper algebra and
denote it by $\B$. This geometry is closely related to 
the non--commutative 3--torus 
$\T^3_{\Theta}$. Here $\Theta$ is a 
skew--symmetric $3\times 3$ matrix determined by the magnetic field.
In fact we show that the algebra is isomorphic
to a subalgebra of the $4\times 4$ matrix algebra with coefficients in
the non--commutative 3--torus. By varying the magnetic field,
we obtain a three--parameter family of algebras. We prove that at all but
{\em finitely many} points
this algebra is the full matrix algebra and hence Morita 
equivalent to $\T^3_{\Theta}$. Since $\T^3_{\Theta}$ is simple
at irrational $\Theta$ one would expect this to be true on a dense set (see \S 3). 
We not only prove that this expectation holds at the irrational points, but are 
able to extend this result to almost all rational points.
At  certain special values of the 
field which we enumerate, however, the algebra is genuinely smaller leading
to a possibly different $K$--theory. At these points the material 
may also exhibit special properties. 

 The ordered $K$--theory
of the 3--torus is completely known \cite{Rieffel,PV,E} and actually
completely classifies the isomorphism classes of such tori \cite{RS,EL}. Our algebra
always injects into a $\T^3_{\Theta}$. From this and the knowledge of 
the ordered $K$--group of the non--commutative torus, we obtain the result
that there are only finitely
many gaps possible at rational values of $\Theta$. 

Our approach to both the classical and the quantum geometry uses 
graph theoretical methods. 
The relevant graphs are the Gyroid graph which is a 3--regular graph 
and its quotient under the translation symmetry group.
In order to give the matrix algebras
explicitly we introduce the notions of a graph representation.
Using rooted spanning trees, we are able to represent
the algebra $\B$ in terms of matrices, where they are amenable 
to direct computations.

As the geometry of the Gyroid and its channels is quite difficult,
we also treat the honeycomb lattice as a two--dimensional analogue.
Indeed, the Gyroid lattice graph (the graph onto which each channel retracts)
is in many ways the three dimensional analogue
of the honeycomb lattice,  which is why when  developing the
 more general parts of the theory
we will parallely consider these two cases as our main examples. Both graphs
are 3--regular,
and both of the graphs are not mathematical lattices but 
only physical lattices\footnote{See \S\ref{maxsymsec} for the disambiguation
of the use of the word ``lattice''.}. This means that they give
 rise to two groups, one which is the space or symmetry group and the other 
is the group of the lattice of which 
 they can be considered a subset. In the honeycomb  case these are
 both $\Z^2$ but embedded into $\R^2$ as the triangular lattice and its dual.
For the Gyroid the groups are both $\Z^3$, but they are embedded as
 a body centered cubic (bcc) and  a face centered cubic  (fcc) lattice in $\R^3$ which are again dual to each other.
Furthermore  the fundamental group of the honeycomb lattice is the 
commutator subgroup of the free group on two elements, while for the 
Gyroid $\pi_1$ is the commutator subgroup  of the free group 
in three generators. And in both cases, we find a new length function
on the free group induced by the geometry of the lattice.

The paper is organized as follows:

In the first section, we 
start with a review of the classical geometry of the Gyroid and
then prove the results on the fundamental group and the smallest loops.
In the second section, we formally introduce graphs and lattices
and the relevant groups associated to them. We then 
define their representation in Hilbert spaces and the associated
Harper operators. Finally, we show 
how to obtain a matrix representation of $\B$ using rooted spanning trees.
In the third section, we apply the general theory of the second section to
the non--commutative geometry of a lattice using a Harper operator, 
detailing the honeycomb and the Gyroid case. Here we also briefly review
projective representations and explain how they  arise in
the presence of a magnetic field.
In this context, we can already show that there can only be finitely many gaps
for the Gyroid at rational $\Theta$. We also outline the general
approach to calculating $\B$ and its $K$--theory and the expected
results. In particular, we compute the $K$--theory in the commutative case
in terms of a cover of a torus.

In the last section we apply the outlined strategies to
 calculate the algebras $\B$ and their $K$--theory
for Bravais lattices, the honeycomb lattice and the Gyroid lattice graph.

\section{The classical geometry of the double Gyroid}
\label{gclasssec}
\subsection{The double Gyroid and its channels}
The Gyroid is an embedded CMC surface in $\R^3$ \cite{KGBW}. It was 
discovered by Alan Schoen \cite{Sch}. In nature it was observed as an interface for di-block co-polymers \cite{copolyone,copolytwo}.
The interface actually consists of {\em two} disconnected surfaces. {\em Each} of them
is a Gyroid surface. The Double Gyroid (DG) is a particular configuration 
of two mutually non--intersecting embedded Gyroids.

A single Gyroid has symmetry group $I4_132$ while the double Gyroid has the symmetry group $Ia\bar3d$ where the extra symmetry comes from interchanging
the two Gyroids \footnote{Here $I4_132$ and $Ia\bar3d$ are given in the international or Hermann--Mauguin notation for symmetry groups, see e.g.\cite{symnota}. }.

Since CMC surfaces are mathematically hard to handle level surfaces have been
suggested as a possible approximation in \cite{lambert}. 
The level surface approximation for the Double Gyroid again consists of
two level surfaces. We will call the two surface interfaces $S_1$ and $S_2$.
For the discussion at hand it is not relevant if the two surfaces
are actually the CMC surfaces or their level surface approximations.
One example of a DG approximation is given by 
the family of  level surfaces \cite{lambert}

\begin{equation}
L_t:\sin x\cos y + \sin y \cos z+ \sin z \cos x=t
\end{equation}

A model for the double Gyroid is then given by 
$L_{w}$ and $L_{-w}$ for $0\leq w< \sqrt {2}$.

The complement $\R^3\setminus G$ 
of a single Gyroid $G$ has two components. These components
will be called the Gyroid wire systems or channels. 

There are two distinct channels, one left and one right handed.
Each of these channels contracts onto a graph, called skeletal 
graph in \cite{Sch,KGB}. We will call these graphs $\Gamma_+$ and $\Gamma_-$.
Each graph is periodic and trivalent. We fix $\Gamma_+$ to be the
graph which has the node $v_0=(\frac{5}{8},\frac{5}{8},\frac{5}{8})$.
We will give more details on the graph $\Gamma_+$ below.

The channel containing $\Gamma_+$ is shown in Figure \ref{bigcellfig}.
A (crystal) unit cell of the channel together with the embedded graph $\gp$
is shown in Figure \ref{gyrlabelsfig}
and just the graph is contained in Figure \ref{gammalabelsfig}.

\begin{figure}
\includegraphics{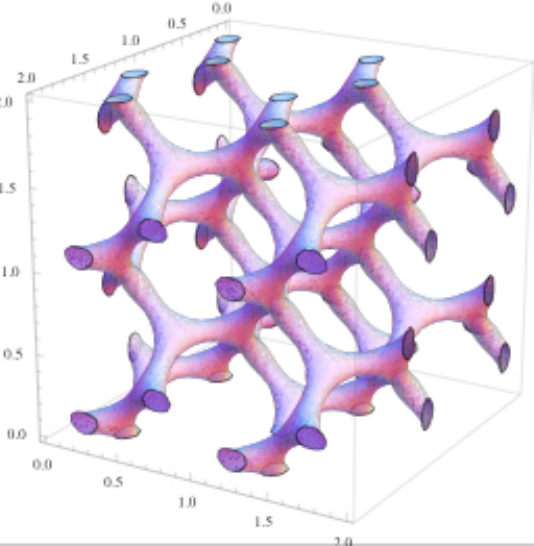}
\caption{\label{bigcellfig}The Channel $C_+$}
\end{figure}

\begin{figure}
\includegraphics[width=.7\textwidth]{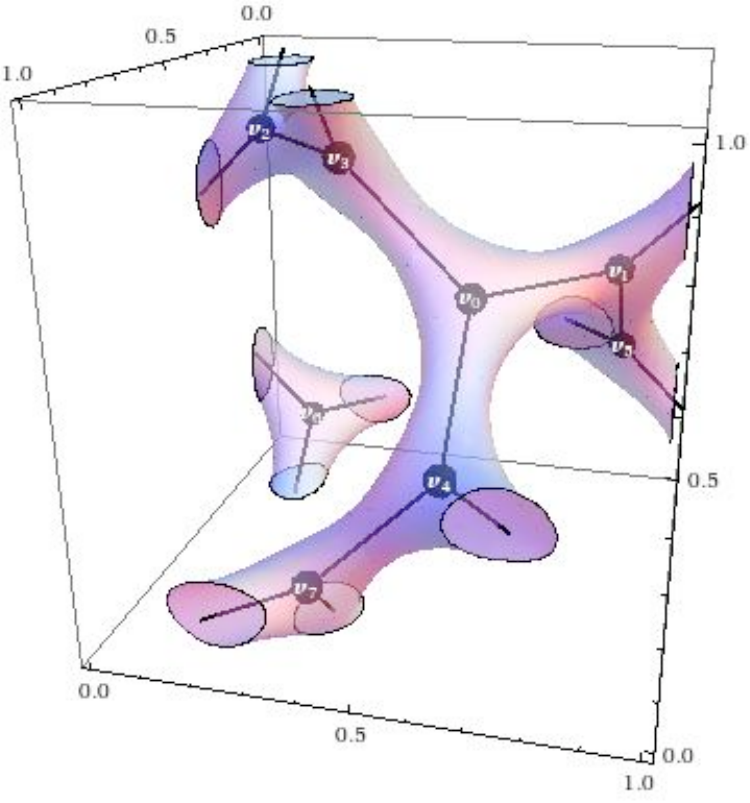}
\caption {\label{gyrlabelsfig} The graph $\Gamma_+$ embedded into its channel $C_+$}
\end{figure}

\begin{figure}
\includegraphics{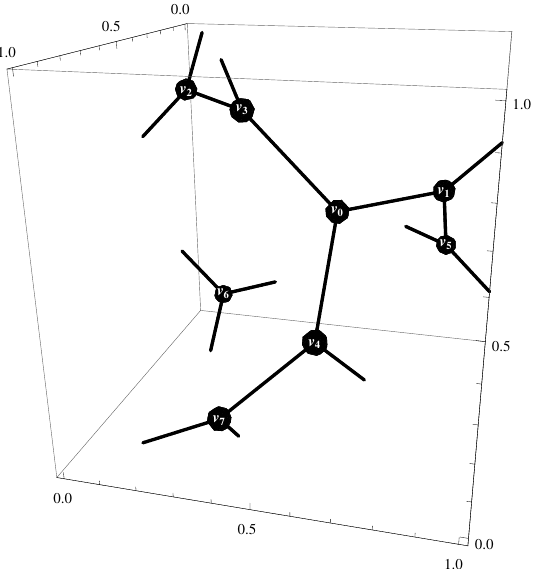}
\caption{\label{gammalabelsfig} The graph $\Gamma_+$}
\end{figure}

In constructing the surface, one can actually start
with one of these graphs and evolve the surface from it \cite{KGB}.
The symmetry of one skeletal graph is that of a single Gyroid $I4_132$.

To obtain the double Gyroid one can evolve from both the graphs.
The common symmetry group of both of these graphs is $Ia\bar3d$.
This symmetry group can actually be defined already on the nodes $V_{\pm}$ 
of the  two graphs $\G_{\pm}$. The subgroup $I4_132$ is 
then determined to be the  subgroup that fixes both sets of nodes (setwise).

In the case of the double Gyroid $S=S_1 \amalg S_2$, 
the complement $C=\R^3\setminus S$ of the 
disjoint union of the two Gyroid surfaces 
has three connected components. These are two channel systems $C_+$ and
$C_-$, each of which
can be retracted to its  skeletal graph $\Gamma_{\pm}$. 
In fact, the skeletal graph is a deformation retract \cite{KGB}.
There is a third connected
component $F$ and we will call $\bar F=F\cup S$.
This is a 3-manifold with two boundary components,
more precisely,  $\partial \bar F=S=S_1\amalg S_2$.
$F$ can be thought of as a ``thickened'' (fat) surface. The thickness
is fixed by the parameter $w$. Note that $F$ can
be retracted to each of the two boundary surfaces $S_i$.
In fact there is a deformation retract of $F$ onto the Gyroid.

This means that all homotopical information about the complements of the double
Gyroid is encoded in the Gyroid/level surface and the two skeletal graphs.

The translational symmetry group for both the Gyroid and the double
Gyroid is the Bravais lattice bcc.
Note that we will usually use ``lattice'' in the physical terminology, e.g.\
speak about the honeycomb lattice. The term ``Bravais lattice'' will be used
to denote a maximal rank mathematical lattice, i.e.\ a free rank $n$
Abelian subgroup of $\R^n$. In order to preempt any confusion, we give
precise definitions for our terminology in \S\ref{latdefsec}.

\subsection{The skeletal graph}

\subsubsection{The vertices and edges}
We will now describe the graph $\Gamma_+$ embedded into $\R^3$.

 Set
\begin{eqnarray}
v_0=(\frac{5}{8},\frac{5}{8},\frac{5}{8})&&v_4=(\frac{7}{8},\frac{5}{8},\frac{3}{8}) \nn\\
v_1=(\frac{3}{8},\frac{7}{8},\frac{5}{8})&&v_5=(\frac{1}{8},\frac{7}{8},\frac{3}{8})\nn\\
v_2=(\frac{3}{8},\frac{1}{8},\frac{7}{8})&&v_6=(\frac{1}{8},\frac{1}{8},\frac{1}{8}) \nn\\
v_3=(\frac{5}{8},\frac{3}{8},\frac{7}{8})&&v_7=(\frac{7}{8},\frac{3}{8},\frac{1}{8})
\end{eqnarray}
and let $\bar V_+=\{v_0,\dots, v_7\}$.

Furthermore $\Z^3$ acts on $\R^3$ by translations and we let
$V_{+}=\Z^3(\bar V_+)$ be the image of the set $\bar V_+$ under
this action. We will sometimes call this set of points the Gyroid lattice.

Given two points $v,w\in \R^3$ let $\overline{vw}=\{(1-t)v+tw|t\in
[0,1]\}$ be the line segment joining them. Also given a point $v$ we
let $T_x(v)=v+(1,0,0), T_y(v)=v+(0,1,0), T_z(v)=v+(0,0,1)$ be the
translated points.

 Consider the following
set of line segments

\begin{eqnarray}
\label{edgelist}  \bar E&=&\{\overline{v_0v_1}, \overline{v_0v_3},
\overline{v_0v_4},
\overline{v_2v_3},\overline{v_4v_7},\overline{v_1v_5},\nn\\
&&\overline{v_4\, T_x(v_5)},
\overline{v_7\, T_{x}(v_6)},\overline{v_1\, T_y(v_2)},\nn\\
&&\overline{v_5\, T_y(v_6)},\overline{v_2\,
T_z(v_6)},\overline{v_3T_z(v_7)}\}
\end{eqnarray}

Let $E_{+}=\Z^3(\bar E)$, where again $\Z^3\subset \R^3$ acts
as a subgroup of the  translation group.

\begin{df}
The  graph $\Gamma_+$ is the graph whose vertices are $V_{+}$,
whose edges are $E_{+}$ with the obvious incidence relations.
\end{df}

We recall:
\begin{prop}\cite{KGB}
$C_+$ can be deformation retracted onto the graph $\Gamma_+$ and
$\Gamma_+$ is the component of the critical graph contained in $C_+$.
\end{prop}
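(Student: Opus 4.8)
The plan is to prove both assertions at once through Morse theory applied to the defining function of the level-surface model. Write $f(x,y,z)=\sin x\cos y+\sin y\cos z+\sin z\cos x$ for the function appearing in the level surfaces $L_t$, rescaling coordinates so that one period is the unit cell and the nodes $v_0,\dots,v_7$ sit at the stated fractional positions. In this model the channel is the superlevel component $C_+=\{f>w\}$, with boundary the Gyroid surface $S_1=L_w$, and the constraint $0\le w<\sqrt2$ will turn out to be exactly what keeps the network from pinching. Since $f$ is $\Z^3$--periodic, the entire argument can be carried out on the quotient torus, and the space group $I4_132$ of $\Gamma_+$ (which acts transitively, up to translation, on the vertices and on the edges) reduces every verification to a single representative vertex and a single representative edge.

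First I would classify the critical points of $f$. A direct computation shows that each $v_i$ is a nondegenerate critical point at which $f$ attains the value $\tfrac32$ and whose Hessian is negative definite, so the $v_i$ are local maxima; and that the midpoint of each listed segment $\overline{v_iv_j}$ is a nondegenerate critical point with $f=\sqrt2$ whose Hessian has signature $(+,-,-)$, i.e.\ an index--$2$ saddle. Moreover the single positive Hessian eigendirection at such a saddle is parallel to $v_j-v_i$, and the straight line through the edge is invariant under a two--fold symmetry of the space group; hence the gradient of $f$ is tangent to that line along its whole length, and the one--dimensional unstable manifold of the saddle is precisely the segment $\overline{v_iv_j}$, joining the two maxima $v_i$ and $v_j$. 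Carrying out the full enumeration one finds that the only critical values of $f$ are $\pm\tfrac32$ and $\pm\sqrt2$, the negative ones being realized by the minima and index--$1$ saddles that lie in $C_-$ and in the slab $F$. Consequently the critical set of $f$ inside $C_+$ is exactly the union of the maxima $V_+$ with the unstable arcs, i.e.\ the graph $\Gamma_+$; this already identifies $\Gamma_+$ as the component of the critical graph contained in $C_+$.

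For the deformation retraction I would run the upward gradient flow of $f$ on the closed channel $\bar C_+=\{f\ge w\}$. Because $w<\sqrt2$, every interior critical point of $f$ in $\bar C_+$ is either an index--$3$ maximum (value $\tfrac32$) or an index--$2$ saddle (value $\sqrt2$); there are no critical points of index $0$ or $1$, since those occur only at the values $-\tfrac32$ and $-\sqrt2<w$. The resulting handle decomposition of $\bar C_+$, built by lowering the level from the top, therefore uses only $0$--handles (at the maxima) and $1$--handles (at the saddles), so $\bar C_+$ is homotopy equivalent to a one--complex. Normalizing the flow near the boundary $S_1$ so that it stays tangent there, and collapsing each handle onto its core, yields an explicit deformation retraction of $\bar C_+$ onto $V_+\cup\{\text{unstable arcs}\}=\Gamma_+$. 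At $w=\sqrt2$ the saddles would land on the boundary and the tubes would pinch, which is exactly why the construction requires $w<\sqrt2$.

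The main obstacle is the global control of the gradient flow rather than its local model. One must show that the list of critical points above is complete on the torus, that no ascending trajectory leaves $\bar C_+$ or fails to limit onto a maximum, and that the unstable manifolds are genuinely the full straight segments and do not bend before reaching the nodes; the symmetry invariance of each edge line is what makes the last point work, but verifying completeness of the critical set and the absence of extraneous low--value critical points inside $\{f>w\}$ is the computational heart of the argument. Finally, passing from the level-surface model to the genuine CMC Gyroid requires an additional isotopy argument, but since the statement concerns only the deformation-retract type of $C_+$ it is stable under the small perturbation relating the two surfaces.
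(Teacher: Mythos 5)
The first thing to say is that the paper does not prove this proposition at all: it is prefaced by ``We recall'' and attributed to \cite{KGB}, so there is no in-paper argument to compare yours against. What you have written is therefore a reconstruction of the external result, and your Morse-theoretic strategy is the natural one --- indeed the phrase ``critical graph'' in the statement only makes sense in exactly the framework you set up. Your local computations are correct: in rescaled coordinates each $v_i$ is a nondegenerate maximum with $f=\tfrac32$ (at $v_0$ the Hessian has diagonal entries $-1$ and off-diagonal entries $-\tfrac12$, hence eigenvalues $-2,-\tfrac12,-\tfrac12$); the midpoint of $\overline{v_0v_1}$ is a critical point with $f=\sqrt2$ and Hessian eigenvalues $1-\tfrac{\sqrt2}{2},\,-1-\tfrac{\sqrt2}{2},\,-\sqrt2$, so index $2$ with unstable direction $(1,-1,0)\parallel v_1-v_0$; and along the parametrized segment one checks directly that $\nabla f$ is proportional to $(1,-1,0)$ and that $f$ restricted to the segment has no interior critical point other than the midpoint, so the unstable arc really is the full edge.

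The genuine gap is the one you flag yourself: completeness of the critical set, without which the claim that $\bar C_+$ contains no critical points of index $0$ or $1$ --- the whole point of the handle argument --- is unsupported. This gap is closable by elementary means, and you should close it rather than defer it: the critical equations are $\cos x\cos y=\sin x\sin z$ and its two cyclic images; if all sines and cosines are nonzero, eliminating variables gives $\sin^2 z=\cos^2 x$, $\sin^2 x=\cos^2 y$, $\sin^2 y=\cos^2 z$ with all three signs forced equal, whence $\cos^2x+\cos^2y+\cos^2z=\tfrac32$ and $f=\pm\tfrac32$; if some $\sin x=0$ then $\cos y=0$ is forced and the third equation yields $f=\tfrac{\sin z}{\cos x}(\sin^2y+\cos^2x)=\pm\sqrt2$, and the remaining degenerate subcases reduce to this one cyclically. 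So the critical values are exactly $\pm\tfrac32,\pm\sqrt2$ and your index count goes through. Two smaller points of care remain: passing from ``only $0$- and $1$-handles'' to an actual deformation retraction onto the specific subset $\Gamma_+$ requires the standard but nontrivial fact that the union of unstable manifolds of the critical points above level $w$ is a deformation retract of the superlevel set (not merely a homotopy model for it); and the final isotopy from the nodal model to the CMC Gyroid is asserted rather than argued, though the paper itself explicitly permits working with the level-surface approximation, so this last step can legitimately be waived here.
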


\begin{cor}
The homotopy type of the complement $T=\R^3\setminus G$ is the same
as that of two copies of $\Gamma_+$. In particular each channel has
the same homotopy type as $\Gamma_+$.
\end{cor}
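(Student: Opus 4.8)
The plan is to build the statement from three facts: that the complement $T=\R^3\setminus G$ splits into its two channels, that each channel is homotopy equivalent to its skeletal graph by the preceding Proposition, and that the two skeletal graphs have the same homotopy type. Since the homotopy type of a disjoint union is computed componentwise, assembling these yields the claim immediately.

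First I would record that, as established in the discussion of the single Gyroid, $T$ has exactly two connected components, the left- and right-handed channels, so that $T=C_+\amalg C_-$. The preceding Proposition deformation-retracts $C_+$ onto $\Gamma_+$, whence $C_+\simeq\Gamma_+$; applying the identical argument to the complementary channel gives $C_-\simeq\Gamma_-$. It then remains only to compare $\Gamma_+$ with $\Gamma_-$.

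For this comparison I would use that $\Gamma_+$ and $\Gamma_-$ are the two critical graphs of $G$, left- and right-handed embeddings of one and the same abstract trivalent periodic graph. Concretely, for the model surface $G=L_0$ the central inversion $(x,y,z)\mapsto(-x,-y,-z)$ satisfies $f(-x,-y,-z)=-f(x,y,z)$, so it is an orientation-reversing isometry that preserves $G$ and interchanges the two sides $\{f>0\}$ and $\{f<0\}$, hence the two channels and their critical graphs. Thus $\Gamma_-$ is the image of $\Gamma_+$ under a homeomorphism of $\R^3$, so $\Gamma_+\cong\Gamma_-$ and in particular the two graphs are homotopy equivalent. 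Combining, $T\simeq\Gamma_+\amalg\Gamma_-\simeq\Gamma_+\amalg\Gamma_+$, so $T$ has the homotopy type of two copies of $\Gamma_+$, and each individual channel has the homotopy type of $\Gamma_+$.

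I do not anticipate a genuine obstacle, since all the topological substance already resides in the cited Proposition. The single point deserving a line of justification is the identification $\Gamma_+\cong\Gamma_-$, that is, that the second channel retracts onto a graph of the same type as the first. This is guaranteed either by the orientation-reversing symmetry interchanging the two channels or, even more cheaply, by the observation that only the homotopy type of these graphs enters the statement: a graph is homotopy equivalent to a wedge of circles indexed by its first Betti number, and $\Gamma_+$ and $\Gamma_-$ have identical combinatorics and hence the same Betti number.
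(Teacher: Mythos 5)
Your argument is correct and is essentially the one the paper intends: the corollary is stated without proof as an immediate consequence of the preceding Proposition applied to each of the two components $C_\pm$ of $T$. Your explicit justification that $\Gamma_-\simeq\Gamma_+$ (via the inversion symmetry swapping the channels, or via equal Betti numbers) merely fills in a detail the paper leaves implicit.
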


This implies that all topological   invariants $C_{\pm}$ which are
homotopy invariant  are isomorphic to those of $\G_+$. In particular, this means
all homology, homotopy and K-groups of the topological space $C_+$ or $C_-$ are
determined on $\Gamma_+$.

\subsection{The quotient graphs}
Quotienting out by either the standard translation group or the bcc lattice,
we obtain the following two quotient graphs.

\subsubsection{Crystallographic Quotient Graph}
\label{cubepar}
 Let $\cube$ be the graph $\Gamma_+/\Z^3$ thought of as an abstract
 graph. This graph has a natural map embedding into the 3-torus
 $\R^3/\Z^3$.

\begin{prop}
$\cube$ is a cube embedded into the 3-torus. More precisely the
vertices of $\cube$ are $\bar V$ and the edges are the images
of $\bar E$ in $\R^3/\Z^3$.
\end{prop}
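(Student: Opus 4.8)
The plan is to compute the quotient $\gp/\Z^3$ explicitly as an abstract graph, identify it with the $1$--skeleton of the $3$--cube, and then promote this abstract identification to an embedding into $\T^3=\R^3/\Z^3$ using the $\Z^3$--invariance of the already embedded graph $\gp\subset\R^3$. First I would check that $\bar V_+=\{v_0,\dots,v_7\}$ is a complete and irredundant set of orbit representatives for the $\Z^3$--action on $V_+$. Since every coordinate of every $v_i$ lies in $\{\tfrac18,\tfrac38,\tfrac58,\tfrac78\}\subset(0,1)$, the eight points all lie in the open fundamental cube $(0,1)^3$ and are pairwise distinct; hence no two of them differ by an element of $\Z^3$, and they descend to $8$ distinct vertices $\bar V_+$ of $\cube$. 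The same bookkeeping applies to edges: each segment in $\bar E$ has one endpoint among the $v_i$ and the other of the form $v_j$ or $T_x(v_j),T_y(v_j),T_z(v_j)$, so modulo $\Z^3$ (using $T_x(v)\equiv T_y(v)\equiv T_z(v)\equiv v$) every segment in $\bar E$ becomes an edge joining two of the vertices $[v_i]$. Listing the resulting twelve pairs and observing that they are pairwise distinct shows that $\bar E$ maps bijectively onto the $12$ edges of $\cube$ and that $\cube$ is simple.

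Next I would read off the adjacency relation. The twelve pairs give each vertex exactly three neighbours, as expected since $\gp$ is trivalent and $\Z^3$ acts freely on $V_+$; moreover the two--colouring $\{v_0,v_2,v_5,v_7\}$ versus $\{v_1,v_3,v_4,v_6\}$ exhibits the graph as bipartite and connected. Any connected $3$--regular bipartite graph on eight vertices is forced to be $K_{4,4}$ minus a perfect matching, which is exactly the cube graph $Q_3$; alternatively, and more concretely, I would exhibit the explicit isomorphism $v_0\mapsto 000,\ v_1\mapsto 001,\ v_2\mapsto 011,\ v_3\mapsto 010,\ v_4\mapsto 100,\ v_5\mapsto 101,\ v_7\mapsto 110,\ v_6\mapsto 111$ onto $\{0,1\}^3$ and verify that the twelve edges are precisely the single--bit flips. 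Either route shows $\cube$ is the $1$--skeleton of a cube as an abstract graph.

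Finally I would establish that the natural map $\cube=\gp/\Z^3\to\T^3$ is an embedding. By the recalled Proposition $\gp$ is embedded in $\R^3$ (it is the component of the critical graph contained in $C_+$), so distinct edges meet only at shared vertices, and $\gp$ is $\Z^3$--invariant by construction. Consequently, if two points $p,q\in\gp$ satisfy $p-q\in\Z^3$ then $q$ is a $\Z^3$--translate of $p$ and hence lies in the same orbit, so the induced map $\gp/\Z^3\to\T^3$ is injective; a continuous injection out of the compact space $\gp/\Z^3$ is a homeomorphism onto its image, which gives the embedding. I expect the only genuine subtlety to be this last point: one should resist verifying directly that the twelve geodesic segments do not cross on the torus (a messy case analysis) and instead derive embeddedness cleanly from the embeddedness and $\Z^3$--invariance of $\gp\subset\R^3$. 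By contrast, the combinatorial recognition of $Q_3$ is routine once the edge list has been reduced modulo $\Z^3$.
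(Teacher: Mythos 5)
Your proof is correct and follows the same route as the paper's: the paper's entire argument is your first observation, namely that the eight vertices lie in the open fundamental cube and hence are exact orbit representatives for the $\Z^3$--action, with the twelve edges descending accordingly. The additional material you supply --- the combinatorial recognition of the cube graph via $K_{4,4}$ minus a perfect matching (or the explicit bit--flip isomorphism) and the derivation of embeddedness in $\R^3/\Z^3$ from the $\Z^3$--invariance of the embedded graph $\gp$ --- does not appear in the paper but correctly fills in details that the statement asserts and the paper's two--sentence proof leaves implicit.
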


The abstract graph is given in Figure \ref{qubefig} which
also contains the  images of the vectors $e_i$ which name and orient all edges.

\begin{figure}
\includegraphics[width=0.6\textwidth]{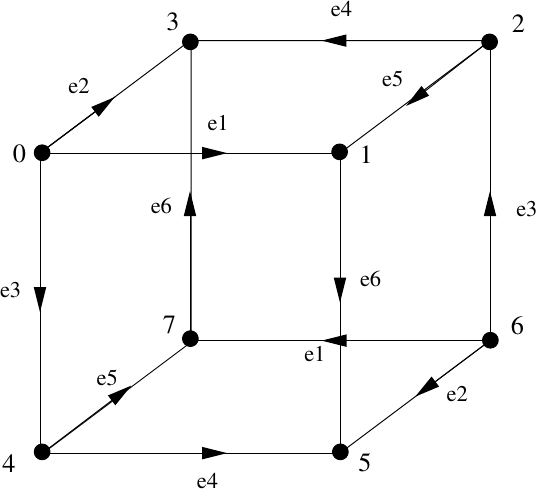}
\caption{\label{qubefig} The graph $\cube$ for the Gyroid}
\end{figure}

\begin{proof}
Since all the vertices in the fundamental domain do not lie on the
boundary, they give exactly the representatives of $V$ in
$\R^3/\Z^3$. The classes of the edges $E_{\Gamma_+}$ are given exactly
by the images of the set $\bar E$ in $\R^3/\Z^3$.
\end{proof}

\subsubsection{The (maximal) Quotient Graph}

By modding out by $\Z^3$, we have not yet used the full translational
symmetry of $\gp$ which is the bcc lattice. 
A set of generators for the bcc lattice  is 
\begin{equation}
\label{bccveceq}
f_1:=(1,0,0), \quad f_2=(0,1,0), \quad  f_3:=\frac{1}{2}(1,1,1)
\end{equation}
another set of generators which is more symmetric and we will use later on is:
\begin{equation}
\label{bccveceq2}
g_1=\frac{1}{2}(1,-1,1),\quad g_2=\frac{1}{2}(-1,1,1), \quad g_3=\frac{1}{2}(1,1,-1)\end{equation}

 We let $L=L(\gp)$ be the  free Abelian subgroup
of $\R^3$ generated by these vectors. 
We define   $\bar\Gamma_+$ to be the abstract
 quotient graph $\Gamma_+/L$.

\begin{prop}
$\bar\Gamma_+$ is the graph with 4 vertices and 6 edges, where 
all pairs of distinct vertices are connected by exactly one edge. 
\end{prop}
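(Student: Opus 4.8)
The plan is to realize $\bgp$ as a further quotient of the crystallographic graph $\cube$ by the residual translation symmetry, and then to read off the combinatorics of that quotient directly from the explicit vertex and edge lists already fixed above.

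First I would record that the standard lattice sits inside the bcc lattice, $\Z^3\subset T(\gp)$, with index two: since $2f_3=(1,1,1)$ while $f_1,f_2$ already lie in $\Z^3$, the quotient $T(\gp)/\Z^3$ is cyclic of order two, generated by the class of $f_3=\frac{1}{2}(1,1,1)$. Consequently
\begin{equation}
\bgp=\gp/T(\gp)=\cube/\langle f_3\rangle,
\end{equation}
where $f_3$ acts on $\cube$ by $v\mapsto v+f_3$ reduced modulo $\Z^3$. This is a graph automorphism of $\cube$ because $f_3\in T(\gp)$ is a symmetry of $\gp$ and $\Z^3$ is normal in the abelian group $T(\gp)$.

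Next I would compute the induced involution on the eight vertices $v_0,\dots,v_7$ of $\cube$. Adding $(\frac{1}{2},\frac{1}{2},\frac{1}{2})$ and reducing modulo $\Z^3$ yields the pairing
\begin{equation}
v_0\leftrightarrow v_6,\quad v_1\leftrightarrow v_7,\quad v_2\leftrightarrow v_4,\quad v_3\leftrightarrow v_5 .
\end{equation}
In particular no vertex is fixed, so the action is a free involution with exactly four orbits; these four orbits are the vertices of $\bgp$. Finally I would treat the twelve edges: reducing the segments of $\bar E$ modulo $\Z^3$ identifies the edges of $\cube$ with the pairs $v_0v_1$, $v_0v_3$, $v_0v_4$, $v_2v_3$, $v_4v_7$, $v_1v_5$, $v_4v_5$, $v_6v_7$, $v_1v_2$, $v_5v_6$, $v_2v_6$, $v_3v_7$, each vertex having degree three. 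I would then check that none of these edges joins two vertices lying in a single $f_3$-orbit, so that $f_3$ acts freely on edges as well and the quotient has $12/2=6$ edges. Writing the orbit classes as $A=\{v_0,v_6\}$, $B=\{v_1,v_7\}$, $C=\{v_2,v_4\}$, $D=\{v_3,v_5\}$, the six edge-orbits are seen to realize the six pairs $AB$, $AC$, $AD$, $BC$, $BD$, $CD$, each exactly once. Hence $\bgp$ has four vertices and six edges, with every pair of distinct vertices joined by a single edge, i.e.\ it is the complete graph on four vertices.

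The only real subtlety, rather than a genuine obstacle, is the bookkeeping in the edge reduction: translating a segment of $\bar E$ by $f_3$ and bringing it back into the fundamental domain may require translating both of its endpoints by the \emph{same} nonzero element of $\Z^3$, so one must track the whole segment rather than reducing endpoints independently. Because $f_3$ is already known to be a symmetry of $\gp$, this is guaranteed to carry edges to edges, and the endpoint pairing above determines the matching; the remainder is a finite, mechanical verification.
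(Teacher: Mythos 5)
Your proposal is correct and follows essentially the same route as the paper: both quotient the crystallographic graph $\cube$ by the residual order-two translation $\frac{1}{2}(1,1,1)$, obtain the vertex pairing $v_0\leftrightarrow v_6$, $v_1\leftrightarrow v_7$, $v_2\leftrightarrow v_4$, $v_3\leftrightarrow v_5$, and check against the edge list (\ref{edgelist}) that the twelve edges fall into six orbits realizing each pair of vertex classes once. Your version is merely a little more explicit about the freeness of the involution on edges and the final tally, which the paper leaves as a check of the list.
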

This graph is sometimes also called the complete square. Its incidence matrix
has entries one everywhere except on the diagonal, where the entries are zero.
This graph is shown in Figure \ref{square}.

\begin{figure}
\includegraphics[width=\textwidth]{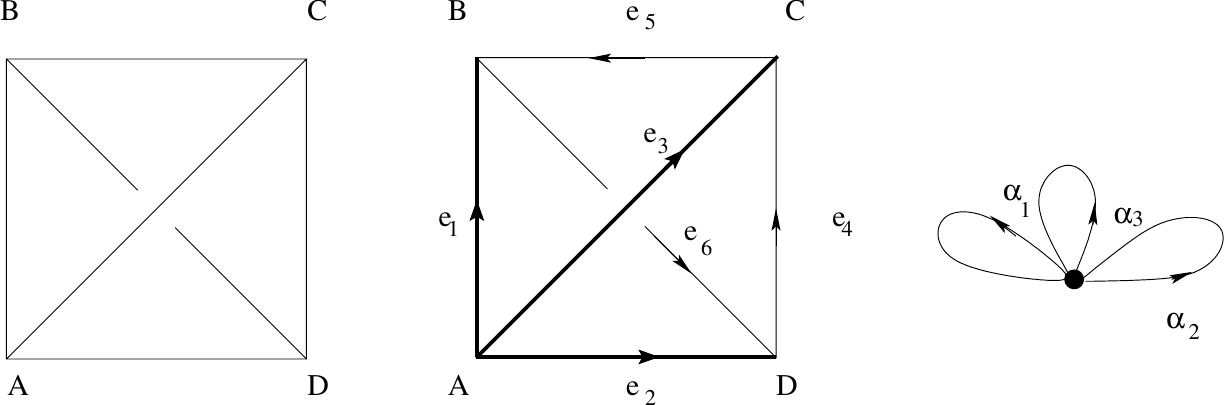}
\caption{\label{square} The complete square, the rooted spanning tree $\tau$ (root $A$ and edges $e_1,e_2,e_3$) and the vectors corresponding
to the oriented edges, the collapsed tree $\bar\Gamma_+/\tau$}
\end{figure}

\begin{proof}
We see that there is an embedding of $\Z^3\subset L$, 
where $L$ is the bcc lattice, so that we only have
to divide $\cube$ by the additional symmetry generated by
 $d:=\frac{1}{2}(1,1,1)$.  Now mod $\Z^3$, $T_d^2\cong id$ 
and $T_d$ (the translation by $d$) simply interchanges the vertices of $\cube$ as follows:
$v_0\leftrightarrow v_6, v_1\leftrightarrow v_7, v_2\leftrightarrow v_4$
and $v_3 \leftrightarrow v_5$. 
Hence we are left with 4 vertices, and we can choose the 
representatives $v_0,\dots, v_3$. Checking the list of edges 
(\ref{edgelist}) one sees 
that indeed the 12 edges form pairs and one can choose
the representatives $\overline {v_iv_j};i\neq j: i,j=0,\dots,3$. 
\end{proof}

Notice that this corresponds to a $\Z/2\Z$ symmetry of the graph
$\cube$. It is given by mapping each vertex to its diagonally
opposite vertex
and maps the
edges accordingly.

\subsection{The Underlying Group and Lattice}
There is another Bravais lattice hidden in the geometry of the Gyroid. This is the
fcc lattice. The nearest neighbor positions differ by vectors generating an fcc lattice.

This means in particular that  after shifting by $v_0$ the {\em positions} of the vertices 
of $\gp$ all lie on an fcc lattice.

In order to fix notation set:
\begin{eqnarray}
\label{gyvectorseq}
e_1=v_1-v_0 & e_2=v_3-v_0& e_3=v_4-v_0\\
e_4=v_3-v_2&e_5=v_7-v_4 &e_6=v_5-v_1
\end{eqnarray}

\begin{df}
Let $T(\gp)$ be the group of $\R^3$ that is generated by
\begin{eqnarray}
e_1=\frac{1}{4} \left(\begin{matrix}-1\\1\\0\end{matrix}\right),&
e_2=\frac{1}{4} \left(\begin{matrix}0\\-1\\1\end{matrix}\right),&
e_3=\frac{1}{4} \left(\begin{matrix}1\\0\\-1\end{matrix}\right)\\
e_4=\frac{1}{4} \left(\begin{matrix}1\\1\\0\end{matrix}\right),&
e_5=\frac{1}{4} \left(\begin{matrix}0\\-1\\-1\end{matrix}\right),&
e_6=\frac{1}{4} \left(\begin{matrix}-1\\0\\-1\end{matrix}\right)\\
\end{eqnarray}
\end{df}

\begin{prop}
The group $T(\gp)$ is isomorphic to $\Z^3$. The Bravais
lattice it generates is a face centered cubic (fcc).
\end{prop}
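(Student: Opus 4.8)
The plan is to prove the two assertions separately: first that $T(\gp)$ is a free abelian group of rank three (hence isomorphic to $\Z^3$), and then that the lattice it spans in $\R^3$ is of fcc type.

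First I would establish the rank. The group $T(\gp)$ is defined as the subgroup of $(\R^3,+)$ generated by the six vectors $e_1,\dots,e_6$. To see it is isomorphic to $\Z^3$ it suffices to exhibit three of these vectors that are linearly independent over $\R$ and then show that the remaining three lie in the $\Z$-span of the chosen three. The natural choice is $e_1,e_2,e_3$, which come directly from the nearest-neighbour differences $v_1-v_0$, $v_3-v_0$, $v_4-v_0$. I would compute the determinant of the matrix with columns $e_1,e_2,e_3$; since each is $\tfrac14$ times an integer vector, the determinant is $(\tfrac14)^3$ times the determinant of an integer matrix, and one checks this integer determinant is nonzero, giving linear independence. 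Then, reading off the defining relations $e_4=e_3-e_2$, $e_5=e_7-e_4$, $e_6=e_6-e_1$ from \eqref{gyvectorseq} (which express the ``second row'' vectors as integer combinations of the first), together with the explicit coordinate vectors, I would verify directly that $e_4,e_5,e_6$ are integer linear combinations of $e_1,e_2,e_3$. A finitely generated subgroup of $\R^3$ that is torsion-free (every subgroup of $\R^3$ is) and has rank three is isomorphic to $\Z^3$, which settles the first claim.

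Next I would identify the Bravais type. Having shown $T(\gp)$ is generated by $e_1,e_2,e_3$, I need to recognise the lattice $L=\Z e_1\oplus \Z e_2\oplus \Z e_3$ as face-centred cubic. Clearing the common factor $\tfrac14$, it is equivalent to analyse the integer lattice $L'=4L$ spanned by $(-1,1,0)$, $(0,-1,1)$, $(1,0,-1)$. The standard fcc lattice is the set of integer vectors with even coordinate sum, equivalently the $\Z$-span of $(1,1,0)$, $(0,1,1)$, $(1,0,1)$. I would show $L'$ coincides with this fcc lattice: every generator of $L'$ has even coordinate sum, so $L'$ is contained in the fcc lattice, and a comparison of covolumes (or an explicit expression of the fcc generators as integer combinations of the generators of $L'$) shows the two coincide. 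Since scaling by $\tfrac14$ and the change of sign conventions preserve the Bravais class, $L$ is fcc.

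The main obstacle I expect is purely bookkeeping: the relations in \eqref{gyvectorseq} are stated using symbols $e_7$ and $e_6$ on the right-hand side that are not among the defining six generators, so the expressions $e_5=e_7-e_4$ and $e_6=e_6-e_1$ must be interpreted correctly against the explicit coordinate data in the Definition before they can be used, and I would reconcile any such notational slip by working directly from the coordinate vectors rather than the symbolic relations. Once the generators are pinned down numerically, both the rank computation and the identification of the lattice type reduce to short determinant and coset-sum checks, so the remainder is routine.
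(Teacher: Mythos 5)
There is a genuine gap: your chosen basis is degenerate. From the explicit coordinates, $e_1=\tfrac14(-1,1,0)$, $e_2=\tfrac14(0,-1,1)$, $e_3=\tfrac14(1,0,-1)$, so $e_1+e_2+e_3=0$. The integer determinant you propose to check is therefore $0$, not nonzero: the three edge vectors emanating from $v_0$ span only a rank--two sublattice lying in the plane $x+y+z=0$ (exactly as for the honeycomb lattice, where $e_2=-e_1-e_3$ by definition). Consequently the second half of your first step also fails: $e_4=\tfrac14(1,1,0)$ has coordinate sum $\tfrac12\neq 0$, so it is not any linear combination of $e_1,e_2,e_3$, let alone an integer one. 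The covolume comparison in your second step collapses for the same reason, since the lattice spanned by $(-1,1,0)$, $(0,-1,1)$, $(1,0,-1)$ is rank two and has no covolume in $\R^3$ to compare with that of the fcc lattice.

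The repair is essentially what the paper does, and it runs in the opposite direction from your plan: instead of expressing $e_4,e_5,e_6$ through $e_1,e_2,e_3$, one uses the relations $e_1=-e_5+e_6$, $e_2=-e_4-e_6$, $e_3=e_4+e_5$ (which you can verify directly from the coordinates, sidestepping the typographical slips in the displayed relations you rightly flagged) to conclude that $e_4,e_5,e_6$ already generate $T(\gp)$. These three are linearly independent over $\R$ (determinant $2\cdot(\tfrac14)^3\neq0$), hence form a free basis, giving $T(\gp)\cong\Z^3$. The identification of the Bravais type then becomes immediate rather than requiring a sublattice-index argument: $e_4=\tfrac14(1,1,0)$, $-e_5=\tfrac14(0,1,1)$, $-e_6=\tfrac14(1,0,1)$ are, up to the overall scale $\tfrac14$, precisely the standard primitive vectors of the face-centred cubic lattice.
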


\begin{proof}
There are relations among the $e_i$ given as follows.
\begin{equation}
\label{relationeq} e_1=-e_5+e_6, \quad e_2=-e_4-e_6, \quad
e_3=e_4+e_5
\end{equation} so that we see
that $e_3,e_4,e_5$ generate. These vectors are linearly independent
over $\Z$ since they are independent over $\R$. Hence they provide a
free basis and an isomorphism to $\Z^3$. The vectors $e_4,-e_5,-e_6$
are the standard primitive vectors for the face centered cubic.
\end{proof}
Of course there are many other choices of basis here, e.g.\
$\{e_2,e_4,e_6\}$.

\begin{prop}
\label{fccprop}
The vertices of $\Gamma_+$ translated by $-v_6$ lie on the 3-dimensional face centered
cubic lattice generated by $e_4,e_5,e_6$.
\end{prop}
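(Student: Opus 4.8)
The plan is to reduce the claim about the infinite vertex set of $\gp$ to a pair of finite linear-algebra checks over $\Z$. The vertex set is the orbit $V_+=\Z^3(\bar V_+)=\Z^3+\bar V_+$ of the eight fundamental vertices $\bar V_+=\{v_0,\dots,v_7\}$, so $V_+-v_6=\Z^3+(\bar V_+-v_6)$. Since $G=T(\gp)$ is a subgroup of $\R^3$, it suffices to prove the two containments $\Z^3\subseteq G$ and $\bar V_+-v_6\subseteq G$; together they give $V_+-v_6\subseteq G+G=G$ in one line.

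For the first containment I would solve the relevant integer systems directly. Writing a lattice vector as $a e_4+b e_5+c e_6=\tfrac14(a-c,\,a-b,\,-b-c)$ and matching coordinates shows, for instance, $(1,0,0)=2e_4+2e_5-2e_6$, and the analogous computations place $(0,1,0)$ and $(0,0,1)$ in $G$ as well, so $\Z^3\subseteq G$. Comparing covolumes, $|\det(e_4,e_5,e_6)|=\tfrac1{32}$ against $1$ for $\Z^3$, shows this is an index-$32$ sublattice, consistent with $G$ being an fcc lattice at half the spacing of the cubic one.

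For the second containment there are two routes. The direct route solves the same $3\times3$ system for each of the eight differences $v_i-v_6$; every solution is integral, e.g. $v_0-v_6=e_4-e_5-e_6$ and $v_4-v_6=2e_4-e_6$. The more structural route uses that $\gp$ is connected, being a deformation retract of the connected channel $C_+$: any vertex is joined to $v_6$ by an edge path, and since translating an edge leaves its edge vector unchanged, it is enough to check that the twelve edge vectors coming from $\bar E$ lie in $G$. These are $e_1,\dots,e_6$ together with a few further combinations such as $e_5-e_6$ and $e_4+e_6$; all are integer combinations of $e_4,e_5,e_6$ once the relations \eqref{relationeq} are used to rewrite $e_1,e_2,e_3$. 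The difference $v-v_6$ is then a $\Z$-combination of edge vectors, hence in $G$.

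Neither step hides a genuine obstacle; the computation is routine linear algebra. The only points needing care are that $G$ is strictly larger than $\Z^3$, so the containment $\Z^3\subseteq G$ must be verified rather than assumed, and that it is precisely the connectivity of $\gp$ that propagates the finite eight-vertex (or twelve-edge) computation to the full infinite vertex set.
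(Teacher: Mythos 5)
Your proposal is correct, and in fact contains the paper's own argument as your ``structural route'': the paper proves the statement exactly by invoking path--connectedness of $\gp$ and observing that the vectors of the (oriented) edges in $\bar E$, hence of all of $E_{\gp}$, are precisely $\pm e_1,\dots,\pm e_6$, which lie in $G$ by the relations (\ref{relationeq}). (A small imprecision on your part: the twelve edge vectors are not $e_1,\dots,e_6$ ``together with a few further combinations'' --- the combinations you list, $e_5-e_6$ and $e_4+e_6$, are just $-e_1$ and $-e_2$; the full list is exactly $\pm e_1,\dots,\pm e_6$, which only makes your argument cleaner.) Your primary route is genuinely different and equally valid: it replaces connectivity by the orbit decomposition $V_+-v_6=\Z^3+(\bar V_+-v_6)$ and reduces everything to two finite integral linear--algebra checks, $\Z^3\subseteq G$ and $\bar V_+-v_6\subseteq G$. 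The sample computations you give ($(1,0,0)=2e_4+2e_5-2e_6$, $v_0-v_6=e_4-e_5-e_6$, $v_4-v_6=2e_4-e_6$, $|\det(e_4,e_5,e_6)|=1/32$) all check out; a convenient shortcut for the remaining cases is the description $G=\{\tfrac14(x,y,z): x,y,z\in\Z,\ x+y+z\ \mathrm{even}\}$. What the paper's route buys is brevity and independence from explicit coordinates (only the edge vectors matter); what your computational route buys is that it makes the containment $\Z^3\subseteq G$ and the index ($32$) explicit, and it does not rely on connectedness of $\gp$, which the paper imports from the retraction of the channel.
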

\begin{proof}
 $\gp$ is path connected
and  we take $v_6$ as the base point. Each line segments in $E$
corresponds to an edge. Choosing an orientation for this edge
defines a vector. Now, the statement follows from the fact that  the
vectors corresponding to the line segments in $E$ and hence those of
$E_{\gp}$ are exactly the vectors $\pm e_1,\dots,\pm e_6$.
\end{proof}

\subsection{Fundamental Groups, Loops and Effective Normal Vectors}
There is certain geometric information which can even already
be read off from the simple graph  $\bgp$. Such as its fundamental group or
the minimal loops starting at a given vertex.
 A loop on $\gp$      is
minimal
if it passes through a minimal number of edges. A more general treatment  will be given in \S\ref{latticesec}.

\begin{prop}
Let $\F_3$ be the free group in three variables.
The (realization) of the graphs $\gp$ and $\bgp$ have following fundamental groups:
\begin{enumerate} 
\item $\pi_1(\gp)=[\F_3,\F_3]$
\item $\pi_1(\bgp)=\F_3$
\end{enumerate}
in particular $\gp$ is the maximal Abelian cover of $\bgp$.

Since $\gp$ is homotopic to one channel these results hold for each channel of the Gyroid.
\end{prop}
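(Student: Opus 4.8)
The plan is to read the statement through the covering $p:\gp\to\bgp$ induced by the translation action, and to identify its classifying homomorphism with the abelianization map $\F_3\to\F_3^{ab}$. Part (2) is the warm-up and is purely combinatorial: $\bgp$ is the connected complete graph on four vertices, with $6$ edges, and a connected graph is homotopy equivalent to a wedge of circles, one for each edge lying outside a chosen spanning tree. Hence its fundamental group is free of rank equal to the first Betti number $b_1=E-V+1=6-4+1=3$. Taking the rooted spanning tree $\tau$ of Figure~\ref{square} (root $A$, tree edges $e_1,e_2,e_3$), the three complementary edges furnish an explicit free basis, so $\pi_1(\bgp)\cong\F_3$.

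For the rest I would set up the covering carefully. By the construction $\bgp=\gp/T(\gp)$, the lattice $T(\gp)\cong\Z^3$ (proven above) acts on $\gp$ by translations; this action is free and properly discontinuous, and $\gp$ is connected (as used in the proof of Proposition~\ref{fccprop}), so $p$ is a regular (Galois) covering whose deck group is exactly $T(\gp)\cong\Z^3$. The Galois correspondence then identifies $\pi_1(\gp)$ with the kernel $H$ of the classifying surjection $\phi:\pi_1(\bgp)=\F_3\twoheadrightarrow T(\gp)\cong\Z^3$, which sends the class of a loop $\gamma$ to the unique translation carrying the start of a lift of $\gamma$ to its end. Concretely, lifting $\gamma$ and summing the edge displacement vectors it traverses shows that $\phi([\gamma])$ is this total translation; since it depends only on the signed number of traversals of each edge, $\phi$ factors through the abelianization of $\F_3$.

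To close (1), observe that $T(\gp)$ is abelian, so $[\F_3,\F_3]\subseteq H$ and $\phi$ descends to a surjection $\bar\phi:\F_3^{ab}\cong\Z^3\twoheadrightarrow\Z^3$. A surjective endomorphism of $\Z^3$ is injective (the group is Hopfian), so $\bar\phi$ is an isomorphism; therefore $H/[\F_3,\F_3]=0$, i.e. $\pi_1(\gp)=H=[\F_3,\F_3]$. By definition this is the cover of $\bgp$ associated with the commutator subgroup, so $\gp$ is the maximal Abelian cover, with deck group $\F_3^{ab}\cong\Z^3$. The final assertion is then immediate from the Corollary that each channel is homotopy equivalent to $\gp$, so $\pi_1$ of each channel is again $[\F_3,\F_3]$.

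The one step that deserves care is the surjectivity of $\phi$ onto the \emph{full} lattice $\Z^3$ rather than a proper finite-index sublattice, since this is exactly what pins $\gp$ down as the maximal --- not merely some --- Abelian cover. Abstractly this is automatic, because the classifying map of any connected regular covering surjects onto its deck group; if one prefers an explicit check, it reduces to verifying that the three fundamental loops based at a vertex wind in three $\Z$-independent directions of $T(\gp)$, which can be read off directly from the edge list. Everything else is formal covering-space theory together with the Hopfian property of $\Z^3$.
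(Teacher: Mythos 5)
Your argument is correct, but it reaches part (1) by a genuinely different route than the paper. The paper's proof goes through Lemma \ref{looplem} and the general Proposition \ref{fundprop}: one explicitly lifts the three basis loops $l_1=e_2e_6^{-1}e_1^{-1}$, $l_2=e_1e_5^{-1}e_3^{-1}$, $l_3=e_3e_4e_2^{-1}$, computes their total displacement vectors $\vec{l}_i=f_i$, and checks by hand that these are linearly independent; Proposition \ref{fundprop} then identifies $\pi_1(\gp)$ with the kernel of the abelianization map. You instead invoke abstract covering-space theory: the translation lattice acts freely and properly discontinuously on the connected graph $\gp$, so the deck group of $\gp\to\bgp$ is exactly $\Z^3$ and the classifying map $\F_3\twoheadrightarrow\Z^3$ is automatically surjective; since the target is abelian the map factors through $\F_3^{ab}\cong\Z^3$, and a surjective endomorphism of $\Z^3$ is injective (Hopfian property, or determinant $\pm1$), forcing the kernel to be exactly $[\F_3,\F_3]$. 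Your route buys economy --- it replaces the explicit linear-independence check by the rank count $3=3$ plus a soft algebraic fact, and it makes transparent exactly where the argument would fail (e.g.\ for the triangular lattice, where the number of basis loops exceeds the rank of the deck group, so surjectivity no longer forces injectivity). What the paper's route buys is the concrete lifting criterion of Lemma \ref{looplem} --- a loop on $\bgp$ lifts iff it traverses each edge equally often in both directions --- which is not a by-product of your soft argument but is needed later for the enumeration of minimal loops; your closing remark correctly identifies that the explicit check of $\Z$-independence of the three winding directions is the bridge between the two proofs. Both arguments rest on the same verified inputs (connectedness of $\gp$ from Proposition \ref{fccprop}, the combinatorics of $\bgp$ as the complete graph on four vertices), so the proposal is sound as written.
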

\begin{proof}
We start with $\bgp$. This graph is homotopic to the wedge product of three $S^1$'s, whence
the second claim follows. 

In view of Lemma \ref{looplem} below (more generally by Proposition \ref{fundprop}),
 we see that  $\pi_1(\gp)$ is the subgroup  that 
is the kernel of the map $\F_3$ to its Abelianization. Indeed the sum of powers
of each of the generators in a word in the group has to be zero, which precisely means
that such a word is in the kernel of the Abelianization map or in other words, the commutator group.
\end{proof}
\begin{lem}
\label{looplem}
A loop on the graph $\bgp$  lifts to a loop on $\gp$ if and
only if each edge is traversed the same number of times in each
direction.
\end{lem}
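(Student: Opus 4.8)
The plan is to read the lifting criterion off from covering space theory and then translate it into the stated combinatorial condition. The map $\gp\to\bgp$ is a connected regular (normal) covering whose deck transformation group is the translation lattice $T(\gp)$ by which $\bgp$ is formed, a free abelian group of rank three. Because the covering is regular, whether a based loop lifts to a loop does not depend on the chosen starting vertex and is detected by the monodromy homomorphism $h\colon\pi_1(\bgp)\to T(\gp)\cong\Z^3$: a loop $\gamma$ lifts to a loop exactly when $h([\gamma])=0$. So the whole statement reduces to describing $\ker h$ combinatorially.

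First I would identify $\ker h$. Since $\gp$ is connected, $h$ is surjective, and since $\pi_1(\bgp)=\F_3$ (shown above) the map $h$ factors as $h=\bar h\circ\mathrm{ab}$, where $\mathrm{ab}\colon\F_3\to\Z^3$ is abelianization and $\bar h\colon\Z^3\to\Z^3$ is surjective. A surjective endomorphism of $\Z^3$ is an isomorphism, so $\ker h=\ker(\mathrm{ab})=[\F_3,\F_3]$. Concretely, fixing the rooted spanning tree $\tau=\{e_1,e_2,e_3\}$ of Figure \ref{square}, the group $\F_3$ is free on the classes of the three non-tree edges, and the exponent sum of the $j$-th generator in the word representing $\gamma$ equals the net (signed) number of times $\gamma$ traverses that non-tree edge. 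Hence $h([\gamma])=0$ iff each of the three non-tree edges is traversed the same number of times in each direction.

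It then remains to promote this from the three non-tree edges to all six edges. For a closed walk the vector of net edge-traversals is a $1$-cycle, i.e. lies in the cycle space $Z_1(\bgp;\Z)$, which has rank $6-4+1=3$ with the three fundamental cycles of $\tau$ as a basis. Writing the $1$-cycle in that basis, its coefficients are precisely the net traversals of the non-tree edges; if these vanish the whole $1$-cycle vanishes, so the net traversal of every edge, tree edges included, is zero. The reverse implication is immediate. Combining this with the previous paragraph yields the equivalence: $\gamma$ lifts to a loop iff every edge is traversed the same number of times in each direction.

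The crux, and the only place the particular geometry enters, is the claim that $h$ is surjective onto a rank-three lattice, equivalently that the holonomies of the three fundamental cycles are linearly independent. These holonomies are exactly the total displacement vectors $\sum_i n_i e_i$ accumulated along the cycles (the edge vectors $e_i$ together with the lattice translations hidden in the edge list (\ref{edgelist})), and their independence is what the face-centred-cubic structure of the vertex set established in Proposition \ref{fccprop} guarantees. I expect the bookkeeping of these translations and of the edge orientations to be the main technical obstacle; once independence is in hand, the Hopfian argument makes the ``only if'' direction clean, while the ``if'' direction is formal.
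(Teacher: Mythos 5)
Your proof is correct, but it secures the key point by a genuinely different mechanism than the paper. The paper's proof invokes the general Proposition \ref{fundprop} and then verifies its hypothesis by explicitly lifting the three generating loops $l_1,l_2,l_3$ and checking that the resulting translation vectors are the bcc generators of (\ref{bccveceq2}), hence linearly independent; that independence is exactly what makes the induced map $\Z^3\to L$ injective. You instead obtain injectivity for free from the Hopfian property of $\Z^3$: connectedness of $\gp$ makes the monodromy $h\colon\F_3\to L\cong\Z^3$ surjective, so $\bar h\colon\Z^3\to\Z^3$ is a surjective endomorphism and therefore an isomorphism. This trades the paper's explicit holonomy computation for the soft inputs that $\gp$ is connected and that $\operatorname{rank}\pi_1(\bgp)$ equals the rank of the deck group (both are $3$); it is arguably cleaner here, though it would not detect failure in cases like the triangular lattice, where the ranks disagree and one must actually look at the vectors. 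Your closing paragraph undersells your own argument: the ``bookkeeping'' of edge translations you flag as the main obstacle is precisely what the Hopfian step lets you skip, and ``$h$ surjective'' is strictly stronger than ``holonomies linearly independent'' (the latter only gives a finite-index image), although either condition kills $\ker\bar h$ and hence suffices. Your cycle-space argument promoting the vanishing of net traversals from the three non-tree edges to all six is a tidier version of the paper's one-line remark at the end of the proof of Proposition \ref{fundprop}. One notational caveat: the deck group of $\gp\to\bgp$ is the bcc symmetry lattice (called $T(\gp)$ only in the paragraph defining $\bgp$), not the fcc lattice $T(\gp)$ of Proposition \ref{fccprop}, but nothing in your argument depends on which rank-three lattice it is.
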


\begin{proof}
 The ``if'' direction is clear since this means that the translations 
have to add up to zero. The equivalence
 follows from the general Proposition \ref{fundprop}, by noticing that indeed 
the lifts $l_1=e_2 e_6^{-1}e_1^{-1}, l_2=e_1e_5^{-1}e_3^{-1}, 
l_3=e_3e_4e_2^{-1}$ give rise to the vectors $\vec{l}_i=f_i$ of (\ref{bccveceq2}) 
and are  linearly independent.
\end{proof}

\begin{prop}
There are closed loops in the graph $\gp$.
 Each minimal loop goes through 10 sites
and at each point there are 30 oriented minimal loops or 15 such undirected loops. 
\end{prop}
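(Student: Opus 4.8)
The plan is to move the whole problem onto the finite graph $\bgp$ --- the complete graph $K_4$ on four vertices --- and the group $\F_3=\pi_1(\bgp)$, using the dictionary supplied by Lemma~\ref{looplem} and the preceding Proposition. A based loop of $\gp$ is the same thing as a reduced closed edge--walk in $\bgp$ that lifts, and by Lemma~\ref{looplem} such a walk lifts precisely when it is \emph{balanced}: every edge of $\bgp$ is traversed equally often in the two directions, equivalently the corresponding word lies in $[\F_3,\F_3]=\pi_1(\gp)$. I measure the length of a loop by the number of edges it traverses, which for a reduced closed walk also equals the number of sites it visits. Existence is then immediate: with the lifts $l_1=e_2e_6^{-1}e_1^{-1}$ and $l_2=e_1e_5^{-1}e_3^{-1}$ of Lemma~\ref{looplem}, the commutator $[l_1,l_2]$ is a nontrivial element of $[\F_3,\F_3]$ and hence a genuine closed loop of $\gp$.

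For the upper bound I would simply reduce this commutator. Each $l_i$ has edge--length $3$, and in $l_1l_2l_1^{-1}l_2^{-1}$ the junction $e_1^{-1}e_1$ inside $l_1l_2$ cancels while no further cancellation occurs, so the reduced word
\[
[l_1,l_2]=e_2e_6^{-1}e_5^{-1}e_3^{-1}\,e_1e_6e_2^{-1}e_3e_5e_1^{-1}
\]
has edge--length $10$ and uses five of the six edges, each once in each direction. The lower bound is the conceptual heart. A nontrivial minimal loop is a reduced nonempty closed walk, so on the subgraph $H\subseteq K_4$ formed by the edges it actually uses it represents a nontrivial element of the free group $\pi_1(H)$; being balanced, it is at the same time trivial in $H_1(H)$. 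Hence $\pi_1(H)$ is nonabelian, i.e.\ $b_1(H)\ge 2$. Since a connected subgraph of $K_4$ satisfies $b_1=E-V+1$ and any subgraph on at most three vertices has at most three edges, $b_1(H)\ge 2$ forces $H$ to meet all four vertices and to contain at least five edges. Balance then makes each of these edges occur at least twice, so the loop has at least $10$ sites, and the commutator above shows this is attained; in particular every minimal loop meets all four vertices, so the count is the same at each point.

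It remains to count. Every minimal loop omits exactly one of the six edges and traverses the other five once in each direction, so I would enumerate, for each choice of omitted edge, the reduced balanced based loops on $K_4$ minus that edge (a graph with $b_1=2$, i.e.\ $\pi_1=\F_2$), which are the minimal commutators of the two remaining triangle classes. Collecting these over all six choices and grouping them by their internal cyclic symmetry should reproduce the two families announced in the Introduction --- one carrying an order--three symmetry and one an order--two symmetry, each with three generators under that symmetry --- for a total of $15$ undirected loops, and $30$ once orientations are distinguished. The main obstacle is precisely the completeness and bookkeeping of this final enumeration: one must check that each deleted--edge subgraph contributes exactly the expected minimal loops and that the symmetry group organizes them into the stated orbits without overcounting. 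I would carry this out using the explicit $Ia\bar3d$ and $I4_132$ symmetries already visible on $\bgp$, which permute the six edges and hence the length--$10$ walks, rather than by unstructured brute force.
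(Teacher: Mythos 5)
Your reduction to $\bgp$ and your lower bound are correct, and for the minimality claim your route is genuinely cleaner than the paper's. The paper shows that a minimal lifting loop traverses five of the six edges once in each direction by an exhaustive case analysis on $K_4$ (with a separate discussion ruling out edges traversed more than once per direction); your observation that a reduced balanced closed walk is nontrivial in $\pi_1$ of its support $H$ yet trivial in $H_1(H)$, forcing $b_1(H)\ge 2$, hence $V(H)=4$, $E(H)\ge 5$ and length $\ge 10$, obtains all of this in one stroke. Your reduced commutator $[l_1,l_2]$ is a valid length--$10$ realization (it is precisely the inverse of the $e_4$--omitting loop in Table \ref{looptable}), so existence and the ``10 sites'' claim are fully proved.

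The gap is the count of $30$ (equivalently $15$), which is part of the statement and which you explicitly defer. After your argument what remains is a finite, well-posed enumeration: for each of the six choices of omitted edge, count the backtrack-free closed walks based at the root that traverse each of the remaining five edges exactly once in each direction. You do not carry this out, and the heuristic you propose --- collecting ``the minimal commutators of the two remaining triangle classes'' symmetrically over the six omitted edges --- needs care because the per-edge counts are \emph{not} uniform: an omitted edge incident to the basepoint contributes $4$ directed ($2$ undirected) loops, while an omitted edge not incident to the basepoint contributes $6$ directed ($3$ undirected) loops, giving $3\cdot 4+3\cdot 6=30$ and $3\cdot 2+3\cdot 3=15$. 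This asymmetry is exactly the order--two versus order--three block structure you allude to, but it means a uniform ``$6$ edges times $5$ loops each'' tally organized by omitted edge would be wrong, and the enumeration cannot be closed by symmetry alone; the paper instead counts $6$ choices of the first two oriented edges times $2+3$ completions. Until that finite check is done, the $30/15$ claim is asserted rather than proved.
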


\begin{proof}

A path in which one goes back and forth through an edge is
homotopic to the path in which this step is omitted. 
By direct calculation one can see that such a path is given by
traversing any five of the six edges. At the first step one has $3$
choices of edges, at the second step there are two and then again two possibilities.
Here one either returns to the original vertex or not. In the first
case there are 2 completions and in the second case 3 completions
to a minimal loop. Thus we have $2\cdot 3\cdot 5=30$ possibilities.
A detailed version of this enumeration is provided at the end of this section
in \S\ref{loopcalcsec}.
\end{proof}

\subsubsection{Explicit Calculation of the loops}
\label{loopcalcsec}

Here we give the details of the calculation of the loops. 
Using Proposition \ref{fundprop}, we have to look for paths that traverse each edge
the same number of times.
  We first look at the cases 
where each edge is traversed only once in each direction. 
We call such a path a good path.
We also have to keep the number of these edges minimal.
This already puts a simple constraint on the path. We may not go back and forth
 through the same edge. Starting at the given vertex we have to pass 
through 2 distinct edges. After this we have a choice, we can either go back
to the starting vertex (case I) or we can go to the only vertex which
we have not reached yet (case II). In case I the next oriented edge
is fixed, but then we have two choices Ia and Ib. After this we have
already used 5 edges so that the minimal possible 
 number of oriented edges and hence
the length of the loop is 10.  Indeed, we  can complete the edge 
path uniquely to a good path  without
traversing the 6th edge.

In case II the fifth oriented edge for a good path is fixed, going opposite
the first oriented edge.
There is a choice for the sixth oriented edge
in the path. Either returning to $v_o$ (IIb) 
or not (IIa). The case IIb has  a unique choice for a fifths oriented
edge for a good path and this has a unique 
completion to a good path involving 5 edges, again giving a length of 10 loops.
In case IIa, we again have two choices for the fifth oriented edge (IIa1) and
(IIa2). Both these choices have a unique completions to good paths again of length 10.

It remains to treat the cases 
where an edge is traversed more that once in each direction. 
Since we are not allowed to go back and forth on one edge
the choices for the first three oriented edges are as above. Now in Case I
we could go along the first oriented edge again, but this would lead 
us to traverse at least 6 edges and hence would not be minimal. At the next 
step of case I the choices are precisely Ia or Ib and traversing an edge twice in the same direction would not be minimal.
For case II, the first stage where one could use an oriented edge twice
for IIa is the fifth edge. But then one would need at 
least 6 edges counting multiplicities. After fixing the fifth oriented
edge one can only increase the number of traversed oriented edges by not choosing a good path. Finally in case IIb again the first edge with a choice to traverse an edge twice in one orientation is at the fifth oriented edge, but as before
this would lead to a path of length greater than 10.

So all in all we have $2\cdot 3=6$ choices for the first two edges and then
once these are fixed 5 choices for a good path. 
This means in all there are 30 such oriented paths.

\subsubsection{Explicit loops}
On the graph $\bgp$ there is a symmetry group of order 6 preserving
the base point and the spanning tree $\tau$  which permutes the vertices B,C and D.
This is precisely the group that gives us the 6 first choices.

In order to write down a shorter list, we will make the following observations.
Since the inverse of a minimal path is a minimal path, 
we can cut down the number to 15. Now since each of the paths traverses
five edges, it misses one. There are two cases: (1) the edge that is missed 
is incident to $v_0$, i.e.\ $e_1,e_2$ or $e_3$ and case (2) 
it is not, i.e.\ $e_3,e_4,e_6$. Case (1) corresponds to IIa1 and IIa2,
which case (2) corresponds to Ia, Ib and IIb.
In the case (1) the vertex $v_0$ is traversed 1 additional time except at
the start and end of the path and in case 
(2) it is traversed an additional 2 times.
This decomposes the loops into two pieces of length 5 in case (1)
 or three pieces of length 3,3,4; 3,4,3; 4,3,3 
in case (2) for Ia,Ib,IIb respectively. 
With each loop, its cyclic permutation of these components is also a loop.
There are 2 such loops in case (1) and 3 such loops in case (2). 
This also explains the 15 loops as $15=3\cdot 2 +3\cdot 3$ and permutes
the cases IIa1, IIa2; and Ia,IIb and Ib cyclically.

\begin{figure}
\includegraphics[width=.8\textwidth]{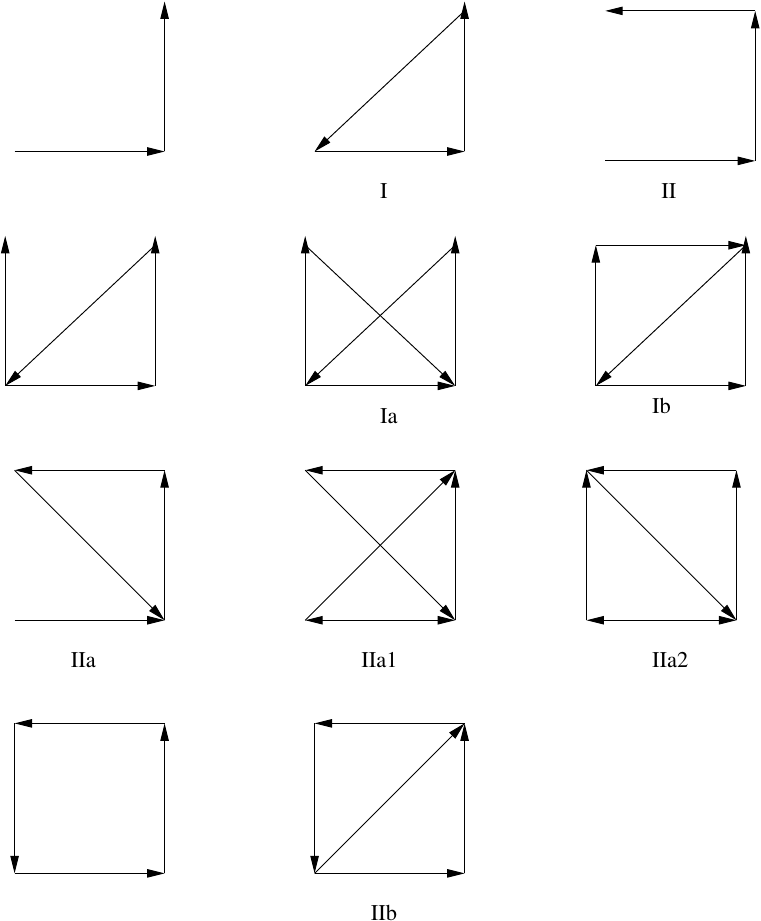}
\caption{\label{pathfig}
The combinatorial cases for the minimal loops of the Gyroid}
\end{figure}

We give here a list of six basic loops which may be useful for 
further discussion. To obtain all 30 one should cyclically permute the blocks
and take full inverses of the loops. We give the loops as an edge path as
well as their decomposition into the basis loops $l_1,l_2,l_3$.

\begin{table}
\begin{tabular}{llll}
loop as an edge path in $e_i${\it s} &in basis $l_i$& $N_{eff}$&edge missed\\
\hline
$e_2(-e_4)e_5e_6(-e_2)\;\; e_3e_4(-e_6)(-e_5)(-e_3)$&$(l_1l_2l_3)^{-1}(l_1l_2l_3)$&
$(1,-1,0)$&$e_1$ missing\\
$e_1(-e_5)e_4(-e_6)(-e_1)\;\; e_3e_5e_6(-e_4)(-e_3)$&$(l_2l_3l_1)(l_3l_1l_2)^{-1}$&
$(0,1,-1)$&$e_2$ missing\\
$e_1(-e_5)e_4(-e_6)(-e_1)\;\;e_2(-e_4)e_5e_6(-e_2)$&$(l_2l_3l_1)(l_1l_2l_3)^{-1}$&
$(1,0,-1)$&$e_3$ missing\\
$e_1(-e_5)(-e_3)\;\;e_2(-e_6)(-e_1)\;\;e_3e_5e_6(-e_2)$&$[l_2,l_1]$&
$(1,1,0)$&$e_4$ missing\\
$e_1e_6(-e_2)\;\;e_3e_4(-e_6)(-e_1)\;\;e_2(-e_4)(-e_3)$&$[l_1^{-1},l_3]$&
$(0,-1,-1)$&$e_5$ missing\\
$e_1(-e_5)e_4(-e_2)\;\;e_3e_5(-e_1)\;\;e_2(-e_4)(-e_3)$&$[l_2,l_3]$&
$(-1,0,-1)$&$e_6$ missing\\
\end{tabular}
\caption{\label{looptable}A generating set the shortest loops and their effective normal
vectors}
\end{table}

\begin{rmk}
It is interesting to note that the explicit isomorphism 
$\F_3 \to \pi_1(\bgp)$ given on the generators $\alpha_i$ of $\F_3$
by $\alpha_i\mapsto l_i$ induces a new length on $\F_3$ as the minimal
length in the letters $e_i$. It is this length that remembers the structure
of the Gyroid. The computation above is a good example of this. 
The three examples of commutators are minimal 
elements of $[\F_3,\F_3]$ which are
not the identity in the length in $\alpha_i$ or $l_i$. 
They are of length 4. 
In the $e_i$ their length is 10. The other examples are of 
length 6 in the $\alpha_i$, so not minimal in this metric, but they are 
minimal in the $e_i$ being of length 10.
\end{rmk}

\subsubsection{Effective normal vector}
One distinguishing feature of the different loops is their spatial orientation.
This for instance has an effect on the flux through a surface bounded by such a loop.

We assume a constant magnetic field. In this case if $S$ is a surface bounding 
a loop $L=e_{i_1},\dots, e_{i_{n}}$, let $f_j:= \sum_{k=1}^{j}e_{i_j}$. Then
by Stokes, we can just integrate over the surface given by the union of the triangles defined 
by $(f_j,f_{j+1}),j=1,\dots,n-1$. 
If $N_j=f_j\times f_{j+1}=f_j\times e_{j+1}$ and $N_{eff}:=\sum_{j=1}^{n-1}N_j$
then we have for the magnetic flux through $S$
\begin{equation}
\Phi =\iint_S B dS=\sum_j \frac{1}{2} B\cdot N_j=\frac{1}{2} B\cdot N_{eff}
\end{equation}

The values for $N_{eff}$ are listed for the basic loops. Notice that inversion of a loop inverts
the normal vector, while the cyclic permutation of the components leaves the effective
normal vector invariant.

\subsection{The quantum graph}
In order to promote the skeletal graph to a quantum graph, we will
fix a Hilbert space and a  Hamiltonian. Here we follow the terminology that a quantum graph
is graph with an associated Hilbert space and a Hamiltonian on it. In this paper we will use
the generalized Harper Hamiltonian \cite{belissard,Sunada}.

The original Harper Hamiltonian \cite{Harper} 
is obtained for a cubic lattice by using the 
tight-binding approximation and Peierls
 substitution for the quasi momentum \cite{PST}.

In the next section, we will give the general theoretical setup for this using graphs, groups and
representations.
For the Gyroid the notions such as graphs and groups 
have been introduced above, so that the reader may substitute
these in the general definitions below.

\section{Graph Representations and Matrix Harper
Operators}
\label{graphsec}

One idea in studying the non--commutative aspects of a given
system is to give a K--theoretic gap labeling for the Hamiltonians.
For this one considers an algebra $\mathscr B$ generated by the Hamiltonian
and the symmetries. If everything is commutative then this algebra
is basically  the $C^*$ algebra of functions on the Brillouin zone/torus.
We will make these ideas precise using physical and mathematical lattices as
defined below.

\subsection{Graph language}
For us a graph or an abstract graph $\Gamma$ will be a collection of vertices $V(\Gamma)$
and a collection of edges $E(\Gamma)$ which run between vertices
up to bijections preserving the incidences. Each edge can have
two orientations. An edge together with an orientation is called an
oriented edge. Each oriented edge $\vec{e}$ has a starting vertex $s(\vec{e})$
and a target vertex $t(\vec{e})$. A graph $\Gamma$
 is called finite if both $V(\Gamma)$ and  $E(\Gamma)$ are finite sets.

A graph naturally becomes a topological space if the edges are replaced by intervals. This space
is called the realization of the graph. In more technical terms the data above gives a one--dimensional CW complex and
we take the realization of this complex. When we talk about topological properties
of a graph, like its fundamental group, we always mean the topological properties of its realization.

A graph is connected if its realization is connected. This means that one can travel to all vertices from any given vertex along
the edges.
A tree is a connected graph whose realization is contractible (i.e.\ the graph has no loops). A choice of root of
a tree is simply a choice of a vertex and a rooted tree is a tree together with a choice of a root.
Given a graph $\G$ a subgraph $\tau$ is {\em called a spanning tree} if it is a tree and the vertices
of $\tau$ are all of the vertices of $\G$. To have a spanning tree $\Gamma$ needs to be connected.
In this case there are usually several choices of spanning trees. A rooted spanning tree is a spanning tree together with
the additional choice of a root.

\begin{prop}
\label{genloopprop}
Let $\bar \Gamma$ be a finite graph and $\tau$ be a rooted spanning tree. Let $v_0$ be the root then 
$\pi_1(\bar \Gamma):=\pi_1(\bar\Gamma,v_0)=\F_n$ where  $\F_n$ is the free group in $n$ variables and 
$n=|E(\bar \G)|-|E(\tau)|$.
\end{prop}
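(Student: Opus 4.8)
The plan is to show that $\pi_1(\bar\Gamma, v_0)$ is free on the set of edges not in the spanning tree, giving exactly $n = |E(\bar\Gamma)| - |E(\tau)|$ generators. First I would observe that since $\tau$ is a spanning tree, its realization is contractible, so the quotient map collapsing $\tau$ to a point is a homotopy equivalence $\bar\Gamma \to \bar\Gamma/\tau$. This is the standard fact that collapsing a contractible subcomplex of a CW complex does not change the homotopy type. Thus $\pi_1(\bar\Gamma, v_0) \cong \pi_1(\bar\Gamma/\tau)$, and it suffices to identify the latter.

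Next I would identify $\bar\Gamma/\tau$ explicitly. Collapsing all of $\tau$ sends every vertex to a single point, and each edge of $\tau$ becomes a loop at that point based on the now-trivial; more precisely, each edge \emph{not} in $\tau$ has both endpoints identified with the single remaining vertex, so it becomes a circle $S^1$, while each edge \emph{in} $\tau$ is collapsed entirely. Hence $\bar\Gamma/\tau$ is a wedge of $n$ circles, where $n = |E(\bar\Gamma)| - |E(\tau)|$ is the number of edges outside the tree. By the Seifert--van Kampen theorem (or the well-known computation of the fundamental group of a bouquet of circles), $\pi_1\bigl(\bigvee_{i=1}^n S^1\bigr) = \F_n$, the free group on $n$ generators.

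Combining the two steps yields $\pi_1(\bar\Gamma, v_0) \cong \F_n$ with $n = |E(\bar\Gamma)| - |E(\tau)|$, which is the claim. Concretely, a free generating set is given by the loops $\gamma_e$ indexed by the edges $e \notin \tau$: for each such $e$, one travels from the root $v_0$ to the starting vertex $s(\vec e)$ through the unique path in $\tau$, crosses $e$, and returns from $t(\vec e)$ to $v_0$ through the unique tree path. That these loops freely generate is exactly the content of the wedge-of-circles identification. I would also remark that finiteness of $\bar\Gamma$ guarantees $n$ is finite and that connectedness (needed for $\tau$ to exist as a spanning tree) is already built into the hypothesis.

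The only genuinely delicate point is justifying the homotopy equivalence $\bar\Gamma \simeq \bar\Gamma/\tau$ upon collapsing the contractible subcomplex $\tau$; for a general pair one needs the inclusion $\tau \hookrightarrow \bar\Gamma$ to be a cofibration, but this is automatic for CW pairs, and the realization of a graph is by construction a $1$-dimensional CW complex with $\tau$ a subcomplex. Everything else is a routine application of standard algebraic topology, so I expect no real obstacle beyond stating this collapsing lemma cleanly.
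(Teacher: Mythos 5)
Your proof is correct and follows essentially the same route as the paper: contract the spanning tree $\tau$ to obtain a wedge of $n$ circles and invoke the standard computation of its fundamental group. The only cosmetic difference is that you justify the homotopy equivalence $\bar\Gamma \simeq \bar\Gamma/\tau$ via the collapsing-a-contractible-CW-subcomplex lemma, whereas the paper argues via an embedding into the $n$-punctured plane; your justification is if anything the cleaner one.
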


\begin{proof}
Consider the graph 
$\bar\G/\tau$ obtained by contracting all edges of the subgraph $\tau$. This 
is the graph which  only has one vertex $v_0$ and
all the edges are loops. The two graphs $\bar\G$ and $\bar\G/\tau$ are homotopy equivalent and hence have the same fundamental group.
 The graph $\bar\G/\tau$  embeds into the plane with $n$ punctures with each loop going around one puncture. 
To obtain a compatible embedding of $\G$ one blows up the 
only vertex of $\bar\G/\tau$ into the tree $\tau$ and both graphs are homotopy equivalent to the punctured plane.
$\bar \G$ is  thus homotopy equivalent to the wedge product of $n$ circles  $S^1$. 
It is well known that the first homotopy group of this space is the free group in $n$ generators.
\end{proof}

An embedded graph is a graph $\G$ together with an embedding of it realization into an $\R^n$.
Some properties we discuss depend on such an embedding or are derived from it.

\begin{ex}
We have so far considered  $\cube$ and $\bgp$ as abstract graphs and we have considered
$\gp$ both as an abstract as well as an embedded graph. The properties of having a certain number of  loops at a given point are properties of the abstract graph, while the proof made use of the embedding. 
The  effective normal vectors are properties of the embedding of the skeletal graph $\gp$ into $\R^3$.
\end{ex}

\subsection{Graph Harper Operator}
\label{harpsec}
\begin{df}
A $C^*$ representation $\grep$ of a graph $\Gamma$ is given by
the following data.
\begin{itemize}
\item A collection of separable Hilbert spaces $\H_{v}$ one
for each $v\in V(\Gamma)$.
\item A collection of isometries $U_{\vec{e}}:\H_{s(\vec{e})}\to
\H_{t(\vec{e})}$ for each oriented edge 
$\vec{e}$ such that $U_{\vec{e}}U_{\vec{e'}}=1$ whenever 
$\vec{e}$ and $\vec{e'}$ are the two orientations of the same edge.
\end{itemize}
\end{df}

\begin{rmk}
This construction can also be stated in more categorical terms. It is
a certain quiver representation. A finite graph $\Gamma$ also generates
a groupoid (that is a category in which all morphisms are invertible)
and in this setting a representation is a functor from the groupoid to the category of separable
Hilbert spaces.
\end{rmk}

\begin{df}
Given a representation $\grep$ of a 
finite graph $\bar\Gamma$ we set $\H_{\bar\Gamma}:=\bigoplus_{v\in V(\bar\Gamma)}
 \H_{v}$ and we define the graph Harper operator $H$ on $\H_{\bar\Gamma}$ as
\begin{equation}
H:=\sum_{\mbox{oriented edges } \vec{e} } U_{\vec{e}}  
\end{equation}
where we have (ab)used the notation $U_{\vec{e}}$ to denote the
partial isometry on $\H_{\bar\Gamma}$ induced by the operator of the same name.
\end{df}

\subsection{Matrix actions}
\subsubsection{General Setup}
\label{gensetsec}

Fix some  finite index set $I$, a fixed index $o\in I$ and and order on $I$, 
such that $o$ is the smallest element. 
Fix isomorphic Hilbert spaces $\H_i, i\in I$ and 
let  $\phi_i:\H_o\to \H_i$ be fixed choice of isomorphisms. 
We allow a choice of $\phi_o$. 
It might be that $\phi_o=id$ but this is not necessary. 

Set $\H=\bigoplus_{i\in I} \H_i$.

In this situation, there is an action of $M_{|I|}(End(\H_o))$ on $\H$
given as follows. Let $A\in M_{|I|}(End(\H_o))$ be an endomorphism
valued matrix. Then the action of $A$ on $\H$ is given by

\begin{equation}
\H=\bigoplus_{i\in I}\H_i\stackrel{\oplus_i \phi_i^{-1}}{\longrightarrow}
\H_o^{\oplus |I|}\stackrel{A}{\rightarrow} \H_o^{\oplus |I|}
\stackrel{\oplus_i \phi_I}{\longrightarrow}\H
\end{equation}

Vice--versa given an endomorphism $H\in End(\H)$ there is
a corresponding matrix $M(H)\in M_{|I|}(\H_o)$ given by
\begin{equation}
\H_o^{\oplus |I|}\stackrel{\oplus_i \phi_i}{\longrightarrow}\H
\stackrel{H}{\rightarrow} \H
\stackrel{\oplus_i \phi_i^{-1}}{\longrightarrow}\H_o^{\oplus |I|}
\end{equation}
That is if $\Phi=\bigoplus_i\phi_i$, $M(H)=\Phi^{-1}\circ H\circ \Phi$.

\subsubsection{Matrix action for the graph Harper operator}
In order to write the Harper operator of \S\ref{harpsec} as a matrix
according to \S\ref{gensetsec} we need to fix several choices. We will
assume that the graph $\Gamma$ is connected.
First there is a choice of base vertex $v_o$ and a choice of order on
all vertices in which the base vertex is the smallest.  Then for each vertex
$v$ we choose a fixed path $p_v$ ---that is a sequence of ordered edges
$\vec{e}_1,\dots,
\vec{e}_k$--- from $v_o$ to $v$. We then set 
$\phi_v:=\phi_{\vec{e}_k}\circ \cdots\circ \phi_{\vec{e}_1}:\H_{v_o}\to \H_v$.
If $v=v_o$ we also allow the choice $\phi_{v_o}:=id:\H_{v_o}\to \H_{v_o}$.
This corresponds to the empty path.

Given such a choice, we obtain the action 
$\rho_{p_1,\dots,p_n}$ as above. We call the resulting matrix {\em the
graph Harper operator in matrix form}.

 Given a rooted spanning tree $\tau$ of $\G$  choose $v_0$ to be
the root and then there is a unique shortest path on $\tau$ from $v_0$ to any vertex $v_i$ of $\tau$ which
we can view as a path on $\Gamma$. Thus we get the data needed. Again a convenient choice of paths is
given  by a spanning tree. We sometimes write $H_{\Gamma,\tau}$ for the corresponding
graph Harper operator in matrix form.

\begin{rmk}
Any two matrix Harper operators are conjugate. This follows simply by making a base change.
\end{rmk}

\section{$C^*$--geometry of Harper Hamiltonians on lattices }
\label{latticesec}
For lattices there are different definitions and connotations in the mathematics and
physics literature. We will use the adjectives ``mathematical'' and ``physical'' to distinguish
the two. As a reference in the physics literature for the definition of
lattices we use \cite{ashcroft}.

\subsection{Translation action}
The additive Abelian group
$\R^n$ acts on itself by translation. We denote $T_w(v)=v+w$. 
For a subset $S\in \R^n$ and a subgroup $L\subset \R^n$ we denote
by $L(S)$ the set of all translates of points in $S$ under
the action of all elements of $L$.

\subsection{Lattices: mathematical, physical and Bravais}
\label{maxsymsec}
\label{latdefsec}

\begin{df}
A {\em mathematical lattice with a basis} of rank $m$
in $\R^n$ is an injective group homomorphism  $\rho_L:\Z^m\hookrightarrow \R^n$. If $m=n$ then such a lattice is called
a {\em Bravais lattice with a basis}. 
\end{df}
If $\rho$ is a mathematical lattice with a basis,
we let $L:=\rho_L(\Z^m)$ be its image then
this is a discrete subgroup of $\R^n$ which is isomorphic to $\Z^n$.
 We will refer to $L$  as a mathematical/Bravais lattice. 

A Bravais subgroup or sublattice is a subgroup $L'\subset L$ which 
is a Bravais lattice.

Any Bravais lattice $L$  
acts by translations on $\R^n$ and a {\em primitive cell} 
 is a fundamental domain for this action. In general a
{\em cell} is a fundamental domain for a Bravais subgroup $L'$ of $L$. 

\begin{ex}
The
integer points $\Z^n\subset \R^n$ are a Bravais lattice which we call
the {\em crystallographic lattice}. A crystallographic cell is
a primitive cell for this lattice. The unit cube is a primitive cell. The cube
of length $2$ is a cell that is not primitive.
\end{ex}

As a definition for a lattice in the physical sense, we will take the
following convention:
\begin{df} A subset $\Lambda\subset \R^n$ 
 is a lattice if there is a finite set $V\subset\R^n$
and a Bravais lattice $L$  such that $\Lambda=L(V)$.
\end{df}
The tuple $(L,V)$ is called a crystal structure
or a lattice with a basis\footnote{Usually one would include
labels such as atom labels to $V$.}.
 We will not use the latter terminology as
it is mathematically confusing.

Notice that given $\Lambda$ neither $L$ nor $V$ are uniquely defined. 
We can and will mostly choose $L$ to be maximal and $V$ to be minimal 
or primitive. In this case we call $(L,V)$ a primitive crystal structure.
Sometimes $L$ is called the space group.

Still then there is of course a choice in fundamental domain and hence
a choice in $V$, but for any two such choices there is a unique
bijection. In fact if we let $\bar V$ be a set of orbits of
the $L$ action; then it is unique and any $V$ is just a
 choice of representatives. It is also easy to say when
 $L$ is minimal. $L$ is minimal if and only if the number
of orbits $|\bar V|$ is minimal.

\subsection{Graphs and Lattices}
To develop a general theory, we will deal with two cases. 
The usual case is that given a lattice we obtain a graph by
adding edges to the nearest neighbors. But is is convenient 
to also {\em allow} that we {\em already
have an embedded graph $\G_{\L}$ whose vertices are the set $\L$}.
In this case, we will  let $\G_{\L}$ stand for this chosen graph.

\subsubsection{Canonical Graph of a Lattice}
Given a lattice $\Lambda$  {\em without a pre--chosen graph} 
define the graph $\Gamma_{\Lambda}$  
of $\Lambda$ to be the graph whose vertices are the elements of
 $\Lambda$ and whose edges are the line segments between nearest
neighbors.
This graph is naturally embedded in $\R^n$ and we will sometimes
make use of this. 

\subsubsection{Quotient graphs}
Given a choice 
of symmetry group $L$  for $\Lambda$, 
we also define the quotient graph $\bar\Gamma_{\Lambda}(L)$ as 
$\Gamma_{\Lambda}/L\subset \R^n/L$. This is the
abstract graph whose vertices are given by $\bar V$ and whose
edges are given by the set of orbits of edges of $\Gamma_{\Lambda}$.

If $L$ is a maximal group we will just write $\bar\Gamma_{\Lambda}$.

If the unit cell is a cell for $L$ corresponding to a subgroup $L'$, 
we also define the {\em crystallographic quotient graph} of $\Lambda$
to be the graph $\bar\Gamma_{\Lambda}(L')$.

\subsection{Group of a lattice}
\label{grplatsec}
A lattice $\Lambda$ also generates a discrete subgroup of $\R^n$ as
follows: using the natural embedding $\Gamma_{\Lambda}\subset \R^n$
we can think of  any directed edge $\vec{e}$ of $\Gamma_{\Lambda}$
of $\Gamma_{\Lambda}$ simply as a vector in $\R^n$.
Then   there is an associated
translation operator $T_{\vec{e}}$. Notice that as vectors
the translates of all the oriented edges correspond to one another. 
This means that the set of {\em vectors} $\vec{e}$ is
given by one vector in each orbit of oriented edges of the embedded graph $\G_{\L}$. This set is
in bijection with the oriented edges of $\bar \Gamma_{\Lambda}$ and we will not distinguish between them in the notation. 
That is we write $\vec{e}$ for both the oriented edge of the abstract graph $\bgl$ and the vector in $\R^n$ it is in bijection with.
We will enumerate the set of vectors $\{\vec{e}\}$ as
$\vec{e_i}$. These
vectors then give a symmetric
group of generators of an Abelian subgroup of $\R^n$
which we call {\em the lattice  group $T(\Lambda)$}  of $\Lambda$.
It is the group generated by the $\vec{e}_i$
This group are all the elements of the form

\begin{equation}
T(\Lambda):=\{\sum_i a_i \vec{e}_i| a_i\in \Z\}
\end{equation}

\begin{prop}
$T(\Lambda)$ is a mathematical lattice.
\end{prop}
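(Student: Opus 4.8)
The plan is to verify the defining property directly: exhibit an injective group homomorphism $\Z^m \hookrightarrow \R^n$ whose image is $T(\Lambda)$. Equivalently, it suffices to show that $T(\Lambda)$ is a free abelian group of finite rank sitting inside $\R^n$, since any choice of $\Z$--basis then yields the required map by composition with the inclusion $T(\Lambda)\hookrightarrow \R^n$.

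First I would check that $T(\Lambda)$ is finitely generated. By construction $T(\Lambda)$ is generated by the vectors $\vec{e}_i$, one for each orbit of oriented edges of $\Gamma_{\Lambda}$ under $L$, and these are in bijection with the oriented edges of the quotient graph $\bar\Gamma_{\Lambda}$. Because $\Lambda = L(V)$ is a physical lattice with $V$ finite and $L$ a Bravais lattice, the quotient $\bar\Gamma_{\Lambda}\subset \R^n/L$ has finitely many vertices and, being a nearest--neighbor (hence locally finite) graph, finitely many edges. Therefore the generating set $\{\vec{e}_i\}$ is finite.

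Next I would invoke the structure theorem for finitely generated abelian groups. As a subgroup of the torsion--free group $(\R^n,+)$, the group $T(\Lambda)$ is itself torsion--free; being finitely generated and torsion--free, it is free abelian of some finite rank $m$. Fixing a $\Z$--basis $b_1,\dots,b_m$ gives an isomorphism $\rho_L:\Z^m \to T(\Lambda)$, and regarding the $b_j$ as vectors in $\R^n$ turns $\rho_L$ into an injective homomorphism $\Z^m \hookrightarrow \R^n$ with image $T(\Lambda)$. This is exactly the data of a mathematical lattice with a basis, so $T(\Lambda)$ is a mathematical lattice.

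The only genuinely substantive point, and the one I would be most careful about, is the finiteness of the generating set, which rests on the finiteness of the quotient graph $\bar\Gamma_{\Lambda}$; everything after that is the standard free--abelian structure argument. I would also remark that in the geometrically relevant cases (the honeycomb and the Gyroid) the rank equals $n$ and the generators are commensurate with the underlying Bravais lattice, so that $T(\Lambda)$ is moreover discrete, matching the earlier computation that $T(\gp)\cong\Z^3$ realizes an fcc lattice; but discreteness is not needed for the bare statement, which asks only for an injective homomorphism out of $\Z^m$.
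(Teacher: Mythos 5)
Your proof is correct and follows essentially the same route as the paper: both arguments reduce to the finiteness of the set of edge vectors (one per edge of the finite quotient graph) and then apply the structure theorem to the torsion--free, finitely generated image in $\R^n$ to conclude it is free abelian of finite rank. The only cosmetic difference is that you spell out the finiteness of $\bar\Gamma_{\Lambda}$ and the choice of a $\Z$--basis explicitly, which the paper leaves implicit.
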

\begin{proof}
To generate the group we only need one choice of orientation per edge.
Let $\vec{e}_j, j \in J$ be such a choice. This choice defines a map
$\Z^J\to \R^n$. The image of this group is a torsion free Abelian
group and hence by the structure theorem for Abelian groups a free
Abelian group of rank $\leq |J|$.
\end{proof}

$T(\Lambda)$ also naturally acts by translations on $\R^n$.

\subsection{Fundamental group}
The graph $\G_{\L}$ is a covering space for $\bar\G_{\L}$. Since we know
the fundamental group of $\bar\G_{\L}$ by  Proposition (\ref{genloopprop})  this information
can be used to calculate the fundamental group of  $\G_{\L}$.
In order to do the computation, we will have to fix some notation. We fix a rooted spanning tree
$\tau$ of $\bar\G_{\L}$ and set $n:=|E(\bar \G_{\L})|-|E(\tau)|$. By contracting $\tau$ 
we obtain a surjection $\pi:\bgl \to \wedge_{i=1}^n S^1$ which by  Proposition (\ref{genloopprop}) induces
an isomorphism on fundamental groups.
 We fix generators of $\pi_1(\wedge_1^n S^1)$ 
and choose lifts $l_1,\dots, l_n$ of them.
By definition each loop $l_i$ is a sequence of directed edges $\vec{e}_{i_1},\dots,\vec{e}_{i_{k(i)}}$ for
some $k(i)$. We set $\vec{l_i}:=\vec{e}_{i_1}+\cdots+ \vec{e}_{i_{k(i)}}$.

\begin{prop}
\label{fundprop}
Let $\L$ be a lattice with graph $\G_{\L}$  and finite quotient graph $\bar\Gamma_{\L}$.
Let $\tau$ be a rooted spanning tree for $\bar\Gamma_{\L}$ and set $n:=|E(\bar \G_{\L})|-|E(\tau)|$.
 If the vectors $\vec{l}_i$ are linearly independent, 
then $\pi_1(\G_{\L})=[\F_n,\F_n]$.
Furthermore a loop on $\bgl$ lifts to a loop on $\gl$ if and only
if it traverses each edge the same number of times in each direction.
\end{prop}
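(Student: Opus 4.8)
The plan is to read off both assertions from the covering $p\colon\gl\to\bgl$ recorded above, using the \emph{displacement homomorphism} that measures how far the lift of a based loop travels in $\R^n$. By Proposition~\ref{genloopprop} we already have $\pi_1(\bgl)=\F_n$, with free generators the classes of the loops $l_1,\dots,l_n$, each of which runs exactly once across one of the $n$ non--tree edges of $\tau$. The standard fact that $p_*\pi_1(\gl)$ consists precisely of the classes of loops in $\bgl$ whose lifts to $\gl$ are again loops then reduces both parts of the proposition to characterizing those classes.

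First I would define $\phi\colon\F_n\to\R^n$ by sending $[w]$ to the total displacement $\sum_e c_e(w)\,\vec{e}$ of the lift of $w$, where $c_e(w)$ is the signed number of times $w$ traverses the edge $e$; unique path lifting makes this depend only on the homotopy class, so $\phi$ is well defined, and since displacements add in the abelian group $\R^n$ it factors through the abelianization $\F_n\to\F_n^{\mathrm{ab}}=\Z^n$, with $\phi([l_i])=\vec{l}_i$. The lift of $w$ is a loop exactly when $\phi([w])=0$. The induced map $\Z^n\to\R^n$ carries the standard basis to the $\vec{l}_i$, so the hypothesis that these vectors are linearly independent makes it injective; hence $\ker\phi$ is exactly the kernel of the abelianization, i.e.\ $[\F_n,\F_n]$. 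Together with $p_*\pi_1(\gl)=\ker\phi$ and the injectivity of $p_*$ this yields $\pi_1(\gl)=[\F_n,\F_n]$, which is part (1).

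For the second assertion I would convert $\phi([w])=0$ into the edge--count statement. The ``if'' direction is immediate, since if every edge is traversed equally often in both directions then all $c_e(w)=0$ and the displacement vanishes. For the converse I would pass to the simplicial chain complex of $\bgl$: the signed counts form a $1$--chain $Z_w=\sum_e c_e(w)\,e$, which is a cycle ($\partial Z_w=0$) because $w$ is closed, so $Z_w$ lies in $Z_1(\bgl)=H_1(\bgl)$. The $n$ fundamental cycles $z_i$ of the non--tree edges form a basis of this group, and since the non--tree edge $f_i$ appears only in $z_i$ the coefficient of $z_i$ in $Z_w$ is precisely the net count $c_{f_i}(w)$. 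Thus $Z_w=\sum_i c_{f_i}(w)\,z_i$ and, applying $\phi$, $\phi([w])=\sum_i c_{f_i}(w)\,\vec{l}_i$. Linear independence of the $\vec{l}_i$ forces every $c_{f_i}(w)=0$, hence $Z_w=0$; as the edges are a basis of the chain group this gives $c_e(w)=0$ for \emph{all} edges, tree edges included.

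I expect the main obstacle to be exactly this ``only if'' direction of (2): the vanishing of the displacement only directly controls the non--tree edges through the vectors $\vec{l}_i$, and one has to use that a closed walk is a $1$--cycle whose non--tree coordinates already determine its tree--edge coordinates in order to conclude that the tree edges are balanced as well. The homological bookkeeping that $Z_w=\sum_i c_{f_i}(w)\,z_i$ is what closes this gap; the remainder is a routine combination of covering space theory with the linear--independence hypothesis.
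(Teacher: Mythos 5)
Your argument is correct and, for the computation of $\pi_1(\gl)$, is essentially the paper's: both identify $p_*\pi_1(\gl)$ with the classes of loops of vanishing displacement and then use linear independence of the $\vec{l}_i$ to equate that kernel with the kernel of the abelianization, i.e.\ $[\F_n,\F_n]$. Where you genuinely diverge is in the ``only if'' half of the edge--count statement: the paper disposes of it in one line by expanding a based loop as a word in the $l_i$ and balancing the summands, whereas you work in the cycle space $Z_1(\bgl)$, observing that a closed walk gives a $1$--cycle determined by its coefficients on the non--tree edges, so that vanishing of the net counts $c_{f_i}(w)$ forces $c_e(w)=0$ for every edge, tree edges included. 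Your version makes explicit the point the paper leaves implicit---that the tree--edge counts are controlled by the non--tree ones---and applies directly to an arbitrary closed edge path rather than to a word in the chosen generators; the only cost is the extra homological bookkeeping.
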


\begin{proof}
We need to compute which loops on $\bgl$ lift to loops on $\gl$ and which do not. Given a loop $l$ on $\bgl$ it can be written
as a word in the basis loops $l_i$, say $w=\prod_j l_{i_j}^{\eps(j)}$ where $\eps(j)=\pm1$.
 Fixing a pre--image of $v_0$ given by the rooted spanning tree,
this lifts to an edge path on $\gl$ given by the sequences of vectors $\vec{e}_{i_1},\dots,\vec{e}_{i_{k(i)}}$.
Notice that $l_i^{-1}$ gives rise to the sequence  $-\vec{e}_{i_{k(i)}},\dots,-\vec{e}_{i_1}$.
 These vectors form a closed loop if and only if they all sum up to $\vec{0}$. Now each $l_i^{\pm1}$ 
contributes $\pm\vec{l_i}$ to the sum. This means that $w$ lifts to a loop
if and only if $\sum_j \eps(j)\vec{l_{i}}=\vec{0}$. Since
  by assumption the $\vec{l_i}$ are linearly
independent this happens if and only if for fixed $i$ the sum of the exponents of the occurring $l_i$ is zero,
and this happens precisely if the image of $w$ in the Abelianization $Ab(\F_n):=\F_n/[\F_n,\F_n]=\Z^n$ 
is $0$. This means that the covering group of the cover $\gl\to \bgl$ is $Ab(\F_n)$ which is
a normal subgroup of $\F_n$ and hence the covering group of $\gl$ is $[\F_n,\F_n]$.
The last statement follows immediately by noticing this is true
for the $\vec{l_i}$ and hence also for their summands $\vec{e_{i_j}}$.
\end{proof}

\subsection{Main examples}
\subsubsection{The honeycomb lattice} 
\label{honeydef}
The  honeycomb 
lattice (see Figure \ref{hondecfig}) is an example of a physical lattice.
To make things precise, we let  $\Lhex\subset R^2$ which is
the Bravais lattice generated by $-e_1:=(1,0)$ and 
$e_3:=\frac{1}{2}(1,-\sqrt{3})$. We set $e_2=-e_1-e_3=\frac{1}{2}(1,\sqrt{3})$.
A ``dual'' Bravais lattice $\Lhex^t$ generated by 
$f_2:=e_2-e_1=\frac{1}{2}(-3,\sqrt{3})$ and
$f_3:=e_3-e_1=\frac{1}{2}(3,\sqrt{3})$ acts via translation on $\Lhex$. 
There are precisely three orbits of this action. 
These are the $A$, $B$ and $C$ lattices,
where we fix the $A$ lattice to be the orbit of $(1,0)$, the $B$ lattice
to be the orbit of $(-1,0)$, and the $C$ lattice to be the orbit of $(0,0)$.
The lattice $\Lhc$ is then defined to be the union of
the $A$ and $B$ sublattices. In  $\Lhc$ 
there are three nearest neighbors as indicated in Figure \ref{hondecfig}.

The symmetry group is $\Z^2$ embedded as the ``dual'' Bravais lattice
above. The group $T(\Lhc)$ is
again isomorphic to $\Z^2$,
 but it is the original triangular lattice 
generated by $e_1$ and $e_3$. 
See Figure \ref{hondecfig}.
This defines the graph $\G_{\Lhc}$.

\begin{figure}
\includegraphics[width=.7\textwidth]{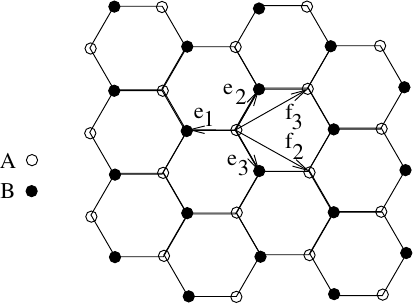}
\caption{\label{hondecfig} The honeycomb lattice $\Lhon$, the generators $f_i$
for $L(\Lhon)$ and the generators $e_i$ of $T(\Lhon)$}
\end{figure}

\subsubsection{The Gyroid lattice graph}
The Gyroid lattice is given by the set of vertices $V_+$ of $\gp$ of \S \ref{gclasssec}. It is also
a physical lattice. 
 Here the symmetry
group is the space group of $I4_132$ which is the body centered cubic (bcc) 
lattice.
The group $T(\L)$ is actually the face centered cubic (fcc) lattice
as shown in Proposition \ref{fccprop}.

\subsection{Hilbert space of a lattice}
\label{hilbpar}
Given a lattice or in general a countable set 
$\Lambda$, we define its Hilbert space
to be $\H_{\Lambda}:=l^2(\Lambda)$ that is the set of square
summable complex sequences indexed by $\Lambda$.
There is an alternative way to think about a sequence $(a_{\lambda}:\lambda\in
\Lambda)$ as a function $\psi$ on $\Lambda$. Given a sequence as above the
function
is given by $f(\lambda):=a_{\lambda}$. Vice--versa given $\psi$
the sequence is obtained by setting   $a_{\lambda}:=\psi(\lambda)$.

Given an action of a group $G$ on $\Lambda$ there is an induced action
of $G$ on $l^2(\Lambda)$ given by $g\psi(\lambda):=\psi(g^{-1}(\lambda))$
for $g\in G$ and $\psi\in l^2(\L)$.

A standard basis for
$l^2(\L)$ is given by the functions $v_l(l')=\delta_{l,l'}$ .

If $\L$ is a lattice then there is a unitary representation of $\L$ on $\H=l^2(\L)$,
given via the translation operators  $T_l(v_{l'})=v_{l+l'}$.

Given a lattice $\L$ and fixing a translation group  $L$ we can decompose the Hilbert space $\H_{\L}$ by
 breaking it down in terms of the orbits of $L$. More precisely, labeling each orbit by a vertex in $\bar\G$, we obtain
the direct sum decomposition
\begin{equation}
\label{dircompeq}
\H_{\Lambda}=\bigoplus_{\bar v\in \bar \Gamma} \H_{\bar v}
\end{equation}
where $\H_v=l^2(L(v))$ for any $v$ which represents the class of $\bar v$.

\subsection{Partial isometries and projections}
\label{partialisopar}
In general we get a representation of $L(\Lambda)$ on $\H_{\Lambda}$.
Notice that there is in general {\em no representation} of $T(\L)$ on $\H_{\L}$. We do have a partial action,
which is given by partial isometries. If $\vec{e}$ is an edge from $v$ to $w$ in $\bar \G$ then it induces an 
isomorphism $V_{\vec{e}}:\H_w\to \H_v$ simply by setting $V_{\vec{e}}\psi(l)=\psi(l-\vec{e})$. This induces
a partial isometry on all of $\H(\Lambda)$.

 Another way to describe this partial isometry
is as follows. Consider $\H_{\Lambda}$. Since $T(\Lambda)\supset \Lambda$ 
--after shifting $\Lambda$ so that $0\in \L$-- we have that $\H_{T(\L)}\supset \H_{\L}$ and moreover $T(\L)$ acts on $\H_{T(\L)}$ via translation
operators $T_{\vec{e}}$.
If $P$ is the orthogonal projection of $\H_{T(\L)}$ to $\H_{\L}$ then $V_{\vec{e}}=PT_{\vec{e}}P$. If
 furthermore $P_v$ is the projection onto $\H_v\subset H_{T(\\L)}$
then we can further decompose the action into components
$U_{\vec{e}}^{w,v}=P_vT_{\vec e}P_w$. Notice that in principle there could be a directed edge
$e'\neq e$ with $\vec{e'}=\vec{e}$ as vectors in $\R^n$. This happens for e.g.\ for the cube in \S \ref{cubepar}. In this case
there will be more components or super selection sectors to use physics terminology.

\subsection{Projective representations}

Up to now, we have insisted that the translation operators are a {\em bona fide} representation
of the translations groups.

 It turns out that to accommodate such physical data as a magnetic
field one should only expect a projective representation. This is also in accord with general
quantum theory, where representations are always only expected to be projective.

This is also one of the sources of non--commutativity. The second being that one does not expect
the symmetries of the system to commute with the Hamiltonian on the nose. One actually
has a choice either to preserve the commutativity of the symmetry operators or to preserve
that the symmetry operators commute with the Hamiltonian. We recall some standard facts from representation theory.

\subsubsection{Cocycles}
A $U(1)$  2--cocycle for a group $G$ written as $c\in
Z^2(G,U(1))$ is a map $c:G\times G\to U(1)$ such that 
\begin{equation}
c(u,v)c(uv,w)=c(u,vw)c(v,w)
\end{equation}
which is called a cocycle condition.

Notice that if $c$ is a $U(1)$ 2--cocycle for $G$ and $H\subset G$ is a subgroup then its restriction
$c|_{H}:H\times H \to U(1)$ is a $U(1)$ 2--cocycle for $H$.

\begin{df}
A morphism 
$\rho:G\to {\mathcal U}(\H)$, is called {\em a projective representation with cocycle $c$}
if $\rho(1_G)=id_{\H}$, that is the identity $1_G$ of $G$ maps to the identity operator
$id_{\H}$, and  $\rho(u)\rho(v)=c(u,v)\rho(u+v)$. 
\end{df}

\begin{ex}
A cocycle $c$ is called trivial if $c(u,v)=s(u)s(v)s^{-1}(uv)$ for some group morphism $s:G\to U(1)$.
Given a trivial group cocycle one can always perturb or scale an existing representation $\rho$ to a 
projective representation $\rho_c$ with cocycle $c$ by setting
$\rho_c(u):=s(u)\rho(u)$. With the same formula one can scale a projective representation with a cocycle $c'$ to one with the cocycle $cc'$.
\end{ex}

\begin{rmk}
We will be dealing with Abelian groups $G$, viz.\ $\Z^n$ for which we will use the usual additive 
notation of $0$ and $+$ for the identity and group operation.
\end{rmk}

In the case of a free Abelian group and its Hilbert space, given any cocycle,
 one can also twist the standard
representation to a projective representation with that cocycle.

\begin{lem}
Let $L\simeq \Z^n\subset \R^n$ be a lattice and  $\alpha \in Z^2(L,U(1))$. Then the operators
$U_{l}$ which operate on $\H_L$ via
\begin{equation}
U_l(v_{l'})=\alpha(l,l')v_{l+l'}
\end{equation} 
where $v_l$ is the standard basis satisfy

\begin{equation}
\label{comreleq}
U_lU_{l'}=\alpha(l,l')U_{l+l'},\quad
U_lU_{l'}=\eps(l,l')U_{l'}U_{l} \mbox{ with }
 \eps(l,l')=\frac{\alpha(l,l')}{\alpha(l',l)}
\end{equation}
\end{lem}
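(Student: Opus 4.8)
The plan is to work directly on the standard orthonormal basis $\{v_l : l\in L\}$ of $\H_L=l^2(L)$, since both claimed identities are operator equalities and it suffices to verify them on basis vectors. First I would observe that, because $\alpha$ takes values in $U(1)$, each $U_l$ sends a basis vector to a basis vector up to a phase of modulus one, hence is a well-defined unitary on $\H_L$; in particular every composition appearing below is a legitimate bounded operator. The substantive content is the first relation $U_lU_{l'}=\alpha(l,l')U_{l+l'}$, and I expect the commutation relation to fall out almost formally from it together with the fact that $L$ is Abelian.

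For the first relation I would simply apply both sides to an arbitrary basis vector $v_{l''}$ and compare. Using the definition twice,
\[
U_lU_{l'}(v_{l''})=\alpha(l',l'')\,U_l(v_{l'+l''})=\alpha(l',l'')\,\alpha(l,l'+l'')\,v_{l+l'+l''},
\]
whereas
\[
\alpha(l,l')\,U_{l+l'}(v_{l''})=\alpha(l,l')\,\alpha(l+l',l'')\,v_{l+l'+l''}.
\]
The two right-hand sides coincide exactly when $\alpha(l',l'')\alpha(l,l'+l'')=\alpha(l,l')\alpha(l+l',l'')$, which is precisely the cocycle condition $c(u,v)c(uv,w)=c(u,vw)c(v,w)$ read with $u=l$, $v=l'$, $w=l''$ in additive notation. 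Thus the identity holds on every basis vector and hence as operators.

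Granting this, the commutation relation is immediate: the first identity gives both $U_lU_{l'}=\alpha(l,l')U_{l+l'}$ and $U_{l'}U_l=\alpha(l',l)U_{l'+l}$, and since $L$ is Abelian we have $l+l'=l'+l$, so $U_{l+l'}=U_{l'+l}$. Combining these yields
\[
U_lU_{l'}=\frac{\alpha(l,l')}{\alpha(l',l)}\,\alpha(l',l)\,U_{l'+l}=\eps(l,l')\,U_{l'}U_l,
\]
with $\eps(l,l')=\alpha(l,l')/\alpha(l',l)$ as claimed. I do not anticipate any real obstacle: the lemma is a direct bookkeeping computation, and the only point requiring care is matching the cyclic order of the three arguments $l,l',l''$ to the cocycle identity so that the phases line up correctly. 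The Abelian hypothesis enters solely through the identification $U_{l+l'}=U_{l'+l}$, which is exactly what makes $\eps$ a scalar obstruction to commutativity.
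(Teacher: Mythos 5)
Your proof is correct and is exactly the computation the paper has in mind --- its proof consists of the single line ``Straightforward calculation,'' and you have carried that calculation out: the first relation on basis vectors reduces to the cocycle identity with $(u,v,w)=(l,l',l'')$, and the commutation relation follows formally from it since $L$ is Abelian. Nothing is missing.
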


\begin{proof}
Straightforward calculation.
\end{proof}

\begin{lem}
Let $B$ be a bilinear form on $\R^n$. 
Then $\alpha_B(u,v):=e^{\frac{i}{2}B(u,v)}$ is a 2--cocycle for $\R^n$.
If $\L\subset \R^n$ is a lattice and $e_i$ are generators for this lattice
then the algebra of operators $U_l$ of the $\alpha$ twisted representation is generated
by the operators $U_i:=U_{e_i}$.
\end{lem}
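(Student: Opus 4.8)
The plan is to treat the two assertions separately. For the cocycle property I would simply substitute $\alpha(u,v)=e^{iB(u,v)}$ into the cocycle condition, which in the additive notation of $\R^n$ reads $\alpha(u,v)\alpha(u+v,w)=\alpha(u,v+w)\alpha(v,w)$, and expand using bilinearity of $B$. Comparing exponents, the left-hand side contributes $B(u,v)+B(u+v,w)=B(u,v)+B(u,w)+B(v,w)$ by additivity in the first slot, while the right-hand side contributes $B(u,v+w)+B(v,w)=B(u,v)+B(u,w)+B(v,w)$ by additivity in the second slot. The two exponents agree, so $\alpha$ satisfies the cocycle condition. This step is entirely routine.

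For the generation statement the key input is the relation $U_lU_{l'}=\alpha(l,l')U_{l+l'}$ from the preceding Lemma. Since $\alpha$ takes values in $U(1)$, I would rewrite it as $U_{l+l'}=\overline{\alpha(l,l')}\,U_lU_{l'}$, exhibiting $U_{l+l'}$ as a scalar multiple of the product $U_lU_{l'}$. First I would record that $U_0=\mathrm{id}$, which follows from $\alpha(0,\cdot)\equiv 1$ (indeed $e^{iB(0,v)}=1$). Then from $U_lU_{-l}=\alpha(l,-l)U_0=\alpha(l,-l)\,\mathrm{id}$ one gets $U_{-l}=\alpha(l,-l)U_l^{-1}$, so in particular each $U_{-e_i}$ is a scalar multiple of $U_{e_i}^{-1}=U_{e_i}^{*}$, the $U_i$ being unitary.

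Now write an arbitrary $l\in\L$ as $l=\sum_i a_i e_i$ with $a_i\in\Z$. By induction on $\sum_i|a_i|$, using the relation $U_{l'+e_i}=\overline{\alpha(l',e_i)}\,U_{l'}U_{e_i}$ together with its analogue for $-e_i$, every $U_l$ is a scalar multiple of a finite product of the operators $U_{e_i}$ and their inverses $U_{e_i}^{-1}=U_{e_i}^{*}$. Since the $*$-algebra generated by the $U_i$ is unital (it contains $U_i^{*}U_i=\mathrm{id}$) and closed under scalar multiplication, it contains every such product and hence every $U_l$. Conversely the $U_i$ are themselves among the $U_l$, so the algebra generated by all the $U_l$ coincides with the algebra generated by the $U_i$.

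I do not expect a serious obstacle: both claims reduce to bilinearity of $B$ and to the multiplicativity relation already established. The one point requiring a little care is the bookkeeping of the $U(1)$-valued cocycle factors in the induction; the observation that unblocks it is that these factors are scalar multiples of the identity, which already lie in the algebra generated by the unitaries $U_i$, so they do not enlarge the generating set. With that remark the induction closes immediately.
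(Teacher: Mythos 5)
Your proposal is correct and follows essentially the same route as the paper: the cocycle condition is verified by expanding the exponents via bilinearity of $B$, and the generation claim follows from the multiplication rule $U_lU_{l'}=\alpha(l,l')U_{l+l'}$, which the paper compresses into the single line ``$\prod U_i^{a_i}\propto U_l$.'' Your induction on $\sum_i|a_i|$ merely spells out that proportionality carefully, which is a fine (and slightly more complete) way to present the same argument.
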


\begin{proof}
First we calculate
\begin{eqnarray}
\alpha_B(u,v)\alpha_B(u+v,w)&=&\exp(\frac{i}{2}[B(u,v)+B(u+v,w)])\nn\\
&=&\exp(\frac{i}{2}[B(u,v)+B(u,w)+B(v,w)])\nn\\
\alpha_B(u,v+w)\alpha_B(v,w)&=&\exp(\frac{i}{2}[B(u,v+w)+B(v,w)])\nn\\&=&
\exp(\frac{i}{2}[B(u,v)+B(u,w)+B(v,w)])
\end{eqnarray}
secondly if $l=\sum a_i e_i$ then
$\prod U^{a_i}_i\propto U_l$ by the formulas (\ref{comreleq}).
\end{proof}

\subsubsection{Noncommutative Tori}
A standard example which will be important in the following is given by the
projective action of $\Z^n$ on $\H(\Z^n)$ with the cocycle $\alpha$ given by 
choosing an {\em anti--symmetric} bilinear form $B=2\pi \formTheta$ on $\R^n$ and then 
restricting the cocycle. This yields a cocycle which we call $\alpha_{\formTheta}:=\alpha_B(u,v)=e^{i\pi\formTheta(u,v)}$.

In this case the operators  $U_i:=U_{e_i}$ generate the algebra of operators $U_l$.
These generators satisfy

\begin{equation}
\label{ncteq}
U_iU_j=e^{2 \pi i \Theta_{ij}}T_jT_i, \quad  \Theta_{ij}=\formTheta(e_i,e_j)
\end{equation}
This follows from (\ref{comreleq}); $\eps(U_i,U_j)=e^{i\pi\Theta_{ij}}e^{-i\pi\Theta_{ji}}=e^{2\pi i \Theta_{ij}}$
by anti--symmetry of $\Theta$. In general:

\begin{df}
For a fixed $n\times n$ anti-symmetric matrix $\Theta$ 
the $C^*$--algebra generated by $n$ unitary operators $U_i$  on a separable Hilbert space
satisfying the commutation relations
(\ref{ncteq}) is called a non--commutative torus and denoted by $\T^n_{\Theta}$.
\end{df}

Note that $\T^n:=\T^n_0$ is the commutative
$C^*$--algebra corresponding to the torus $T^n=(S^1)^{\times n}$
under the Gelfand--Naimark theorem.

\begin{ex}
\label{twodex}
$n=2$: In this case the skew symmetric matrix can be written as
$\Theta= {\theta} \left(\begin{matrix}0&1\\-1&0\end{matrix}\right)$.
And
$\alpha(l,l')=  \exp[i\pi \theta (l\wedge l')]$, where for $l=(l_1,l_2)$ and
$l'=(l'_1,l'_2)$: $l\wedge l':=\det\left(\begin{matrix}l_1&l_1'\\l_2&l_2'
\end{matrix}\right)$ and accordingly  $\eps(l,l')=e^{2 \pi \theta l\wedge l'}$.
This case is written as $\T^2_{\theta}$.
\end{ex}
 
\begin{rmk}
As usual, once a basis $b_i$ for $\R^n$ is fixed there is a bijection between anti--symmetric
bilinear forms $\formTheta$ and skew--symmetric matrices $\Theta$ given
by $\Theta_{ij}=\formTheta(b_i,b_j)$. If thus choice of basis has been  made, we will write $\alpha_{\Theta}$. In our applications, the basis $b_i$ will be given to us by a choice of basis for the Bravais lattice $L$.
\end{rmk}

\subsubsection{Wannier or Magnetic translation operators}

\label{magneticfieldsec}

In case of magnetic field there is a standard cocycle and representation coming from the $B$--field.
 This was first used in \cite{Harper}. 
A magnetic field $B$  in mathematical terms is a 2--form
on $\R^n$.

If we assume that the $B$ field is constant 
then this is nothing but a skew--symmetric bilinear form on $\R^n$.
Thus giving rise to a cocycle $\alpha_B$ as above. This cocycle can now
be restricted to any lattice in $\R^n$.
Furthermore in this situation, we can choose a magnetic potential $A$. This is a 1--form
on $\R^n$ such that $dA=B$. This form exists, since $dB=0$ and as
$\R^n$ is contractible  $\bar H^*(\R^n)=0$, 
i.e.\ the reduced cohomology vanishes, so that every closed form is
exact.

In this case the cocyle is trivial and the twisted action  can be rewritten as

\begin{equation}
\label{neu}
U_{l'}\psi(l)=e^{-i \int_l^{(l-l')}A}\, \psi(l-l')
\end{equation}

\begin{rmk}

If $\sum_{i=1}^n m_i=0$ is a closed cycle of vectors bounding a
simply connected polygonal region $D$ with vertices $v_1,\dots ,v_n$, s.t.\ $v_{i+1}=v_i+m_i$ 
then $U_{m_1} \cdots U_{m_n}=e^{-i F}id$
where $F$ is the flux of $B$ through $D$. Furthermore if $B$ has a
non-singular vector potential $A$ such that $curl(A)=B$ then
$F=\int_D BdS=\sum_i \int_0^{1} m_i\cdot A(v_i+m_{i}t)  \, dt$

\end{rmk}

This means that given two elements $l,l'$ we have
that 

\begin{equation}
U_{l}U_{l'}=\eps(l,l')U_{l'}U_{l}, \quad \eps(l,l')=e^{i\int_R B dS}
\end{equation}
where $R$ is the rectangle spanned by 
the vectors $l$ and $l'$.

\begin{ex}
In the square lattice if 
we choose a constant magnetic field in $z$ direction $\vec {B}=\vec{k}$
or $B=2\pi \theta  dx\wedge dy$
then we can choose $A=\frac{1}{2}(\alpha_1 y\,dx + \alpha_2 x\, dy)$ with $2\pi \theta=\alpha_2-\alpha_1$.
Setting $U:=U_{e_1}$ and $V=U_{e_2}$ we obtain
$UV=e^{2\pi i\theta} VU$ and this is just $e^{i\phi}$ where $\phi=\int_D B dA$ is the flux
through the domain full square $A$ spanned by $e_1$ and $e_2$. The resulting algebra is $\T^2_{\theta}$.
\end{ex}

\subsection{Harper operator for a lattice $\Lambda$ and with graph $\Gamma_{\Lambda}$}
\label{harperlatsec}
In this section we construct a Harper operator for a given lattice $\Lambda$ and a graph for the lattice 
$\Gamma_{\Lambda}$. Associated to each such lattice there is a natural separable
Hilbert space $\H_{\Lambda}$. 
The Harper operator is an operator on $\H_{\Lambda}$ which
is obtained by giving a graph representation 
of $\bar \Gamma$. To do this we fix a maximal translational 
symmetry group $L$ for
$\Lambda$ (see \S \ref{maxsymsec}). Using the techniques 
of \S\ref{graphsec}
we obtain a Harper operator as we discuss in detail below. This operator together with the representation of the  translation group $L$ defines a subalgebra of the endomorphism algebra of
$\H_{\Lambda}$ which we call the Bellissard-Harper algebra and denote
by $\B(\G_{\L},L,\alpha)$ or $\B$ for short. Here we allow for
projective representations acting with the cocycle $\alpha$, e.g.\ via a magnetic field (see \S\ref{magneticfieldsec}). If $\G_L,L$ and $\alpha=\alpha_B$ with $B=2\pi i \formTheta$ is fixed,
we write $\BformTheta$ for the corresponding Bellissard--Harper algebra.

Furthermore again using  \S\ref{graphsec}, we can get a 
a matrix representation for the Harper operator and the action of $L$. 
These
matrices are elements of a matrix algebra with
coefficients in the algebra generated by the operators
corresponding to $L$.  Fixing $\G_\L,L$, a path basis,
 a basis for $L$ and a magnetic field $B=2\pi\formTheta$,
we arrive at an algebra of matrices, which we call $\BTheta$.
The details are given below.

\subsubsection{Graph representation of $\bar \Gamma$}

To put ourselves in the situation of \S \ref{gensetsec}, 
we will decompose $\H_{\L}$ as
\begin{equation}
\H_{\Lambda}=\bigoplus_{\bar v\in \bar \Gamma} \H_{\bar v}
\end{equation}
where $\H_v=l^2(L(v))$ for any $v$ which represents the class of $\bar v$.

We also fix a 2--cocycle $\alpha_B$ corresponding to a skew--symmetric bilinear form 
or equivalently a constant $B$--field. This gives us the Hilbert space part of the data of a graph
representation $\rho_{\bar \Gamma}$. We define the
isomorphisms as $U_{\vec{e}}$. For this we use the fact that
an oriented edge of $\bar \Gamma$ has a natural representation as
a vector $\vec{e}$ (see \ref{grplatsec}).

\begin{df}
The Harper operator $H_{\Lambda}$ is defined to be 
the graph Harper operator $H_{\bar \Gamma}$ on $\H_{\Lambda}$
corresponding
to the graph representation $\rho_{\Gamma}$.

The Bellissard--Harper (BH) algebra $\B(\G_\L,L,\alpha_B)$ 
is the $C^*$--algebra of operators on $\H_{\Lambda}$ generated
by the projective representation of $L$ and the graph Harper operator.
\end{df}

\begin{rmk}
The operator $\H_{\L}$ also generates random walks and is related to a discrete difference operator
as follows. Let $\gl$ be a $k$--regular graph, which means that each vertex has valence $k$.
Then $\Delta=k-H_{\L}$ acts as the difference operator. 
$\Delta(\Psi)(l)=(\sum_{\vec{e}:s(\vec{e})=l}\Psi(l)-\Psi(T_{\vec{e}}(l)))$ where the sum is over ``nearest neighbors'' as
defined by $\gl$.
\end{rmk}

\subsubsection{A Matrix representation of the Bellissard--Harper algebra}
\label{harpermatrixsec}
In order to get a matrix representation, we fix a vertex $v_o$ of $\bar
\Gamma_{\Lambda}$ and a choice of paths $p_v$ from $v_o$ to $v$. We
will call such a choice a choice of path basis. Again a convenient way to fix such
data is to specify a spanning tree.

We then get a matrix representation of the Harper operator and the
operators coming from the projective representation of $L$.

\begin{thm} Fixing a choice of path basis, and a basis for $L$ the corresponding faithful
matrix representation of $\B(\G_{\L},L,\alpha_{B})$ is a sub--$C^*$--algebra $\BTheta$ of the $C^*$
algebra $M_{|V(\bar\Gamma)|}(\T^{n}_{\Theta})$.
\end{thm}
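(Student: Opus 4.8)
The plan is to compute the matrix entries of a generating set of $\B$ in the chosen path basis and to verify that every such entry lands in the copy of $\T^n_\Theta$ sitting inside $\mathrm{End}(\H_{v_o})$; once that is done the statement follows formally, because $M_{|V(\bgl)|}(\T^n_\Theta)$ is itself a $C^*$-algebra, so the $C^*$-subalgebra generated by the matrix images of the generators of $\B$ is automatically a sub-$C^*$-algebra of it. First I would fix the identifications. Writing $L$ for the maximal translation group (a Bravais lattice, hence $L\cong\Z^n$), each summand $\H_{\bar v}=l^2(L(v))$ is identified with $l^2(L)\cong l^2(\Z^n)\cong\H_{v_o}$ by taking the orbit representative as origin, and under this identification the magnetic translation operators $U_l$, $l\in L$, are precisely the unitaries generating $\T^n_\Theta\subset \mathrm{End}(\H_{v_o})$ by (\ref{ncteq}). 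The path isomorphisms $\phi_v$ of \S\ref{gensetsec} are, by construction, the composites $U_{\vec e_k}\circ\cdots\circ U_{\vec e_1}$ along the chosen path $p_v$, so each $\phi_v$ is a magnetic translation by the vector $\vec p_v:=\vec e_1+\cdots+\vec e_k$, up to a scalar cocycle phase.

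Next I would treat the two kinds of generators separately. The representation of $L$ on $\H_\L$ is by the magnetic translation operators $U_l$, and these preserve every orbit $L(v)$; hence in matrix form each $U_l$ is block-diagonal, and its $\bar v$-th diagonal entry is $\phi_v^{-1}U_l\phi_v$, which by the commutation relations (\ref{comreleq}) equals $U_l$ up to the scalar $\eps$, and so lies in $\T^n_\Theta$.

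The substantive step, and the one I expect to be the main obstacle, is the off-diagonal Harper part. The matrix of $H_\L=\sum_{\vec e}U_{\vec e}$ has, in the entry indexed by $(w,v)$, one summand $\phi_w^{-1}U_{\vec e}\phi_v$ for each oriented edge $\vec e$ of $\bgl$ running from $v$ to $w$. Read as a path, this composite travels from $v_o$ to $v$ along $p_v$, across $\vec e$, and back from $w$ to $v_o$ along $p_w^{-1}$, so it is a magnetic translation by the vector $\vec p_v+\vec e-\vec p_w$. The key point is that this is the vector of a \emph{closed loop} of $\bgl$ based at $v_o$: when lifted to $\gl$ it connects two vertices of the same $L$-orbit, since $L$ is exactly the deck group of the covering $\gl\to\bgl=\gl/L$, and hence its vector lies in $L$. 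Therefore $\phi_w^{-1}U_{\vec e}\phi_v=U_l$ for some $l\in L$, again up to a scalar cocycle phase, and each matrix entry of $H_\L$ is a finite sum of elements of $\T^n_\Theta$, hence in $\T^n_\Theta$. Care is needed here on two points: that all the accumulated cocycle phases are genuine scalars, which is exactly the content of (\ref{comreleq}); and that the loop vector lands in $L$ rather than in the larger lattice group $T(\L)$, which is guaranteed because the orbits of $L$ are indexed by the vertices of $\bgl$, so any based loop of $\bgl$ closes up within a single $L$-orbit.

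Finally I would assemble the pieces. Since the generators $U_l$ ($l\in L$) and $H_\L$ of $\B$ all have matrix entries in $\T^n_\Theta$, their matrices lie in $M_{|V(\bgl)|}(\T^n_\Theta)$; and because the latter is a $C^*$-algebra, the $C^*$-algebra they generate---which is precisely the matrix form of $\B$---is a sub-$C^*$-algebra of $M_{|V(\bgl)|}(\T^n_\Theta)$, as claimed.
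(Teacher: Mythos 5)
Your proposal is correct and follows essentially the same route as the paper's own proof: the diagonal ($L$-translation) part is handled by the conjugation relation $U_{p_v}U_lU_{p_v}^*\propto U_l$, and the off-diagonal Harper entries are shown to be magnetic translations by the vector $\vec p_v+\vec e-\vec p_w$ of a based loop of $\bgl$, which lies in $L$ because the composite preserves the $v_o$-summand. Your write-up is merely more explicit about the scalar cocycle phases and about why the loop vector lands in $L$ rather than in $T(\L)$.
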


\begin{proof}
Before passing to the matrix representation all the operators
involved are shifted translation operators. Those coming from $L$ and those
coming from $L(\Lambda)$. First we have to show that the operators from $L$ 
still act as operators from $L$ when restricted to $\H_{v_o}$, but this is clear
since these are diagonal in the direct sum decomposition
(\ref{dircompeq}). Thus the operators in question are conjugates,
$U_{p_v}U_lU_{p_v}^*\propto U_l$ for any $U_l\in \T^n_{\Theta}$.
Here $\Theta$ is the matrix obtained from $\formTheta=\frac{1}{2\pi}B$ by using the choice of basis of $L$.
Secondly, for $l'\in
T(\L)$,  $U_{p_v}U_{l'}U_{p_v'}^*\propto
U_{l''}$   acts as a translation operator which preserves the $v_o$ summand.
This means that the sum of vectors $l''=-p_v+l'-p_v'$ is actually in $L$. Hence the
assertion follows.
\end{proof}

Notice that different choices of path basis may lead to different
representations, but all these representations are isomorphic; moreover
they are conjugates of one another. The effect of  changing the basis of $L$
is to replace the matrix $\Theta$ with its basis transform $\Theta'$,
but as $C^*$--algebras $\T^n_{\Theta}=\T^n_{\Theta'}$ ---only the presentation has changed---
 with 
the base change acting as an endomorphism.

\begin{cor}
If $\Theta$ is rational then the spectrum of $H_{\L}$ has finitely many gaps.
Moreover the maximal number is determined by the entries of $\Theta$.
\end{cor}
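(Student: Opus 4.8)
The plan is to combine the embedding $\B\hookrightarrow M_{|V(\bar\Gamma)|}(\T^n_\Theta)$ furnished by the preceding Theorem with the gap-labeling principle and the known range of the canonical trace on the $K_0$-group of a rational non-commutative torus. Writing $k:=|V(\bar\Gamma)|$ and $\A:=M_k(\T^n_\Theta)$, I would equip $\A$ with the trace $\tau:=\mathrm{Tr}\otimes\tau_0$, where $\tau_0$ is the canonical tracial state on $\T^n_\Theta$ and $\mathrm{Tr}$ is the unnormalized matrix trace, so that $\tau(1)=k$.

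First I would set up the gap labels. Given any gap in the spectrum of the self-adjoint operator $H_\L$, choose an energy $E$ inside it. Then $\chi_{(-\infty,E]}$ restricts to a continuous function on $\sigma(H_\L)$, so by continuous functional calculus the spectral projection $P_E:=\chi_{(-\infty,E]}(H_\L)$ already lies in the $C^*$-algebra generated by $H_\L$, hence in $\B\subset\A$. Its class $[P_E]\in K_0(\A)$ is therefore defined, and the real number $N(E):=\tau(P_E)$ is the integrated density of states at $E$. By construction $0\le N(E)\le\tau(1)=k$.

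Second I would import the $K$-theoretic input. By the computation of the ordered $K_0$-group of the non-commutative torus already cited in the paper, when every entry of $\Theta$ is rational the image $\tau_0\big(K_0(\T^n_\Theta)\big)$ is contained in a discrete subgroup $\tfrac{1}{Q}\Z\subset\R$, where $Q$ is determined by the denominators of the entries of $\Theta$. Passing to matrices does not enlarge this range: under the standard isomorphism $K_0(\A)\cong K_0(\T^n_\Theta)$ the trace $\tau$ corresponds to $\tau_0$, so $\tau\big(K_0(\A)\big)\subseteq\tfrac{1}{Q}\Z$ as well. In particular $N(E)\in\tfrac{1}{Q}\Z\cap[0,k]$, which is a finite set of cardinality $kQ+1$.

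Finally I would show that distinct gaps produce distinct values of $N$, which closes the argument: the integrated density of states is non-decreasing, and for these covariant operators its set of points of increase coincides with the spectrum, so if $E_1<E_2$ lie in two different gaps there is spectrum between them and $N(E_1)<N(E_2)$. Hence the assignment of $N(E)$ to a gap is injective into $\tfrac{1}{Q}\Z\cap[0,k]$, bounding the number of gaps by $kQ+1$, a bound manifestly controlled by the entries of $\Theta$ (through $Q$) and by the graph (through $k$). I expect the main obstacle to be precisely this last step: justifying that the points of increase of the density of states coincide with the spectrum, so that no trace value is shared by two genuinely distinct gaps. This is the standard but nontrivial input from the theory of the integrated density of states for homogeneous operators, which I would invoke rather than reprove; the remaining technical point---that the trace range is genuinely of the form $\tfrac{1}{Q}\Z$ and that matrix stabilization rescales it correctly---I would read off from the cited $K$-theory computations.
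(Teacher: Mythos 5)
Your argument is essentially the paper's own: embed $\B$ into $M_{|V(\bar\Gamma)|}(\T^n_\Theta)$, restrict the canonical trace, and use the known range $\Z+\sum_{ij}\theta_{ij}\Z$ of the trace on $K_0$ of the noncommutative torus, which meets $[0,|V(\bar\Gamma)|]$ in a finite set when $\Theta$ is rational. The only difference is that you make explicit the injectivity of the map from gaps to trace values via monotonicity of the integrated density of states, a point the paper leaves implicit; this is a refinement rather than a different route.
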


\begin{proof}
Since there is an injection of $\BTheta$ into $M_{|V(\bar\Gamma)|}(\T^{n}_{\Theta})$ we can
restrict the tracial states to $\BTheta$. The image of the tracial states of $\T^{n}_{\Theta}$ is
known to be $S=\Z+\sum_{ij}\theta_{ij}\Z\subset \R$ \cite{E,EL}. We fix a faithful tracial state $\tau$ and
then we have that for any gap projection $P_{\lambda}$: $\tau(P_{\lambda})\in [0,|V(\bar\Gamma)|]\cap S$.
We thus see that there are only finitely many possible gaps if all the $\theta_{ij}$ are rational.
\end{proof}

\subsection{Geometry of  $\B$}
In general, we are given a lattice $\L$ and perhaps the graph $\G_{\L}$. We
can then obtain a family of BH--algebras by choosing different cocycles
$\alpha_{2\pi\formTheta}$. We will call an element of this family $\BformTheta$. Now we have
already shown that such an algebra has a faithful matrix representation $\BTheta\subset M_k(\T_{\Theta}^n)$ where $k$ depends on $\Gamma$.
It is interesting to note that this family of subalgebras has different geometries and K--theories
depending on the choice of $\Theta$. 
Generically one would expect
\begin{ep}
\label{expectation}
If $\Theta$ is generic (i.e.\ all entries are irrational) then $\BTheta=M_{|V(\bar\Gamma)|}(\T^{n}_{\Theta})$ which is Morita equivalent to $\T^{n}_{\Theta}$.
\end{ep}
If this expectation is met is of course dependent on the choices. 
It is true for all the cases we will study. The main motivation is that the non--commutative torus
at generic $\Theta$ is simple, i.e.\ there are no two-sided ideals. This usually
allows one to find that all the elementary matrices are in the algebra and hence
the algebra is the full matrix ring.  The details of our particular 
calculations given in \S 4 also illuminate this expectation.

An open question is what happens at non--generic values of $\Theta$, i.e.\ if one or more of the entries
of $\Theta$ are rational. This again heavily depends on the entries of $H_{\L}$. In the cases 
we study below either $\BTheta=M_{|V(\bar\Gamma|}(\T^{n}_{\Theta})$ again  or it is a genuine
subalgebra. This is for instance a good new source of such algebras and for families
in which the $K$--theory may jump.

The commutative case $\B_0$ is also very interesting. Here we can
characterize the $C^*$--algebra $\B_0$ by the space it represents via the Gelfand--Naimark theorem. 
For this we need some terminology.
For each character, or $C^*$--algebra morphism  $\chi:\T^n\to \C$  there is an induced
 $C^*$--algebra morphism $\bar \chi:M_k(\T^n)\to M_k(\C)$ for any $k$. We fix $k=|V(\bar\Gamma)|$. 
We say $H_{\L}$ is generic if it has $k$ distinct Eigenvalues in $\T^n$ or
equivalently if there is a character $\chi$ of $\T^n$ such that $\bar \chi(H)$
has $k$ distinct Eigenvalues.

We call a point $\chi$ of  $\T^n$ degenerate if
$\bar \chi(H)$ has Eigenvalues with higher multiplicities.
The action of $L$ gives an inclusion: $i:\T^n\to \B_0$.
Given a character of $\B_0$,  we also get
a character of $\T^n$ by pull--back along $i$. We call a point $\chi$ 
of $\B_0$ degenerate
if $\overline{i^*(\chi)}(H)$ has Eigenvalues with higher multiplicity.

\begin{thm}
\label{commthm}
If $H_{\L}$ is generic then $\B_0=C^*(X)$ where $X$ is a generically 
$k$--fold cover of the torus $T^n$. This cover is ramified over the locus of 
degenerate points and 
is moreover a quotient of the trivial $k$--fold cover. 
Here the identifications are along degenerate points of $X$.
\end{thm}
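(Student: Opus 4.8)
The plan is to identify $\B=\B_0$ with the continuous functions on its Gelfand spectrum and to read off that spectrum fibrewise over the torus $T^n$. First I would record what the matrix representation of \S\ref{harpermatrixsec} gives at $\Theta=0$: it places $\B$ inside $M_k(\T^n)=M_k(C(T^n))$ with $k=|V(\bar\Gamma)|$, and the two kinds of generators behave very differently. Each translation operator $U_l$, $l\in L$, becomes the scalar matrix $\chi_l(t)I_k$ --- after conjugating by the path operators (which commute with $U_l$ since the cocycle is trivial) and Fourier transforming the regular representation on each orbit $L(v)\cong L$ --- whereas $H_{\L}$ becomes a genuine self--adjoint matrix $H(t)$ with trigonometric--polynomial entries. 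Thus the copy of $\T^n=C(T^n)$ inside $\B$ sits as scalar matrices, hence is central, and since $H_{\L}=H_{\L}^{*}$ the algebra generated by $C(T^n)$ and $H_{\L}$ is commutative. By Gelfand--Naimark $\B\cong C(X)$ with $X=\widehat{\B}$ compact Hausdorff.

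Next I would produce the map to the torus and compute its fibres. The unital inclusion $\T^n=C(T^n)\hookrightarrow\B$ is injective, so the dual map $\pi\colon X\to T^n$ is a continuous surjection; this is the $i^{*}$ appearing in the definition of degenerate points. The central computation is the fibre description: for $t\in T^n$, evaluation $\mathrm{ev}_t\colon M_k(C(T^n))\to M_k(\C)$ carries $\B$ onto the commutative $*$--subalgebra of $M_k(\C)$ generated by the Hermitian matrix $H(t)$ and the scalars, i.e.\ onto $\C[H(t)]$. Characters of $\B$ lying over $t$ are exactly the characters of this fibre algebra, and the spectrum of $\C[H(t)]$ is the set of \emph{distinct} eigenvalues of $H(t)$. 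Hence $\chi(H_{\L})$ is always an eigenvalue of $\overline{i^{*}(\chi)}(H)=H(t)$, and $X$ is identified, as a topological space, with the real spectral (Bloch) variety $\{(t,\lambda)\in T^n\times\R:\det(\lambda I_k-H(t))=0\}$, with $\pi$ the first projection.

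The covering and ramification statements then follow. Let $D\subset T^n$ be the degenerate locus, the zero set of the discriminant of the characteristic polynomial of $H(t)$; genericity of $H_{\L}$ means $D\neq T^n$, and since the discriminant is a trigonometric polynomial, $D$ is a proper, closed, nowhere dense subset. Over $T^n\setminus D$ the fibre has exactly $k$ points and $\pi$ is a $k$--fold covering (each eigenvalue branch is a local graph), while over $D$ the fibre has strictly fewer points, which is precisely the ramification over degenerate points. To realize $X$ as a quotient of the trivial $k$--fold cover $\coprod_{j=1}^{k}T^n$, I would use that $H(t)$ is Hermitian, so its eigenvalues are \emph{real} and admit a globally continuous nondecreasing ordering $\lambda_1(t)\le\cdots\le\lambda_k(t)$. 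The map $q\colon\coprod_{j=1}^{k}T^n\to X$, $(t,j)\mapsto(t,\lambda_j(t))$, is then a continuous surjection, bijective on fibres over $T^n\setminus D$ and collapsing exactly those sheets $j,j'$ with $\lambda_j(t)=\lambda_{j'}(t)$; hence $X$ is the quotient of the trivial cover by these identifications along $D$.

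The main obstacle I anticipate lies in the second and fourth steps. One must verify carefully that $\mathrm{ev}_t(\B)$ is exactly $\C[H(t)]$ (so that the scalars account for only the image of $C(T^n)$ and $H_{\L}$ supplies all remaining directions) and that the Gelfand topology on $X$ coincides with the subspace topology of the spectral variety in $T^n\times\R$; this is the standard spectrum-of-a-$C(T^n)$-algebra fibration, but it is what makes the abstract character description concrete. The genuinely geometric input --- and the point where self--adjointness of $H_{\L}$ is indispensable --- is the global continuous ordering of eigenvalues in the last step: for a merely normal $H$ the eigenvalue branches could carry nontrivial monodromy around $D$, and the clean presentation of $X$ as a quotient of the \emph{trivial} cover would break down.
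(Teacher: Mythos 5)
Your proposal is correct, and it reaches the theorem by a route that is dual to, rather than identical with, the paper's. The paper works on the algebra side: it first writes down the $C^*$--algebra of the trivial $k$--fold cover as $\T^n[e_1,\dots,e_k]/R$ with orthogonal projectors $e_i$, then presents $\B$ as $\T^n[H]/(p(H))$ by Cayley--Hamilton, using genericity to show (by evaluating any putative extra relation at a generic character, where the minimal and characteristic polynomials coincide) that $p$ is the \emph{only} relation; the covering structure is then encoded in the $C^*$--morphisms $H\mapsto\sum\lambda_ie_i$ and $H\mapsto\lambda_k$. You instead compute on the spectrum side: you identify $X=\widehat{\B}$ fibrewise over $T^n$, show the fibre over $t$ is the set of distinct eigenvalues of $H(t)$ (your observation that $\chi(H)$ must be a root of the characteristic polynomial at $t$ is exactly what closes this step), realize $X$ as the Bloch variety $\{(t,\lambda):\det(\lambda I_k-H(t))=0\}$, and build the quotient map $\bigsqcup_{j=1}^k T^n\to X$ from the continuously ordered eigenvalues $\lambda_1(t)\le\cdots\le\lambda_k(t)$. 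Your version makes explicit two points the paper passes over quickly: that the Gelfand topology agrees with the subspace topology of the spectral variety, and that the eigenvalues can be chosen as genuine global elements of $C(T^n)$ --- a fact the paper implicitly needs for its map $\phi:H\mapsto\sum\lambda_ie_i$ and which, as you note, rests on self--adjointness of $H$ (for a merely normal $H$ the branches could have monodromy around the degenerate locus). What the paper's presentation buys in exchange is an explicit description of $\B$ by generators and relations, which is the form used downstream in the $K$--theory computations.
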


\begin{proof}
First notice that the trivial $k$--fold cover of $T^n$ has the  $C^*$--algebra $\T^n[e_1,\dots, e_k]/R$
where the $e_i$ are self--adjoint and generate a semi--simple algebra. 
This means that the
relations $R$ are equivalent to the equations  $e_ie_j=\delta_{ij}e_i$ 
and $\sum e_i=1$. Here the $e_i$ can be understood as the projectors
to each of the copies.

Secondly we characterize $\B_0$. It is certainly a quotient of the $C^*$ algebra $\T[H]$ where $H$ is a new
self--adjoint generator which commutes with the previous generators. 
The way to understand its quotient is as follows.
By the theorem of Caley--Hamilton we know that the characteristic 
polynomial $p$ of $H$ annihilates $H$: $p(H)=0$.
We claim this is the only relation. Indeed if there were any other relation $r$, then we could write
$r=r'+r''$ with $r'\in (p(H))$ and $r''$ a polynomial in $H$ of degree less than $k$.
This relation would hold after applying any character $\chi$, i.e.\ $\bar\chi(r)=0$. 
Since $\bar\chi(r')=0$ we also get that $\bar\chi(r'')=0$.
 But generically  there are $k$ distinct
Eigenvalues, so that for generic choices of $\chi$ the minimal polynomial of $\bar\chi(H)$ 
is the characteristic polynomial. Thus as a polynomial in $H$ the degree of $r''$ must be bigger
or equal to $k$ which means that $r''=0$.
Hence $\B_0=\T^n[H]/p[H]$.

We can now give the $C^*$ morphisms corresponding to the geometric continuous maps
\begin{equation}
\xymatrix
{
\bigsqcup_1^k T^n \ar[rr]^{\phi}\ar[dr]_{\pi_1}& &X\ar[dl]^{\pi_2}\\
 &T^n&
}
\end{equation}

Let $\lambda_i\in \T^n$ be the Eigenvalues of $H$ as a matrix with coefficients in the
commutative ring $\T^n$.
Then the map $\phi$ is given by $H\mapsto \sum \lambda_ie_i$.
The maps $\pi_1$ and $\pi_2$ are just the inclusion maps. There are sections $s_k$ of $\pi_1$ given
by sending $e_j\mapsto \delta_{j,k}1$ which give rise to sections $\tilde s_k=\phi \circ s_k$.
The corresponding $C^*$ map is given by $H\mapsto \lambda_k$. From this the claims follow readily.
\end{proof}

\section{Results for the Bravais, Honeycomb and Gyroid Lattices}

\subsection{The Bravais lattice cases}
\subsubsection{The $\Z^n$ case}
In case $\Lambda=\Z^n$, we see that $L=T(\L)=\Z^n$ and $\bgl$ is the
graph
with one vertex and $n$ loops. 

From the graph $\bgl$ we can already  read off that $\pi_1(\gl)=[\F_n,\F_n]$
according to Proposition \ref{fundprop}, 
since the condition is obviously satisfied.
Minimal loops are of length $4$ and there are $n \choose 2$ unoriented loops.

Fixing a  cocycle $\alpha$ by fixing
an anti--symmetric matrix $\Theta$  (recall that this
is equivalent to fixing a constant $B$--field),
the corresponding Harper operator is
just
\begin{equation}
H_{\Z^n}=\sum_{i=1}^n U_{e_i}+U^{-1}_{e_i}=\sum_{i=1}^n U_i+ U_i^*
\end{equation}
where $e_i$ are the standard unit basis vectors of $\R^n$ and 
$U_i$ are the generators of the non--commutative $n$--torus
$\T^n_{\Theta}$.

The algebra generated by the representation of $L=\Z^n$ is just
the non--commutative torus $\T^n_{\Theta}$ and since
 $H\in \T^n_{\Theta}$ the algebra $\BTheta$ is also the non--commutative
torus $\BTheta=\T^n_{\Theta}$.

\subsubsection{Other Bravais Lattices}
If $\Lambda$ is the set of points
of a Bravais lattice, then again $L=T(\L)=\L$. For the
graph $\bar \Gamma_{\Lambda}$ we need the information, which of the
distances between vertices of $\Lambda$ are minimal or we need the additional
data of a graph $\gl$. This information is also crucial in determining $\pi_1$.

Let $e_j:j\in J$ be
the collection of these minimal vectors and fix an orientation
$\vec{e}_j$ for each of them. As $\L=L$ is a subgroup $0\in \L$ and
the minimal vectors are given by the $\lambda\in \Lambda$ with minimal length.
Again fix  a cocycle $\alpha$ by fixing
the  anti--symmetric matrix $\Theta$.
Then 
\begin{equation}
H_{\Lambda}=\sum_{j\in J} U_{\vec{e}_j}+U^*_{\vec{e}_j}
\end{equation}
 
The algebra generated by $L$ is always
$\T^n_{\Theta}$ and since $H_{\L}\in\T^n_{\Theta}$ we again obtain $\BTheta=\T^n_{\Theta}$.

\begin{ex}
The triangular lattice. This is the lattice spanned
by the vectors $e_1=(1,0)$ 
and $e_2=(\frac{1}{2},\frac{\sqrt{3}}{2})$ in $\R^2$.
 In this case the graph $\bgl$ has one vertex and three loops
 with
the six oriented edges corresponding to $\pm e_1, \pm e_2, \pm (e_1-e_2)$.
Hence the condition of Proposition \ref{fundprop} is not met. One can compute the fundamental
group by elementary methods.

The choice of $\Theta$ is given by $\Theta= \theta \left(\begin{matrix}0&1\\-1&0\end{matrix}\right)$.
The Harper operator is given by
\begin{eqnarray}
H_{\Lambda_{\tiny{\mbox{tri}}}}&=&U_{e_1}+U^*_{e_1}+U_{e_2}+U_{e_2}^*+U_{e_1-e_2}+U^*_{e_1-e_2}\nn\\
&=&U_1+U_1^*+U_2+U_2^*+U_3+U_3^*
\end{eqnarray}
where $U_1U_2U_3=e^{i \pi \theta}id$.
Since $H\in \T^2_{\theta}$
the BH algebra is $\BTheta=\T^2_{\theta}$ in the notation of Example \ref{twodex}.
\end{ex}

\subsection{The Honeycomb Lattice}

\subsubsection{Classical geometry}
The honeycomb lattice $\L=\Lhon$ is the two--dimensional lattice specified in \S\ref{honeydef}.
Its quotient  graph $\bgl$ is the graph with two vertices 
and three edges depicted in Figure \ref{thetafig}.

\begin{figure}
\includegraphics[width=.8\textwidth]{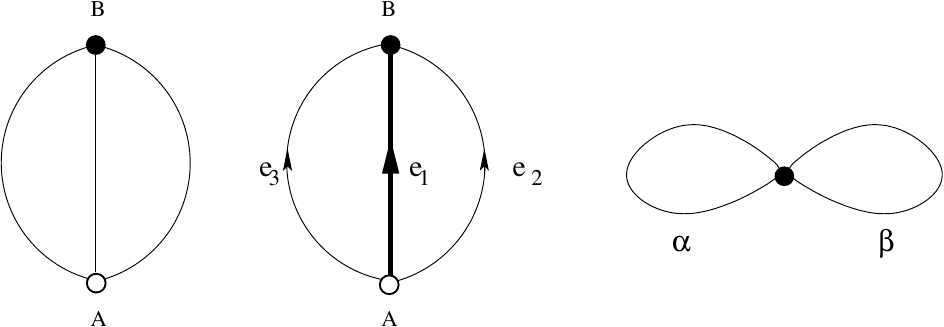}
\caption{\label{thetafig} The graph $\bgl$ for the honeycomb lattice, the spanning tree
 $\tau$ (root $A$, sole edge $e_1$) together with a set of oriented edges, the quotient graph $\bgl/\tau$}
\end{figure}
We set $e_1=(-1,0)$, $e_2=\frac{1}{2}(1,\sqrt{3})$ and $e_3=\frac{1}{2}(1,-\sqrt{3})$.
We choose the rooted spanning tree $\tau$ (root $A$, sole edge $e_1$ as indicated 
in Figure \ref{thetafig}). 
The loops $\alpha$ and $\beta$ lift
as $l_2= e_2(-e_1)$ and $l_3=e_3(-e_1)$, we see that $\vec{l_2}=\frac{1}{2}(-3,\sqrt{3})$ and $\vec{l_3}=\frac{1}{2}(3,\sqrt{3})$. Thus the condition of Proposition \ref{fundprop} is met
and we obtain
\begin{prop}
\begin{equation}
\pi_1(\gl)=[\F_2,\F_2]
\end{equation}
At each point of the lattice there are six directed or three undirected minimal loops of length six.
\end{prop}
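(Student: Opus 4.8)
The first assertion falls straight out of the general theory. Since $\bgl$ has two vertices and three edges and the chosen spanning tree $\tau$ consists of the single edge $e_1$, Proposition \ref{genloopprop} gives $n = |E(\bgl)| - |E(\tau)| = 2$ and $\pi_1(\bgl) = \F_2$. To invoke Proposition \ref{fundprop} the only thing I need is that the lift vectors $\vec{l_2}$ and $\vec{l_3}$ are linearly independent; these have already been computed as $\frac{1}{2}(-3,\sqrt{3})$ and $\frac{1}{2}(3,\sqrt{3})$, which are manifestly not proportional, hence independent over $\R$ and a fortiori over $\Z$. Proposition \ref{fundprop} then delivers $\pi_1(\gl) = [\F_2,\F_2]$, together with the criterion I will use throughout the second part: a loop on $\bgl$ lifts to a loop on $\gl$ precisely when it traverses each edge the same number of times in each direction.

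For the enumeration of minimal loops the plan is to argue combinatorially on $\bgl$, in the spirit of the Gyroid count in \S\ref{loopcalcsec}, with the lifting criterion as the organizing device. The graph $\bgl$ is bipartite with parts $\{A\}$ and $\{B\}$, so every based loop has even length and alternates forward edges $A\to B$ with backward edges $B\to A$. The lifting criterion then says a liftable loop must use each of the three edge types as often forwards as backwards. At length $2$ the only word is $e_i e_i^{-1}$, a backtracking and hence null--homotopic path; at length $4$ the balance condition forces the forward and backward edges to coincide as multisets, and a brief case check shows that every such word contains an adjacent cancelling pair $e_j^{-1}e_j$ and so reduces to a shorter (indeed trivial) loop. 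Thus no genuine minimal loop occurs below length $6$.

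At length $6$ I would exhibit the hexagonal word $e_1 e_2^{-1} e_3 e_1^{-1} e_2 e_3^{-1}$: it is a reduced closed walk that uses each of $e_1,e_2,e_3$ exactly once in each direction, so it lifts, and geometrically it bounds a hexagonal cell of the honeycomb. To count these through a fixed vertex $v$, observe that any such hexagon uses exactly two of the three edges incident to $v$; each of the $\binom{3}{2}=3$ choices of a pair extends to a unique hexagon containing that corner, and the three resulting hexagons are distinct. This gives three undirected minimal loops through $v$, hence six once an orientation is fixed.

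The $\pi_1$ statement requires no real work beyond the independence check. The only points needing care are the length--$4$ exclusion and the claim that each pair of edges at $v$ completes to a unique hexagon; I expect the latter to be the main (though still routine) obstacle, since it is really the statement that minimal liftable loops are in bijection with the hexagonal faces of the tiling meeting $v$.
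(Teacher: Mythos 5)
Your proposal is correct and follows essentially the same route as the paper: the $\pi_1$ claim is read off from Propositions \ref{genloopprop} and \ref{fundprop} via the linear independence of $\vec{l_2},\vec{l_3}$, and the loop count is the same combinatorial enumeration on $\bgl$ (exclude lengths $2$ and $4$, note a minimal liftable loop traverses each edge once in each direction, then count $3\cdot 2=6$ directed loops). Your packaging of the count via the three hexagonal faces meeting at a vertex is just a geometric restatement of the paper's sequential choice of first and returning edge.
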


\begin{proof}
It is clear by inspection that there are no loops of length two and four 
and the loops 
through an oriented edge twice will give length bigger than six.
To get a minimal loop, we need to pass through each 
edge exactly once in each direction, if possible. 
This can be done by fixing the first edge, and then
 choosing among the two left over edges  the second
returning edge. Now everything is fixed. 
One has to leave the vertex by the last edge and one concludes
a cycle of 6 edges by choosing the only possible non--traversed edge oriented 
without using an oriented edge twice. This of course gives the known three
unoriented loops at each vertex.
\end{proof}

\begin{rmk}
In the generators $\alpha,\beta$, these loops are given by the following
elements: $[l_2,l_3]^{\pm1}, [l_2^{-1},l_3]^{\pm1}$ 
and $[l_2,l_3^{-1}]^{\pm 1}$.

It is interesting to note 
that the map $\bgl\to \bgl/\tau$ induces a new 
length function on the free group $\F_2$ that is not symmetric in 
symmetric generators. E.g.\ the commutator $[l^{-1}_2,l_{3}^{-1}]$ has
length $8$ in this metric whereas all other commutators listed above
have length $6$ in the generators $e_i$.
\end{rmk}

\begin{rmk}
There is another observation. Given a loop on $\gl$ it decomposes into blocks
of loops on $\bgl$. Here we could take the 
loop $l_1=e_1(-e_2)\; e_3(-e_1)\; e_2(-e_3)$ 
which decomposes into 3 blocks. Now
cyclically permuting these blocks, we also get a loop on $\gl$ and of course
the inverse of any loop is a loop, so that
 the loop $l_1$ generates all six loops. 
\end{rmk}

\subsubsection{Quantum geometry}
Fixing a constant magnetic field $B=2\pi \formTheta$ amounts
to choosing an anti--symmetric bilinear form $\formTheta$ on $\R^2$.
We will use the induced cocycle both on $L$ and on $T(\L)$.
The matrix $\Theta$ is the matrix of $\formTheta$ with respect to the basis $(f_2,f_3)$ of $L$.
It is completely determined by its value $\formTheta(f_2,f_3)$. The cocycle induced on 
$T(\L)$ will also play a role. It is fixed by the value $\formTheta(e_1,e_2)$.
We fix the notation that 
\begin{equation}
\label{notationeq}
\theta:=\formTheta(f_2,f_3), \quad q:=e^{2\pi i \theta} \quad\text{\it and} \quad
\phi=\formTheta(-e_1,e_2), \quad \chi:=e^{i\pi \phi},
\end{equation}
The values of $\chi$ and $\theta$ are not independent:
\begin{equation}
\theta=-3\phi\quad  \text{\it and} \quad q=\bar\chi^6
\end{equation}
This follows from the elementary calculation 
$\theta=\formTheta(f_2,f_3)=\formTheta(e_2-e_1,e_3-e_1)=
\formTheta(e_2,e_3)+\formTheta(e_2,-e_1)+
\formTheta(-e_1,e_3)+
\formTheta(-e_1,-e_1)=\formTheta(e_2,-e_1)-\phi+\formTheta(-e_1,-e_2)=-3\phi$.

The Hilbert space $\H=l^2(\Lhon)$ splits 

\begin{equation}
\label{hondecomp}
\H_{hon}=\H_A\oplus \H_B
\end{equation}

where $A$ and $B$ are the vertices as indicated in Figure \ref{thetafig}.
The six oriented edges  $\pm e_i,i=1,\dots 3$ of $\bgl$ give rise to three
partial isometry operators $U_{\pm i}$.

$$U_i:=
 \left(\begin{matrix}
0&0\\
U_{e_i}&0\\      \end{matrix} \right), \quad 
U_{-i}:=
\left(\begin{matrix}
0&U_{-e_i}\\    
 0&0\\ \end{matrix} \right)
$$
where $U_{e_i}$ and $U_{-e_i}=U_{e_i}^{-1}=U_{e_i}^{*}$ are the isomorphisms as in \S\ref{partialisopar}.

The Harper Hamiltonian now reads:

\begin{equation}
H_{\L}=\sum_{i=1}^3 U_i+ U_i^{-1}= \left(\begin{matrix}
0&U_{e_1}^{*}+U_{e_2}^{*}+U_{e_3}^{*}\\
U_{e_1}+U_{e_2}+U_{e_3}&0\\      \end{matrix} \right),
\end{equation}

In order to put it into a matrix form with coefficients in $T^2_{\theta}$,
we again choose the spanning tree $\tau$  as indicated in Figure \ref{thetafig}.

The Hamiltonian now reads

\begin{equation}
H_{\L,\tau}=\left(\begin{matrix}
0&1+U^*+V^*\\
1+U+V&0
\end{matrix}
\right)
\in M_2(\T^2_{\theta})
\end{equation}
where we have used the operators $U:=\chi U_{f_2}$ and $V=\bar\chi U_{f_3}$. We have
that 

\begin{equation}
\label{uvcomeq}
UV=q VU \mbox { or  }UVU^*=qV
\end{equation}

The symmetry algebra generated by the translations $U_{f_i}$ is isomorphic to
$\T^2_{\theta}$ on $\H_A\oplus\H_A$. It acts via the representation defined by
$\rho(U_{f_2})=diag(U_{f_2},\chi^2U_{f_2})=diag(\bar\chi U,\chi U)$ and 
$ \rho(U_{f_3})=diag(U_{f_3},\bar\chi^2U_{f_3})=diag(\chi V,\bar \chi V) $.

\begin{prop}
\label{hpropnc}
If $q \neq \pm 1$ then $\BTheta=M_2(\T^2_{\theta})$ and
hence
is Morita equivalent to $\T^2_{\theta}$.
\end{prop}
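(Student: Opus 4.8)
The plan is to prove the sharper statement $\B_\theta = M_2(\T^2_\theta)$; the Morita equivalence with $\T^2_\theta$ is then automatic, since any full matrix algebra $M_k(\mathcal A)$ is Morita equivalent to $\mathcal A$. Write $s:=1+U+V$, so that $H_{\L,\tau}=\left(\begin{smallmatrix}0&s^*\\ s&0\end{smallmatrix}\right)$, and note that $\B_\theta$ is $\Z/2$--graded: its symmetry generators $A:=\rho(U_{f_2})=\mathrm{diag}(\bar\chi U,\chi U)$ and $B:=\rho(U_{f_3})=\mathrm{diag}(\chi V,\bar\chi V)$ are diagonal (even), while the Harper operator $H$ is off--diagonal (odd). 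The diagonal subalgebra $\mathcal D$ generated by $A,B$ is a twisted diagonal copy $\{\mathrm{diag}(x,\sigma(x)):x\in\T^2_\theta\}$ of $\T^2_\theta$, where $\sigma$ is the automorphism $U\mapsto\chi^2U$, $V\mapsto\chi^{-2}V$. I would first reduce everything to a single task: produce one \emph{nonzero} pure corner $\left(\begin{smallmatrix}0&0\\ \gamma&0\end{smallmatrix}\right)\in\B_\theta$. Indeed the set of such $\gamma$ is closed under $\gamma\mapsto\sigma(x)\gamma x'$ (the $\mathcal D$--bimodule action), hence is a closed two--sided ideal of $\T^2_\theta$; once it is all of $\T^2_\theta$ one gets every lower--left corner, taking adjoints gives every upper--right corner, and multiplying corners gives the two diagonal corners, so all matrix units lie in $\B_\theta$.

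The crux, and the step I expect to be the main obstacle, is manufacturing that first pure corner; this is exactly where $q\neq\pm1$ enters. Using $UVU^*=qV$ (equivalently $U^*VU=q^{-1}V$), I would conjugate $H$ by the even unitaries $A$ and $A^*$ to get three odd elements $H,\,AHA^*,\,A^*HA\in\B_\theta$ whose lower--left entries are $1+U+V$, $\chi^2(1+U+qV)$ and $\bar\chi^2(1+U+q^{-1}V)$, with the corresponding adjoint expressions in the upper--right. I would then solve a small linear system over $\C$ for $c_0,c_1,c_2$ killing the upper--right entry of $c_0H+c_1AHA^*+c_2A^*HA$. Since in each of the three upper--right entries the coefficient of $1$ equals that of $U^*$, this amounts to only two equations in three unknowns, so a nonzero solution exists, and the combination becomes a pure lower--left corner. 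A direct computation shows its constant coefficient equals, up to a unit, $(p-1)^2(p+1)$ with $p:=\chi^2$; because $q=\bar\chi^6=p^{-3}$, this vanishes precisely when $p=\pm1$, i.e.\ when $q=\pm1$. Thus for $q\neq\pm1$ the corner is genuinely nonzero. The honest difficulty here is the scalar bookkeeping: the factors $\chi^2,\bar\chi^2$ break the symmetry between the two off--diagonal entries, so subtracting $H$ from a single conjugate is not enough and the three--term combination is what isolates a corner.

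Finally I would show the resulting ideal is all of $\T^2_\theta$. The corner produced has the form $\gamma_0=a(1+U)+bV$ with $a\neq0$; conjugation by $U$ fixes $1+U$ and scales $V$ by $q$, so $\gamma_0-U\gamma_0U^*=b(1-q)V$ recovers $V$ when $b\neq0$ (if $b=0$ then already $\gamma_0\propto 1+U$), hence $1+U$, and conjugation by $V$ then recovers $U$ and $1$. This uses only $q\neq1$ and is insensitive to whether $\theta$ is rational, so no appeal to simplicity of $\T^2_\theta$ is needed. Hence the ideal is all of $\T^2_\theta$, every lower--left corner lies in $\B_\theta$, and by the reduction above $\B_\theta=M_2(\T^2_\theta)$, so $\B_\theta$ is Morita equivalent to $\T^2_\theta$. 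I expect the only subtlety beyond the corner construction to be verifying that this extraction argument, rather than an appeal to simplicity, genuinely covers the rational--$\theta$ case, so that the single hypothesis $q\neq\pm1$ really suffices.
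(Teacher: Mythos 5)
Your proof is correct, but it takes a genuinely different route from the paper's. The paper argues by a hands-on five-step computation: it builds $X_1$ with lower-left entry $1$ (using $q\neq 1$), subtracts $X_1\rho(1+U+V)$ from $H$ to get a pure corner $X_2$ with explicitly listed coefficients, and then whittles $X_2$ down to $E_{12}$ by three further conjugations (using $q\neq -1$ at the last step). You instead exploit structure: the symmetry subalgebra is the twisted diagonal $\{\mathrm{diag}(x,\sigma(x))\}$, an isomorphic copy of $\T^2_{\theta}$, so the lower-left entries of pure corners in $\B_{\theta}$ form a closed two-sided ideal of $\T^2_{\theta}$ (since $\sigma$ is an automorphism, the left multipliers exhaust the algebra), and everything reduces to showing this ideal contains $1$. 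Your first corner comes from a rank argument (three conjugates of $H$, two linear conditions on three coefficients), and your extraction showing the ideal is everything is short and, like the paper's argument, avoids any appeal to simplicity, so rational $\theta$ is genuinely covered. One point to tighten: with the normalized solution of your $2\times 3$ system the constant coefficient is indeed $(p-1)^2(p+1)$ up to a unit (an unnormalized cross-product solution carries an extra harmless factor $p^3-1=\bar q-1$), but it does not vanish ``precisely when $q=\pm1$''; the correct statement is that $p=\pm1$ forces $q=p^{-3}=\pm1$, so the hypothesis $q\neq\pm1$ excludes the vanishing, which is all you need. Your framing buys a cleaner explanation of why a single nonzero corner suffices; the paper's computation produces the elementary matrices explicitly, a format it then reuses in the more delicate boundary cases $q=\pm1$ treated in the following propositions.
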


\begin{proof}
The strategy of proof is to show that the elementary matrix $E_{12}\in \BTheta$.
In case this happens, we get that all elementary
matrices are in $\BTheta$, since $E_{21}=E_{12}^*, E_{11}=E_{12}E_{21}$ and 
$E_{22}=E_{21}E_{12}$ and then $\BTheta= M_2(\T^2_{\theta})$.

The method to obtain $E_{12}$ is by direct calculation using the commutation relations
(\ref{uvcomeq}).

The first step is to set $X=\rho(\bar\chi^2U)\rho(\bar\chi^2V^*)H\rho(V)\rho(U^*)$
and then set $X_1=\frac{1}{1-\bar q}[H-\bar q X]$. Note we are using the assumption
$q\neq 1$.
$$
X_1=\left(\begin{matrix}0&\frac{1}{1-\bar q}[(1-\bar\chi^2)+(1-\bar\chi^4)U^*+(1-\chi^4)V^*\\1&0\end{matrix}\right). 
$$
In step two, we set $X_2=H-X_1\rho(1+U+V)=\left(\begin{matrix}0&**\\0&0\end{matrix}\right) $,where explicitly
$$
**=BU^*+CV^*+DU+EV+FU^*V+GV^*U
$$
with 
$$
B=C=\frac{\chi^4(1-\chi^2)}{1-\chi^6},
D=\frac{1-\chi^2}{1-\chi^6},
E=\frac{\bar\chi^2(\bar\chi^2-1)}{1-\chi^6}, 
F=\frac{\chi^2(1-\bar\chi^4)}{1-\chi^6 },
G=\frac{\chi^2(\chi^4-1)}{1-\chi^6}
$$
notice that $\chi^6=\bar q \neq 1$ by assumption.
And the coefficients do not vanish if $\chi^4\neq 1$ and $\chi^2\neq1$.
But $\chi^2=1$ implies $\chi^6=1$ and $\chi^4=1$ implies $\chi^{12}=1$ and 
so $\chi^6=q=\pm 1$.
So if $q\neq \pm 1$ then all the coefficients are non-zero.
 We can obtain $E_{12}$ in several steps
by setting $X_3=X_2-\rho(U)X_2\rho(U^*), X_4=X_3-\rho(V)X_3\rho(V^*)$,
 $X_5=\rho(U)X_4\rho(U^*)-qX_4$ 
and finally obtain $E_{12}=X_5-\rho(\frac{1}{(\bar q-q)(1-\bar q^2)}U^*V)X_5$,
where for the last step we need the assumption $q\neq -1$.
\end{proof}

The situation for $q=-1$ is more complicated. Notice
that in this case $\T^2_{\frac{1}{2}}$ is not simple. 
For instance there is a $*$ homomorphism of $\phi:\T^2_{\frac{1}{2}}\to
\Cliff(-Id_2) \otimes\C$
where $\Cliff(-Id_2)\otimes \C$ is the Clifford algebra over $\C$ of the
standard quadratic form given by the negative of the $2\times 2$ identity matrix $Id_2$. If $i$ and
$j$ are the usual basis vectors then $\phi(U)=i,\phi(V)=j$.

We let ${\mathscr J}:=ker\phi=<1+U^2,1+V^2>$.

There is also an algebra involution $\wedge$ on $\T^2_{\frac{1}{2}}$ given by $\hat U=-V,\hat V=-V$.

\begin{prop} 
If $q=-1$ and $\chi^4\neq 1$  then  $\BTheta=M_2(\T^2_{\frac{1}{2}})$.
If $q=-1$ and $\chi^4=1$ then $\BTheta$ is the subalgebra of $M_2(\T^2_{\frac{1}{2}})$
given by matrices of the form
\begin{equation}
\label{algeq}
\left(
\begin{matrix}
a&b\\
\hat b&\hat a
\end{matrix}\right)
+J \quad a,b\in \T^2_{\frac{1}{2}}, J\in M_2({\mathscr J})
\end{equation}
\end{prop}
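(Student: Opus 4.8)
Throughout write $\theta=\tfrac12$, so that the coefficient algebra is $\T^2_{\pi}$ with $UV=-VU$ and $q=-1$. The plan is to exploit that $\T^2_{\pi}$ is a \emph{homogeneous} $C^*$--algebra: its centre is $Z=C^*(U^2,V^2)\cong C(\widehat Z)$ with $\widehat Z\cong\T^2$ (coordinates the unitaries $s=U^2,\ t=V^2$), and $\T^2_{\pi}$ is the section algebra of a bundle with fibre $M_2(\C)$ over $\widehat Z$. In this picture the ideal $\mathscr J=\ker\phi=\langle 1+U^2,1+V^2\rangle$ is exactly the ideal of sections vanishing at the single point $p=(-1,-1)$ (where $s=t=-1$), and $\T^2_{\pi}/\mathscr J\cong M_2(\C)$ is the fibre over $p$. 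Consequently $M_2(\T^2_{\pi})$ is the section algebra of an $M_4(\C)$--bundle over $\widehat Z$, and I would determine $\B$ by computing its fibre $\B_z:=\Pi_z(\B)\subseteq M_4(\C)$ at each $z\in\widehat Z$ (with $\Pi_z$ evaluation) and then reassembling.

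The fibrewise computation is governed by Schur's lemma. Over $z=(s,t)$ the two diagonal blocks carry $2$--dimensional representations $\rho_1^z,\rho_2^z$ of the symmetry torus generated by $\rho(U_{f_2})=\mathrm{diag}(\bar\chi U,\chi U)$ and $\rho(U_{f_3})=\mathrm{diag}(\chi V,\bar\chi V)$; their central characters are $(\bar\chi^2 s,\chi^2 t)$ and $(\chi^2 s,\bar\chi^2 t)$. Since $s,t\in U(1)$ are nonzero, these agree (so the blocks are equivalent irreducibles) if and only if $\chi^2=\bar\chi^2$, i.e. $\chi^4=1$. Hence, \emph{if $\chi^4\neq1$}, the two blocks are inequivalent irreducibles at every fibre, so by Schur the intertwiner $\Pi_z(H)$ together with the blocks generates all of $M_4(\C)$; thus $\B_z=M_4(\C)$ for all $z$ and $\B=M_2(\T^2_{\pi})$, which is the first assertion.

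For $\chi^4=1$ one has $\chi^2=-1$, and now both blocks carry the \emph{same} fibre irrep $\pi_z$ at every $z$, so the block algebra is $\pi_z\otimes I_2\cong M_2(\C)\otimes I_2$ with commutant $I_2\otimes M_2(\C)$. The generated fibre is then $M_2(\C)\otimes D_z$, where $D_z\subseteq M_2(\C)$ is the algebra produced by the ``multiplicity part'' of $\Pi_z(H)$. The coefficient computation of Proposition \ref{hpropnc}, specialised to $q=-1,\ \chi^2=-1$ (where it degenerates to $B=C=D=E=1$ and $F=G=0$, so that the element $X_2$ becomes $E_{12}(U+U^*+V+V^*)$ with $U+U^*,V+V^*\in\mathscr J$), should show that $D_z=M_2(\C)$ for every $z\neq p$ while $D_p$ is only the $2$--dimensional commutative subalgebra of matrices fixed by the involution $\wedge$ (the sign change $U\mapsto-U,\ V\mapsto-V$, which on the fibre $M_2(\C)=\Cliff(-Id_2)\otimes\C$ is the inner automorphism intertwining $\rho_1^p$ and $\rho_2^p$). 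Thus $\B_z=M_4(\C)$ for $z\neq p$ and $\B_p$ is the image of the $\wedge$--symmetric matrices. Reassembling, $\B=\{x\in M_2(\T^2_{\pi}) : \Pi_p(x)\in\B_p\}$; since $M_2(\mathscr J)$ is precisely the sections vanishing at $p$ and the $\wedge$--symmetric matrices form a subalgebra (immediate from $\wedge$ being an algebra involution), this set is exactly $\{\begin{pmatrix} a&b\\ \hat b&\hat a\end{pmatrix}+J\}$. Concretely ``$\subseteq$'' is the remark that $H,\ P:=\rho(U_{f_2}),\ Q:=\rho(U_{f_3})$ are $\wedge$--symmetric modulo $M_2(\mathscr J)$, and ``$\supseteq$'' follows by first showing $M_2(\mathscr J)\subseteq\B$ (manufacture $E_{12}(1+U^2),E_{12}(1+V^2)$ from $X_2$ by conjugating and left--translating with $P,Q$, spread them over all of $\mathscr J$ by the two--sided bimodule action, and pass to the remaining corners by adjoints and the factorisation $\mathscr J=\mathscr J^2$) and then checking, finite--dimensionally, that the images of $H,P,Q$ generate all $\wedge$--symmetric matrices modulo $M_2(\mathscr J)$.

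The conceptual skeleton---fibering over $\widehat Z$ and applying Schur away from coincidences---is painless; the real obstacle is the analysis of the \emph{single} degenerate fibre at $p$ when $\chi^4=1$. There one must prove that $\B_p$ is \emph{exactly} the $\wedge$--symmetric subalgebra: not larger (else the two cases collapse and one wrongly obtains the full matrix algebra) and not smaller (else the reverse inclusion fails). This is where the vanishing of $F,G$ and the coincidence $B=C=D=E$ at $(q,\chi^2)=(-1,-1)$ enter, and where $\wedge$ must be identified with the inner automorphism $\mathrm{Ad}(k)$ of the fibre quaternions. A secondary technical point is justifying the ``fibres determine the global algebra'' reconstruction, which relies on the continuous--trace structure of $\T^2_{1/2}$ and on a semicontinuity argument ensuring that the fibre dimension can drop only on the closed degeneracy locus $\{p\}$.
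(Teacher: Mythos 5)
Your route is genuinely different from the paper's: you fibre $M_2(\T^2_{\pi})$ over the spectrum $\widehat Z\cong \T^2$ of the centre $C^*(U^2,V^2)$ and argue representation--theoretically fibre by fibre, whereas the paper never invokes the homogeneous structure of the rational torus and instead manufactures explicit elements ($E_{21}$ via $\tilde X_2,\tilde X_3$ when $\chi^4\neq 1$; the matrix $I=E_{12}+E_{21}$ and the generators $(1+U^2)E_{12}$, $(1+V^2)E_{12}$ of $M_2({\mathscr J})$ when $\chi^4=1$) by short chains of algebraic manipulations, and then verifies the two inclusions $\B\subseteq\B'$ and $\B'\subseteq\B$ directly. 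Your structural picture is correct and illuminating: the two diagonal blocks carry inequivalent fibre irreps exactly when $\chi^4\neq1$, and ${\mathscr J}$ is precisely the ideal of the single coincidence point $p=(-1,-1)$.

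There is, however, a genuine gap at the reassembly step, and it hits your first case hardest. Knowing that $\Pi_z(\B)=M_4(\C)$ for every $z$ does \emph{not} imply $\B=M_2(\T^2_{\pi})$: the subalgebra $\{f\in C([0,1],M_2(\C)):f(0)=f(1)\}$ is fibrewise all of $M_2(\C)$ yet proper. To conclude you must also rule out that $\B$ identifies distinct fibres --- either by Glimm's Stone--Weierstrass theorem after checking that $\B$ separates the pure states of $M_2(\T^2_{\pi})$, or by exhibiting inside $\B$ a point--separating subalgebra of the centre; note that $\rho(U_{f_2})^2=diag(\bar\chi^2,\chi^2)U^2$ is not central when $\chi^4\neq1$, so this is not automatic. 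The semicontinuity you invoke only controls where the fibre can shrink, not whether distinct fibres are glued. A second, lesser, issue is that the decisive computations are asserted rather than performed: that $D_z=M_2(\C)$ for all $z\neq p$ (``should show''), and the production of $(1+U^2)E_{12}$ and $(1+V^2)E_{12}$ together with their spreading to all of $M_2({\mathscr J})$ --- which is exactly the content of the paper's explicit $Y_3,\tilde Y_3$ step. Until the separation issue is settled and these computations are carried out, the argument is incomplete; the paper's direct two--inclusion proof sidesteps the global reconstruction problem entirely.
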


\begin{proof}
In the case $q=-1$ we can at first proceed as in the proof of 
Proposition \ref{hpropnc}.
As $X_1$ we obtain 
$$X_1=\left(
\begin{matrix}
0&\frac{1}{2}[(1- \bar\chi^2)+(1-\chi^4)(U^*+V^*)]\\
1&0
\end{matrix}\right)
$$
Here the two cases split dependent on whether $\chi^4=1$ or $\chi^4\neq 1$.

We will deal with the case $\chi^4\neq 1$ first.
We set $\tilde X_2=\tilde X_1-\rho(U)\tilde X_1\rho(\chi^2 U^*)$, 
$\tilde X_3=\tilde X_2+\rho(U)\tilde X_2\rho(\chi^2 U^*)$. 
Then $\frac{1}{(1-\chi^4)(1+\chi^4)}\tilde X_3=E_{21}$
and hence we get $E_{12}=E_{12}^*\in \BTheta$ and hence $\BTheta$ is the full matrix
algebra.

In case $q=-1$ and $\chi^4=1$ then since $q=\bar\chi^6=-1$, we get 
$\bar\chi^2=-1$ and  $X_1=I=E_{12}+E_{21}\in \BTheta$.
Furthermore, we get that 
$$
X_2=
 \left(
\begin{matrix}
0&1+U^*+V^*-U-V]\\
1&0
\end{matrix}\right),
$$
Setting 
$Y_{3}:= \frac{1}{2}\rho(U)[X_2+\rho(U)X_2\rho(-U^*)]$ and 
$\tilde Y_3= \frac{1}{2}\rho(V)[X_2-\rho(U)X_2\rho(-U^*)$
we get that
$$Y_3= \left(
\begin{matrix}
0&1+U^2\\
0&0
\end{matrix}\right)\in \BTheta,
\quad
\tilde Y_3= \left(
\begin{matrix}
0&1+V^2\\
0&0
\end{matrix}\right)\in \BTheta
$$

Let ${\mathscr B}'$ be the algebra above given in  (\ref{algeq}). 
It is easy to check that
$\B'$ is indeed a $C^*$ algebra since $M_2({\mathscr J})$ is a two
sided  ideal
of $M_2(T^2_{\frac{1}{2}})$ and $\wedge$ commutes with $*$.
Now $U^*+U$ and $V^*+V$ are both in ${\mathscr J}$
so that 
$$
H=
\left(\begin{matrix}
0&1+U^*+V^*\\
1-U^*-V^*&0\end{matrix}\right)
+\left(\begin{matrix}
0&0\\
U+U^*+V+V^*&0\end{matrix} \right)\in \B' 
$$
Furthermore certainly the operators of $L$ are in
$\B'$ so that the inclusion $\BTheta\subset \B'$ holds.

On the other hand, since $L\subset \BTheta$ all the matrices  $diag(a,\hat a)$  are in $\BTheta$
and by the above $I\in \BTheta$  so that  all matrices of the form $\left(
\begin{matrix}
a&b\\
\hat b&\hat a
\end{matrix}\right)$ are in $\BTheta$.
Furthermore since  the matrices $(1+U^2)E_{21}$ and analogously
$(1+V^2)E_{21}$ are
in $\BTheta$, taking products with $I=E_{12}+E_{21}$, we
obtain that all the matrices of  $ M_2({\mathscr J})\subset \BTheta$. Hence
that $\B'\subset \BTheta$ as claimed.
\end{proof}

We now analyze the case of $q=1$. Let $(x)$ be the principal (algebra) ideal generated by
$x$ in $\T^2_0$. Set $J_1=(1+U+V)$ and $J_2=(1+U^*+V^*)$ and $J_{12}=J_1J_2$.

\begin{prop}
\label{hpropcomm}
If $q=1$ and $\chi=\pm 1$ then 
$\BTheta=C^*(X)$ where $X$ is the double cover of the
torus $S^1\times S^1$
ramified at the points $(e^{2\pi i \frac{1}{3}},e^{2\pi i
  \frac{2}{3}})$
and $(e^{2\pi i \frac{1}{3}},e^{2\pi i
  \frac{2}{3}})$.

If $q=1$ and $\chi\neq \pm 1$ 
then $\BTheta$ is equal to the matrix $C^*$--subalgebra of $M_2(\T^2_0)$
given by matrices in the $C^*$--subalgebra

\begin{equation}
\label{almostcomeq}
\rho(\T^2)+
\left(\begin{matrix}J_{12}&J_2\\J_1&J_{12} \end{matrix}\right)
\end{equation}
\end{prop}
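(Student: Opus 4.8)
The plan is to treat the two cases separately, using first the observation that the hypothesis $q=1$ forces $\chi$ to be a sixth root of unity (since $q=\bar\chi^6$), so that $\chi\neq\pm1$ is equivalent to $\chi^2\neq1$ and $\chi^4\neq1$, whereas $\chi=\pm1$ is the only way the twist in $\rho$ can collapse to a scalar.

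\emph{The case $\chi=\pm1$.} Here $\bar\chi=\chi$, so $\rho(U_{f_2})=\mathrm{diag}(\chi U,\chi U)$ and $\rho(U_{f_3})=\mathrm{diag}(\chi V,\chi V)$, and hence $\rho(\T^2_0)$ consists exactly of the central scalar matrices $\mathrm{diag}(a,a)$, $a\in\T^2_0$. Since $q=1$ makes $\T^2_0$ commutative and $H$ self--adjoint, $\B$ is generated by a central copy of $\T^2_0$ together with one self--adjoint element, which is precisely the setting of Theorem \ref{commthm} with $n=k=2$. It then remains only to identify the ramification. Writing $g=1+U+V$, the characteristic polynomial of $H$ over the commutative ring $\T^2_0$ is $\lambda^2-g^*g$, so the eigenvalues are $\pm\sqrt{g^*g}$ and the degenerate locus is the zero set of $g$ on $S^1\times S^1$. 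Solving $1+e^{is}+e^{it}=0$ produces exactly the two listed points, so $X$ is the double cover branched there.

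\emph{The case $\chi\neq\pm1$.} I would prove equality with the algebra $\B'$ of (\ref{almostcomeq}) by a double inclusion. For $\B\subseteq\B'$ the only things to check are that the generators lie in $\B'$ and that $\B'$ is a $C^*$--algebra. The off--diagonal entries of $H$ are $1+U+V\in J_1$ and $1+U^*+V^*\in J_2$, so $H\in\B'$, while $\rho(\T^2_0)\subseteq\B'$ by construction; closure under multiplication follows from $J_1J_2=J_2J_1=J_{12}$ and $J_{12}\subseteq J_1\cap J_2$, and $\B'$ is $*$--closed because $J_1^*=J_2$ and $J_{12}^*=J_{12}$. The reverse inclusion $\B'\subseteq\B$ is the heart of the matter, and the idea is to separate the two off--diagonal blocks (which the scalar case cannot do). For a monomial $m=U_{f_2}^kU_{f_3}^l$ one has $\rho(m)=\mathrm{diag}(s_1U^kV^l,s_2U^kV^l)$ with scalars $s_1=\bar\chi^k\chi^l$, $s_2=\chi^k\bar\chi^l$, and both $H\rho(m)$ and $\rho(m)H$ are off--diagonal. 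A combination $\alpha H\rho(m)+\beta\rho(m)H$ can be chosen to kill the lower--left entry, and the surviving upper--right coefficient is a nonzero multiple of $U^kV^l(1+U^*+V^*)$ precisely when $s_1^2\neq s_2^2$, i.e.\ when $\bar\chi^{2k}\chi^{2l}\neq\pm1$; for $m=U_{f_2}$ this reads $\chi^2\neq\pm1$, which holds. Applying $\rho(U_{f_2}^{-1})$ normalizes this to the isolated element $E_{12}(1+U^*+V^*)\in\B$. Since $a\mapsto a(\bar\chi U,\chi V)$ is an automorphism of the commutative torus $\T^2_0$, left multiplication of $E_{12}(1+U^*+V^*)$ by $\rho(\T^2_0)$ sweeps out all of $E_{12}\,\T^2_0(1+U^*+V^*)$, whose closure is $E_{12}J_2$; taking adjoints gives $E_{21}J_1\subseteq\B$. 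The products $(E_{12}J_2)(E_{21}J_1)$ and $(E_{21}J_1)(E_{12}J_2)$ then yield $E_{11}J_{12}$ and $E_{22}J_{12}$, and together with $\rho(\T^2_0)\subseteq\B$ these exhaust $\B'$.

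The main obstacle is the separation step in the reverse inclusion: isolating $E_{12}(1+U^*+V^*)$ requires $\chi^2\neq1$ and $\chi^4\neq1$, and it is exactly the failure of this at $\chi=\pm1$ that collapses $\rho(\T^2_0)$ to scalars and forces the genuinely different, commutative cover description of the first case. A secondary point needing care is the identification of the closures $\overline{\T^2_0(1+U^*+V^*)}=J_2$ and of the closed ideal product $J_1J_2=J_{12}$, that is, that the algebraically generated pieces are norm--dense in the stated principal ideals; this is where commutativity of $\T^2_0$, so that the one--sided and two--sided principal ideals coincide, is used.
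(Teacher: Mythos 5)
Your proof is correct and follows essentially the same route as the paper: the $\chi=\pm1$ case is reduced to Theorem \ref{commthm} with the ramification locus identified as the zero set of $1+U+V$, and the $\chi\neq\pm1$ case is a double inclusion whose key step --- killing one off-diagonal block via a combination of $H$ and its $\rho$-twisted conjugate to isolate $E_{12}(1+U^*+V^*)$, then sweeping out $E_{12}J_2$, $E_{21}J_1$ and $E_{11}J_{12}$, $E_{22}J_{12}$ --- is algebraically the same computation the paper performs with $\frac{1}{1-\chi}\bigl[H-\rho(U)H\chi^2\rho(U^*)\bigr]$. Your explicit attention to the norm-density of the algebraically generated pieces in the closed ideals $J_1$, $J_2$, $J_{12}$ fills in a detail the paper leaves implicit, but does not change the argument.
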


\begin{proof}
The first statement follows from Theorem \ref{commthm} for $\chi=1$. 
The Eigenvalues of $H$ in
 $\T^2_0$ are $\pm\sqrt{(1+U+V)(1+U^*+V^*)}$, which exist by continuous 
operator calculus as the operator
under the square root is self--adjoint operator of the form $AA^*$ 
and hence has non--negative real spectrum.
The points of the space $X$ equivalent to $\B_0$ where these two Eigenvalues coincide
are given by precisely the points above.

For $\chi=-1$, we remark that $\BTheta=\B_0$, by using the involution $U\mapsto -U, V\mapsto -V$.

In case that $q=1$, but $\chi\neq \pm1$, we actually know 
that $\chi$ is a sixth root of unity and  $\chi^2=\zeta_3=e^{2\pi i \frac{1}{3}}$
or $\chi^2=\zeta_3^2$ and in these cases $\chi^4=\chi\neq 1$.
Let $\B'$ be the algebra of (\ref{almostcomeq}). Since $J_1^*=J_2$ and both
of them are algebra ideals it is clear that $\B'$ is a $C^*$--subalgebra. 
Then it is clear that $\BTheta\subset \B'$.
In order to prove the reverse inclusion, we notice that 
$\frac{1}{1-\chi}[H-\rho(U)H\chi^2\rho(U^*)]=(1+U+V)E_{21}\in\BTheta$. Hence also 
$(1+U^*+V^*)E_{21}\in\BTheta$ as well as $(1+U+V)(1+U^*+V^*)E_{11}\in \BTheta$ and 
$(1+U+V)(1+U^*+V^*)E_{22}\in \BTheta$. This together with the action of $L$ shows, $\B'\subset \BTheta$. 
\end{proof}

In the notation of equation (\ref{notationeq}):
\begin{thm}
If $q\neq\pm 1$ 
the algebra $\BTheta$ has the $K_0$ group
$\Z\oplus \Z$. If $\chi=\pm 1$ then $K_0(\BTheta)=\Z^3$.
\end{thm}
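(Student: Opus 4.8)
The plan is to split into the two stated regimes and, in each, reduce to a $K$-theory computation for an algebra whose structure has already been determined.

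For $q\neq\pm 1$, Proposition \ref{hpropnc} gives $\B_q=M_2(\T^2_\theta)$, so $\B_q$ is Morita equivalent to $\T^2_\theta$. Since operator $K$-theory is a Morita invariant, $K_0(\B_q)\cong K_0(\T^2_\theta)$. The $K$-theory of the noncommutative $2$-torus is classical \cite{Rieffel,PV}: $K_0(\T^2_\theta)\cong\Z\oplus\Z$ for every $\theta$, irrespective of rationality. This immediately yields $K_0(\B_q)=\Z\oplus\Z$. This half is routine once the matrix identification of Proposition \ref{hpropnc} is in hand.

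For $\chi=\pm 1$ one has $q=\bar\chi^6=1$, so by Proposition \ref{hpropcomm} (equivalently Theorem \ref{commthm}) the algebra is commutative, $\B=C^*(X)$, and $X$ is the double cover of $T^2=S^1\times S^1$ realized as a quotient of the trivial double cover $T^2\sqcup T^2$ with the two sheets identified exactly over the degenerate locus. The degenerate points are the solutions of $1+u+v=0$, namely the two points $(\zeta_3,\zeta_3^2)$ and $(\zeta_3^2,\zeta_3)$ with $\zeta_3=e^{2\pi i/3}$. Thus $X$ is, topologically, two copies of $T^2$ glued at two points $p_1,p_2$, and since $K_0$ of a commutative $C^*$-algebra is the topological $K^0$ of its spectrum, the remaining task is to compute $K^0(X)$. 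To do this I would first fix the homotopy type of $X$: gluing the two tori at the single point $p_1$ produces the connected wedge $T^2\vee T^2$, and identifying the further pair $p_2^{(1)}\sim p_2^{(2)}$ inside a connected space $Z$ gives a space homotopy equivalent to $Z\vee S^1$ (attach an arc between the two points and collapse it). Hence $X\simeq T^2\vee T^2\vee S^1$; as a consistency check the Euler characteristic is $0+0+0-2=-2$, which agrees with collapsing four points of $T^2\sqcup T^2$ into two. Using additivity of reduced $K$-theory on wedges together with $\tilde K^0(T^2)\cong\Z$ and $\tilde K^0(S^1)=0$, one gets $\tilde K^0(X)\cong\Z\oplus\Z$, whence $K_0(\B)=K^0(X)=\tilde K^0(X)\oplus\Z\cong\Z^3$.

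The main obstacle is the second regime: correctly pinning down the homotopy type of the singular space $X$. The points to verify are that the identifications occur at precisely the two isolated degenerate points—so $X$ is genuinely two tori joined at two points rather than along a larger locus—and the standard fact that identifying two points of a connected space is, up to homotopy, the same as wedging on a circle. An alternative that sidesteps the homotopy identification is to run the six-term exact sequence in $K$-theory for the extension $0\to C_0(X\setminus\{p_1,p_2\})\to C(X)\to\C^2\to 0$; this also yields $\Z^3$ but requires tracking the connecting maps, so I would present the wedge computation as the primary argument and keep the exact sequence only as a cross-check.
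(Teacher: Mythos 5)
Your proposal is correct and follows essentially the same route as the paper: the case $q\neq\pm 1$ is reduced via Proposition \ref{hpropnc} and Morita invariance to $K_0(\T^2_\theta)=\Z\oplus\Z$, and the case $\chi=\pm 1$ is reduced via Proposition \ref{hpropcomm} to the topological $K$-theory of two $2$-tori glued at two pairs of points. The only difference is that you actually carry out the computation the paper merely asserts, by identifying the degenerate locus $1+u+v=0$ and the homotopy type $X\simeq T^2\vee T^2\vee S^1$, which is a correct and welcome filling-in of detail.
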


\begin{proof}
For $\alpha\neq 1 $ this directly follows from Proposition
\ref{hpropnc} below. For the commutative case $q=1,\chi=\pm 1$ 
this follows from
the fact that the double cover of the torus obtained by identifying
two pairs of points of two tori
has the corresponding $K_0$ group.
\end{proof}

\subsection{The Gyroid}

\subsubsection{The lattice and sublattice Hilbert spaces}
The lattice $\gp$ has the Hilbert space $\H_{\gp}=l^2(\gp)$. 

The subspace $\H_{\gp}$ decomposes naturally into  subspaces $\H_i$
where $\H_i=\H_{\Z^3(v_i)}$ and $\Z^3(v_i)$ denotes the set of all
translates under $\Z^3$ of $v_i$. 

Thus
$\H_{\Gamma}=\bigoplus_{i=0}^7 \H_i$.
This is the composition corresponding to $\cube$.

In order to write  down the Hamiltonian effectively, we will put
together the summands in pairs.
\begin{equation}
\label{decompeq}
 \H_{\Gamma}=\H_A\oplus \H_B\oplus \H_C\oplus \H_D
\end{equation}

\begin{equation}
\H_A=\H_0\oplus \H_6, \quad \H_B=\H_1\oplus \H_7, \quad
\H_C=\H_3\oplus \H_5, \quad \H_D= \H_2\oplus \H_4
\end{equation}

This corresponds to passing from $\cube$ to $\bgp$ and
to our general setup of graph Hamiltonians.

\subsubsection{Cocycles for $\R^3$}
As discussed, choosing a magnetic field corresponds to a projective representation
of $\gp$ by unitary operators on $\H_{\gp}$ with a cocycle $\alpha$. 
We recall this here in a more familiar
three--dimensional setting. 

A skew symmetric bilinear form $B$ translates to a more well known expression as follows.
Let $i,j,k$ be the standard unit vectors on $\R^3$ and set $B_x=B(j,k)$, $B_y:=B(k,i)$
and $B_z=B(i,j)$ then if $\vec{B}=(B_x,B_y,B_z)$
\begin{equation}
B(m,m')=\vec{B}\cdot(m\times m'),\quad
\alpha_B(m,m')=\exp(i\frac{1}{2}\vec{B}\cdot
(m\times m'))\end{equation}

Indeed $\alpha_B$ is a  cocycle since
\begin{multline*}
B\cdot(m\times m')+B\cdot ((m+m')\times m'')
= B\cdot(m\times
m'+m\times m''+m'\times m'')\\
=B\cdot (m\times (m'+m''))+B\cdot (m'\times m'')
\end{multline*}

The physics interpretation of this is that $B$ is a fixed
constant magnetic field and then $\alpha_B(m,m')$ is the magnetic flux
though the {\em triangle} spanned by $m$ an $m'$.

\subsubsection{Cocycles for the Gyroid}
The maximal translation group $L$ for $\gp$ is the bcc lattice spanned by the vectors $f_i$ given in (\ref{bccveceq}) 
or $g_i$ (\ref{bccveceq2}). The lattice group
$T(\gp)$ is the fcc lattice spanned by the vectors $e_4,e_5,e_6$.
By restricting a cocycle given by a vector $\vec{B}$ on $\R^3$, we obtain
a cocycle for each of these lattices.
Since both the fcc and the bcc lattices are Bravais lattices, 
these are precisely all cocycles coming from cocycles based on anti--symmetric forms $\formTheta=\frac{1}{2\pi}B$ stemming from a magnetic field $B$.

We will use the basis $(g_i)$ to fix the matrix representation $\Theta$ of $\formTheta$.

\subsubsection{Graph representation and Graph Harper Operator on $\H_{\gp}$}
 We fix a vector $B$. This fixes the corresponding
 cocycle $\alpha(v,w):=e^{i\frac{1}{2}B\cdot(v\times w)}$.
Now consider the graph representation as defined in \S\ref{harperlatsec},
where the order of the vertices is $A,B,C,D$. 

Using the fixed cocycle $\alpha$ above,
we obtain the partial isometries corresponding
to elements of $T(\gp)$. These are $U_i:=U_{e_i}$ according to the list  
(\ref{gyvectorseq}) as 
discussed in \S\ref{partialisopar}.

In  Hilbert space decomposition (\ref{decompeq}) the Graph Harper Operator
$H_{\bar\Gamma}$ becomes the $4\times4$ matrix.

\begin{equation}
H_{\bgp}=\left(
\begin{matrix}
0&U_1^*&U_2^*&U_3^*\\
U_1&0&U_6^*&U_5\\
U_2&U_6&0&U_4\\
U_3&U_5^*&U_4^*&0
\end{matrix}
\right)
\end{equation}
recall that $U_i=\rho(e_i)$ and $U_i^*=U_i^{-1}=U_{-e_i}$.

\subsubsection{Matrix Harper operator}
We choose the rooted spanning tree $\tau$ (root $A$, edges $e_1,e_2,e_3$) as indicated in Figure \ref{square}.
Using this we obtain the following matrix Harper operator according to \S\ref{harpermatrixsec}

\begin{equation}
H_{\bgp,\tau}=\left(
\begin{matrix}
0&1&1&1\\
1&0&U_1^*U_6^*U_2&U_1^*U_5U_3\\
1&U_2^*U_6U_1&0&U_2^*U_4U_3\\
1&U_3^*U_5^*U_1&U_3^*U_4^*U_2&0
\end{matrix}
\right)
=:\left(
\begin{matrix}
0&1&1&1\\
1&0&A&B^*\\
1&A^*&0&C\\
1&B&C^*&0
\end{matrix}
\right)
\end{equation}
The coefficients can be expressed in terms of the operators of the magnetic translation operators
of the bcc lattice. We fix $U:=U_{f_1}, V:=U_{f_2}$ and $W:=U_{f_3}$ for the $f_i$ listed in (\ref{bccveceq}).
Then
\begin{equation}
A=aV^*W,\quad B=bWU^*,\quad C=cW^*UV
\end{equation}
where
$a=\frac{\alpha(e_2,-e_6)\alpha(e_2-e_6,-e_1)}{\alpha(-f_2,f_3)}$, 
$b=\frac{\alpha(-e_3,-e_5)\alpha(-e_3-e_5,e_1)}{\alpha(f_3,-f_1)}$ 
and $c=\frac{\alpha(-e_2,e_4)\alpha(-e_2+e_4,e_3)}{\alpha(-f_3,f_1)\alpha(-f_3+f_1,f_2)}$.

\subsubsection{Choices and Notation}

In order to proceed we fix some convenient basis and notation.
First we base change from the basis $f_i$ to the basis 
\begin{equation}
g_1=-f_2+f_3,\quad g_2=-f_1+f_3,\quad g_3=f_1+f_2-f_3
\end{equation}
 for the bcc lattice $B$. We see that $A=a'U_{g_1},B=b'U_{g_2},C=c'U_{g_3}$ 
again for fixed constants $a'=\alpha(-e_1,-e_6)\alpha(-e_1-e_6,e_2),b'=\alpha(-e_3,-e_5)
\alpha(-e_3-e_5,e_2)$ and $c'=-\alpha(-e_2,e_4)\alpha(-e_2+e_4,e_3)$. The operators
$A,B,C$ again generate the $C^*$ algebra of 
the  non--commutative torus $\T^3_{\Theta}$ where $\Theta$
is the matrix of the bilinear form in the basis $g_i$.
Explicitly:
\begin{equation}
 \theta_{12}=\frac{1}{2\pi}B\cdot (g_1\times g_2), \quad \theta_{13}=\frac{1}{2\pi}
 B\cdot (g_1\times g_3), \quad 
\theta_{23}=\frac{1}{2\pi}B\cdot (g_2\times g_3)
\end{equation}
We will fix the notation
\begin{equation}
\alpha_1:=e^{2\pi i \theta_{12}}=\alpha^2(g_1,g_2),
\bar \alpha_2:= e^{2\pi i \theta_{13}}=\alpha^2(g_1,g_3),
\alpha_3:=e^{2\pi i \theta_{23}}=\alpha^2(g_2,g_3)
\end{equation}
This means for the commutators
\begin{equation}
AB=\alpha_1 BA, \quad AC=\bar\alpha_2CA, \quad BC=\alpha_3CB
\end{equation}

\subsubsection{Matrix action of $T(\gp)$}

Given the choice of the spanning tree $\tau$ also determines the action of $T(\gp)$.

It will be convenient to introduce forth roots of the $\alpha_i$:

\begin{equation}
\phi_1=e^{\frac{\pi}{2} i \theta_{12}}, \quad \phi_2= e^{\frac{\pi}{2} i \theta_{31}},\quad
\phi_3= e^{\frac{\pi}{2} i \theta_{23}}, \quad \Phi=\phi_1\phi_2\phi_3
\end{equation}
and the matrices
\begin{equation}
\Lambda_1=diag(1,\phi_1,\phi_2,\overline{\phi_1\phi_2}), \quad
\Lambda_2=diag(1,\phi_1,\overline{\phi_1\phi_3},\phi_3),\quad
\Lambda_3=diag(1,\overline{\phi_2\phi_3},\phi_2,\phi_3)
\end{equation}
By definition $\alpha_1\alpha_2\alpha_3=\Phi^4$.

Then the action $\rho$ is given by the following matrices
\begin{equation}
\rho(A)=\Lambda_1diag(A,A,A,A), \rho(B)=\Lambda_2diag(B,B,B,B),\rho(C)=\Lambda_3diag(C,C,C,C)
\end{equation}

We make the following observation that makes the calculations quite a bit simpler.
\begin{equation}
\Lambda_1\Lambda_2^*=diag(1,1, \bar\Phi,\Phi),\quad \Lambda_2\Lambda_3^*=diag(1,\Phi,\bar\Phi,1),\quad \Lambda_1\Lambda_3^*=diag(1,\Phi,1,\bar\Phi)
\end{equation}

\subsubsection{Calculation of $\BTheta$ and $K_*(\BTheta)$}.
\begin{prop}
If $\Phi \neq 1$ then the BH--algebra $\BTheta$  is the full matrix
algebra $\BTheta=M_4(\T^3_{\Theta})$. 
\end{prop}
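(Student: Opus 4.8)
\emph{Strategy.} The plan is to show that every matrix unit $E_{ij}$ lies in $\B$; once this is known, combining the $E_{ij}$ with the representation $\rho$ of $T(\gp)$ forces $\B=M_4(\T^3_{\Theta'})$. Indeed, having all $E_{ij}$ and $E_{11}\rho(A)E_{11}=AE_{11}$ (the $\Lambda_1$--phase in the $(1,1)$ slot being $1$) and its analogues for $B,C$, one gets $AE_{ij}=E_{i1}(AE_{11})E_{1j}$, etc., so every $\T^3_{\Theta'}$--coefficient sits in every slot and $M_4(\T^3_{\Theta'})\subseteq\B$; the reverse inclusion is the matrix representation theorem of \S\ref{harpermatrixsec}. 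Moreover it suffices to extract a single off--diagonal unit with scalar coefficient, say $E_{12}$: then $E_{21}=E_{12}^*$, $E_{11}=E_{12}E_{21}$, $E_{1j}=E_{11}HE_{11}$--type products recover all of $M_4(\C)$. So the proposition reduces to peeling one scalar matrix unit out of $H_{\bgp,\tau}$ together with the symmetry operators, exactly as in the honeycomb case (Proposition \ref{hpropnc}).

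\emph{Key tool.} The decisive observation is that conjugation by $\rho(A),\rho(B),\rho(C)$ acts on the entries of $H$ by characters. Writing $H=\sum_{(i,j)}x_{ij}E_{ij}$ with each $x_{ij}\in\{1,A,B,C\}$ or an adjoint, one computes $\rho(A)\,x_{ij}E_{ij}\,\rho(A)^{*}=\mu_{ij}\,x_{ij}E_{ij}$, where the phase $\mu_{ij}$ is the product of the $\Lambda_1$--factor $\lambda^{(1)}_i\overline{\lambda^{(1)}_j}$ and the commutation phase produced by moving $A$ past $x_{ij}$ inside $\T^3_{\Theta'}$ (so, e.g., $A B^{*}A^{*}=\bar\alpha_1 B^{*}$); the same holds for $\rho(B),\rho(C)$ using $\Lambda_2,\Lambda_3$. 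Hence each of the twelve terms of $H$ is a simultaneous eigenvector for the three commuting conjugations, carrying a well--defined character of $\Z^3$. This is the exact analogue of the step in Proposition \ref{hpropnc} where conjugating by $\rho(U),\rho(V)$ rescaled entries by powers of $q$ and $\chi$.

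\emph{Execution.} Individual terms are then isolated by projecting onto characters: replacing an operator $Y$ by $Y-\mu\,\rho(w)Y\rho(w)^{*}$ for a word $w\in\{A,B,C\}$ and a phase $\mu$ kills every term whose character agrees with $\mu$ on $w$ and merely rescales the rest, the rescaling factor being $1$ minus a ratio of characters, which is invertible precisely when the two characters differ on $w$. Iterating over $w$ first separates the off--tree terms $A E_{23},\,B^{*}E_{24},\,C E_{34}$ and their adjoints (whose coefficients are moved by the $\alpha_i$) from the tree terms $E_{1j},E_{j1}$ (whose coefficients are scalar, so their characters are the pure $\Lambda_k$--phases), and then separates the three surviving tree units from one another, leaving a single $E_{1j}$ with coefficient $1$. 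Each division is by a factor of the form $1-\Phi^{\pm1}$ or $1$ minus a root of some $\alpha_i$, and I expect the controlling factor to be $1-\Phi$, which is exactly the hypothesis.

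\emph{Main obstacle.} The difficulty is the bookkeeping: I must list the twelve characters explicitly in $\phi_1,\phi_2,\phi_3$ and verify that the elimination sequence never divides by zero under the single hypothesis $\Phi\neq1$. Concretely this means checking that the character of the chosen tree edge is distinct from those of the other eleven terms, and that the intermediate coefficients (the analogues of the constants $B,\dots,G$ in Proposition \ref{hpropnc}) are nonzero. As there, I anticipate a short list of exceptional phase coincidences---here precisely those forcing $\Phi=1$---and the crux will be confirming that $\Phi\neq1$ avoids all of them. I would also treat with care the possibility that two tree characters coincide, which may require choosing a word $w$ mixing $A,B,C$ rather than a single generator.
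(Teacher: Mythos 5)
Your strategy is the same one the paper uses: conjugate $H$ by the symmetry operators $\rho(A),\rho(B),\rho(C)$, observe that each of the twelve terms of $H$ is rescaled by a character, and eliminate terms one at a time by forming $Y-\mu\,\rho(w)Y\rho(w)^*$ until a single matrix unit survives. The reduction to matrix units, and the remark that $E_{11}\rho(A)E_{11}=AE_{11}$ then yields the full matrix algebra, are both fine. The gap is in what you yourself flag as the ``main obstacle'': you assert the expectation that the only divisions required are by factors of the form $1-\Phi^{\pm1}$, so that the exceptional coincidences are ``precisely those forcing $\Phi=1$.'' That is false, and verifying what actually happens is where essentially all the content of the proposition lies. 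Running the elimination sequence (the paper's $X_1,\dots,X_6$) produces $E_{43}$ multiplied by the scalar
\[
(1-\bar\Phi^2\alpha_2\alpha_3)(1-\Phi\bar\alpha_3)(1-\bar\Phi\alpha_2)(\Phi\bar\alpha_1-\bar\Phi^2\alpha_2\alpha_3)(\bar\Phi\alpha_3-\Phi\bar\alpha_3)(\Phi\bar\alpha_2-\bar\Phi\alpha_2),
\]
which vanishes as soon as $\alpha_2=\Phi$, or $\alpha_3=\Phi$, or $\alpha_1=\Phi^2$ --- all perfectly compatible with $\Phi\neq1$. So a single elimination sequence does not suffice; one must permute the roles of the three generators, check that the only configurations in which all three symmetric sequences fail are $\alpha_i=\Phi^2,\ \alpha_j=\alpha_k=\Phi$, and then handle those three residual cases by a separate, shorter computation that itself splits further according to whether $\Phi=\bar\Phi$ (i.e.\ $\Phi=-1$, where one must instead exploit $\phi_2\neq\bar\phi_2$).

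There is a second, independent case split you have not accounted for: once $E_{43},E_{34},E_{33},E_{44}$ are obtained, propagating to the remaining matrix units via $HE_{44}$ requires dividing by quantities like $\bar\phi_1-1$, so the argument bifurcates again on whether $\phi_1=1$, with a different word used in the second branch (there the needed nonvanishing comes from $\alpha_3\neq\Phi$, which holds in that branch by the standing case assumption). None of this is fatal to your approach --- it is exactly the approach that works --- but as written the proposal stops at the point where the proof begins: the explicit list of the twelve characters, the choice of elimination order in each coincidence pattern, and the verification that every divisor is nonzero under $\Phi\neq1$ together with the current case hypotheses. Until that case analysis is carried out, the claim that ``$\Phi\neq1$ avoids all exceptional coincidences'' is an unproven (and, term by term, incorrect) assertion rather than a proof.
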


\begin{proof}
The strategy is to again show that enough elementary matrices are in $\BTheta$.
This will be done in a case by case study. We will present the first case in detail.

Assume that $\alpha_1\neq \Phi^2, \alpha_2\neq \Phi$ and $\alpha_3\neq \Phi$. Then we
have to do 6 steps to obtain $E_{34}$. These are 
\begin{eqnarray*}
X_1&=&H-\rho(AB*)H\rho(BA^*)\\
X_2&=&X_1-\rho(BC^*)X_1\rho(CB^*)\\ 
X_3&=&X_2-\rho(AC^*)X_2\rho(CA^*)\\ 
X_4&=&\Phi\bar\alpha_1 X_3-\rho(AB^*)X_3\rho(BA^*)\\
X_5&=& \bar\Phi\alpha_3X_4-\rho(BC^*)X_4\rho(CB^*)\\
X_6&=&\Phi\bar\alpha_2X_5-\rho(AC^*)X_5\rho(CA^*)
\end{eqnarray*}

The resulting matrix is 
$$(1-\bar\Phi^2\alpha_2\alpha_3)(1-\Phi\bar\a_3)(1-\bar\Phi\a_2)(\Phi\bar\a_1-\bar\Phi^2\a_2\a_3)
(\bar\Phi\a_3-\Phi\bar\a_3)(\Phi\bar\a_2-\bar\Phi\a_2)E_{43}$$
and the factor is invertible by assumption.
This provides $E_{43},E_{34}=E_{43}^*,E_{33}=E_{34}E_{43}$ and $E_{44}=E_{43}E_{34}$.
Now to get the other elementary matrices, we first assume $\phi_1\neq 1$ then set
$Y_1=HE_{44}-\bar\phi_3\rho(C)E_{34}$ and 
$Y_2=\phi_3Y_1-\rho(B^*)Y_1\rho(B)=\phi_3(\bar\phi_1-1)E_{24}$, 
and $E_{14}=Y_1-\bar\phi_{3}\rho(B) E_{24}$ which guarantees that all the $E_{ij}\in\BTheta$. 
If $\phi_1=1$ then we use 
$Z'_2=\bar\phi_3Y_1-\rho(C)H\rho(C^*)=\bar\phi_3(1-\alpha_3\bar\Phi)E_{24}$, 
so since $\alpha_3\neq \Phi$ 
we obtain $E_{24}$ and $E_{14}$ (as above) and hence again the whole matrix algebra.
The arguments for the cases $\alpha_1\neq \Phi, \alpha_2\neq \Phi^2,\alpha_3\neq \Phi$
and $\alpha_1\neq \Phi, \alpha_2\neq \Phi,\alpha_3\neq \Phi^2$ are analogous.

Now by pure logic it follows that $\BTheta$ is the full matrix in all cases except the three cases
$\alpha_i=\Phi^2,\alpha_j=\alpha_k=\Phi$ for $\{i,j,k\}=\{1,2,3\}$.
We treat only one of these, as the other two follow symmetrically. So assume
$\alpha_1=\Phi^2,\alpha_2=\alpha_3=\Phi$. In this case the story is similar, but faster.

We keep $X_1,X_2$ as above.  Then if $\Phi\neq \bar\Phi$ we
set $X'_3=\Phi-\rho(BC^*)X_2\rho(CB^*)=(1-\Phi)(1-\bar\Phi(\Phi-\bar\Phi)E_{13}$ and again
obtain the full matrix algebra using $HE_{11}$ to obtain the needed elementary matrices.
If $\Phi=\bar \Phi$ then since $\Phi\neq 1$ we have $\Phi=-1$ and hence $\a_2=\a_3=-1$
and so $\phi_2\neq \bar\phi^2$ since else $\phi_2=\pm1$ and $\a_2=\phi_2^4=1$.
Thus we get that $X_3''=\bar\phi_2X_2-\rho(A)X_2\rho(A^*)=4(\bar \phi_2-\phi_2)E_{13}$ with the
factor being invertible. And again, we are done.  
\end{proof}

\begin{prop}
If $\Phi = 1$  and not all $\alpha_i=\bar\alpha_i=\pm1$ then 
the BH--algebra $\BTheta$  is equal to the full matrix
algebra $\BTheta=M_4(\T^3_{\Theta})$. 
\end{prop}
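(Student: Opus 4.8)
The plan is to prove this exactly as in the previous proposition: exhibit a single off--diagonal elementary matrix $E_{ij}\in\B$, and then deduce from $E_{ji}=E_{ij}^*$, $E_{ii}=E_{ij}E_{ji}$ and products with $H$ and the $\rho(\cdot)$ that all $E_{kl}$ and all of $\T^3_\Theta$ lie in $\B$, concluding $\B=M_4(\T^3_\Theta)$. What must change is the production step. When $\Phi\neq1$ the computation was driven by the off--diagonal twists $\Lambda_1\Lambda_2^*=\mathrm{diag}(1,1,\bar\Phi,\Phi)$, $\Lambda_2\Lambda_3^*=\mathrm{diag}(1,\Phi,\bar\Phi,1)$ and $\Lambda_1\Lambda_3^*=\mathrm{diag}(1,\Phi,1,\bar\Phi)$, which separated the matrix entries by position. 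At $\Phi=1$ all three collapse to the identity, so conjugation by $\rho(AB^*)$, $\rho(BC^*)$, $\rho(AC^*)$ degenerates to uniform entrywise conjugation and no longer produces position--dependent phases; the earlier six--step chain is unavailable and a new combination is needed.

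Two tools survive at $\Phi=1$ and I would build the argument on them. First, conjugation by a single generator still carries a nontrivial diagonal twist: $\rho(A)M\rho(A)^*$ multiplies the $(i,j)$ entry of $M$ by $(\Lambda_1)_{ii}\overline{(\Lambda_1)_{jj}}$ and conjugates that entry by $A$, where $\Lambda_1=\mathrm{diag}(1,\phi_1,\phi_2,\overline{\phi_1\phi_2})$. Since $\Phi=\phi_1\phi_2\phi_3=1$ does not force the individual $\phi_i$ to be trivial, these phases genuinely separate the three constant entries in the first row and column. Second, the commutation relations $AB=\alpha_1BA$, $AC=\bar\alpha_2CA$, $BC=\alpha_3CB$ are untouched. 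Now I invoke the hypothesis: $\Phi=1$ gives $\alpha_1\alpha_2\alpha_3=\Phi^4=1$, and since not all $\alpha_i=\pm1$, in fact at least two of the $\alpha_i$ must be non--real; by the cyclic symmetry permuting $A,B,C$ I may assume $\alpha_1\neq\pm1$, whence $\phi_1^2\neq\pm1$ and $1-\alpha_1\neq0$.

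With these in hand I would isolate an elementary matrix by the same scheme as before: form $X_1=H-\lambda\,\rho(A)H\rho(A)^*$ for a suitable phase $\lambda$, so as to annihilate the constant entries while scaling the $A$-- and $A^*$--entries by factors such as $1-\phi_1^2$; then apply the analogous single--generator steps for $B$ and $C$ and use the $\alpha_i$--phases to peel off the remaining monomials, arriving at a scalar multiple of some $E_{ij}$ whose coefficient is a product of factors of the form $(1-\phi_i^2)$, $(1-\alpha_i)$ and differences $(\phi_i-\bar\phi_i)$, each invertible under the standing assumptions. Once one such $E_{ij}$ is obtained, the bootstrap from the last part of the previous proof (producing $E_{14}$, $E_{24}$ and so on from $H E_{ii}$ and the $\rho(\cdot)$) applies verbatim and yields $\B=M_4(\T^3_\Theta)$.

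The main obstacle is the case analysis rather than any single computation. Because the uniform twists are gone, no one combination works simultaneously: I expect to split on which of the $\alpha_i$ (equivalently which $\phi_i^2$) equal $1$ or $-1$, and to check in each branch that some combination of single--generator conjugations leaves an invertible coefficient. The $\Z/3$ symmetry among $(A,B,C)$ reduces these to a few genuinely distinct branches, but the bookkeeping needed to guarantee nonvanishing of every coefficient is the crux. It is precisely in the excluded degenerate limit---all $\alpha_i=\pm1$---that every such coefficient can vanish, which is consistent with that case giving a genuinely smaller algebra to be treated separately.
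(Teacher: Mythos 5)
Your plan is the paper's plan: produce one elementary matrix by taking linear combinations of $H$ with conjugates $\rho(g)H\rho(g)^*$, then bootstrap to all of $M_4(\T^3_\Theta)$. Your diagnosis of why the $\Phi\neq 1$ chain breaks down (the pairwise twists $\Lambda_i\Lambda_j^*$ collapse to the identity) and your inventory of what survives (the individual $\Lambda_i$ and the commutation relations) are both correct, and your observation that $\alpha_1\alpha_2\alpha_3=\Phi^4=1$ forces \emph{at least two} of the $\alpha_i$ to be non--real is sound and actually tidier than the paper's opening move (the paper picks some $\alpha_3\neq 1$ and then splits on whether $\alpha_2$ is real, falling back to a non--real $\alpha_1$ in the real subcase).

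The gap is that the elimination itself is asserted rather than exhibited, and the one concrete step you do describe does not work as stated: $X_1=H-\lambda\,\rho(A)H\rho(A)^*$ cannot ``annihilate the constant entries,'' because under conjugation by $\rho(A)$ the three constant entries of the first row acquire the three \emph{different} phases $\bar\phi_1$, $\bar\phi_2$, $\phi_1\phi_2$ (and their conjugates in the first column), so a single scalar $\lambda$ kills at most one of the six. This is exactly why the paper's chain is not a sequence of single--generator steps: it opens with two--generator conjugations $\rho(BC^*)$, $\rho(AC^*)$ --- which at $\Phi=1$ leave the constant entries fixed and act on the $A,B,C$ entries purely through the commutation phases $\alpha_i$ --- and only then brings in $\rho(C)$, $\rho(A)$ with their $\Lambda$--twists, producing a coefficient whose invertibility must be checked case by case (including a sub--split on $\phi_2=1$, where one targets $E_{24}$ instead of $E_{43}$). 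Your closing paragraph correctly identifies this bookkeeping as the crux, but that bookkeeping \emph{is} the proof: without an explicit terminating chain and a verification that its scalar factor is nonzero in every branch permitted by the hypothesis, the argument is a plausible program rather than a proof.
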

\begin{proof}
As before this follows from an explicit representation of the generators $E_{ij}$ as expressions in $\BTheta$.
Since not all $\alpha_i=1$ there must be some $\alpha_i\neq 1$.
We will assume that $\alpha_3\neq 1$. The other calculations are symmetric.
Again one has two cases. Either $\alpha_2=\bar\alpha_2$ or not.
In the second case, we set $Y_1=H-\rho(BC^*)H\rho(CB^*)$ and 
$Y_2=
\alpha_2Y_1=\rho(AC^*)Y_1\rho(AB^*)$.We set 
$Y_3=\alpha_3\bar\phi_1\bar\phi_3^2Y_2-\rho(C)H\rho(C^*)$, and 
$Y_4=\bar\phi_1\phi_2Y_3-\rho(A)H\rho(A^*)=(
1-\bar\alpha_3)(\alpha_3-\bar\alpha_3)
(\alpha_3\bar\phi_2\bar\phi_3^2-\bar\phi_2^2\phi_3)
(\bar\phi_1\phi_2-\a_2-\bar\phi_2\bar\phi_2^2)E_{43}$.  
Now as $\alpha_3\neq 1$:  $Y_4\neq0$ if $\phi_2\neq 1$. In case $\phi_2=1$,
we must have $\phi_1\neq 1$ because otherwise $\phi_3$ would also be $1$.
In this case, we proceed analogously to obtain $E_{24}$. 
In both subcases we obtain the full matrix algebra following the 
strategy used in the previous proof.

Finally, we deal with the case that $\alpha_2=\bar \alpha_2$: 

If $\alpha_2=1$
then $\alpha_1=\bar\alpha_3$ then it follows that $\alpha_1\neq \bar \alpha_1$
and we are done by an analogous calculation. Indeed if $\alpha_1=\bar\alpha_1$
then also $\alpha_3=\bar \alpha_3$ and all $\alpha_i$ are real which we excluded.

If $\alpha_2=-1$ then $\alpha_1=-\bar\alpha_3$ and it follows that $\alpha_1\neq \bar \alpha_1$ and we are done by an analogous calculation. Indeed if $\alpha_1=-\bar\alpha_3$ then $\alpha_1=\bar\alpha_1$ again means that all three $\alpha_i$
are real which was excluded.
\end{proof}

\begin{prop}\label{phitwoprop}
If $\Phi = 1$ and all of the $\alpha_i=1$  then 
the BH--algebra $\BTheta$  is a proper subalgebra of the full matrix
algebra $\BTheta\subsetneq M_4(\T^3)$. 
\end{prop}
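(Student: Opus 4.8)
The plan is to detect properness not through commutants or weak closures but on a single finite-dimensional quotient of $M_4(\T^3_0)$, where the question becomes an elementary irreducibility statement. First I would record the simplifications forced by the hypotheses. Since $\alpha_1=\alpha_2=\alpha_3=1$ the operators $A,B,C$ commute, so $\T^3_0=C(T^3)$ and $A,B,C$ are \emph{central} unitaries; as they are the generating coordinate unitaries $U_{g_i}$ up to the fixed scalars $a',b',c'$, the joint values $(A(x),B(x),C(x))$ sweep out all of $U(1)^3$ as $x$ runs over $T^3$. Next, the hypothesis $\Phi=\phi_1\phi_2\phi_3=1$ collapses the three twist matrices: using $\overline{\phi_1\phi_2}=\phi_3$, $\overline{\phi_1\phi_3}=\phi_2$, $\overline{\phi_2\phi_3}=\phi_1$, one checks $\Lambda_1=\Lambda_2=\Lambda_3=\Lambda:=\mathrm{diag}(1,\phi_1,\phi_2,\phi_3)$. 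Hence $\rho(A)=A\Lambda$, $\rho(B)=B\Lambda$, $\rho(C)=C\Lambda$, and $\B$ is the $C^*$-algebra generated by $H$ together with these three unitaries.

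The key step is to pass to a finite-dimensional quotient. For each $x\in T^3$ the evaluation $\epsilon_x\colon M_4(C(T^3))\to M_4(\C)$ is a surjective $*$-homomorphism, so if $\epsilon_x(\B)$ is a proper subalgebra of $M_4(\C)$ for even a single $x$, then $\B$ cannot be all of $M_4(\T^3_0)$. I would choose the point $x_\ast$ with $A(x_\ast)=B(x_\ast)=C(x_\ast)=1$, which exists by the surjectivity noted above. There $\epsilon_{x_\ast}(\rho(A))=\epsilon_{x_\ast}(\rho(B))=\epsilon_{x_\ast}(\rho(C))=\Lambda$, while $\epsilon_{x_\ast}(H)=J-I_4$ with $J$ the all-ones $4\times 4$ matrix; thus $\epsilon_{x_\ast}(\B)=C^*(J-I_4,\Lambda)$. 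By Burnside's theorem this algebra is proper precisely when $J-I_4$ and $\Lambda$ admit a common nontrivial proper invariant subspace.

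Finally I would exhibit such a subspace uniformly, with no case analysis. The four fourth-roots of unity $1,\phi_1,\phi_2,\phi_3$ cannot be pairwise distinct: otherwise $\{\phi_1,\phi_2,\phi_3\}=\{i,-1,-i\}$, whose product is $-1\neq 1=\Phi$. Hence some eigenvalue $\mu$ of $\Lambda$ has eigenspace $E_\mu$ with $\dim E_\mu\geq 2$. Writing $u=(1,1,1,1)^t$, set $S:=E_\mu\cap u^{\perp}$, of dimension between $1$ and $3$. Then $S$ is $\Lambda$-invariant since $S\subseteq E_\mu$, and it is $(J-I_4)$-invariant since $J$ annihilates $u^{\perp}$, so that $(J-I_4)|_{u^{\perp}}=-I$. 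Therefore $S$ is a common nontrivial proper invariant subspace, $\epsilon_{x_\ast}(\B)\subsetneq M_4(\C)$, and consequently $\B\subsetneq M_4(\T^3_0)=\B_0$.

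The only genuine bookkeeping is the verification that $\Phi=1$ indeed forces $\Lambda_1=\Lambda_2=\Lambda_3$ and that $x_\ast$ can be chosen with all three phases equal to $1$ simultaneously; both are routine. The conceptual point, and where one must be careful, is that the commutant of $\B$ in $M_4(\T^3_0)$ can already be trivial in some of these subcases, so any argument through double commutants or weak closures would fail to see the properness. Evaluating at $x_\ast$ detects it at the norm-closed $C^*$-level through an honest finite-dimensional quotient, in exactly the spirit in which the ideals $J_1,J_2$ cut out the proper honeycomb algebra in Proposition \ref{hpropcomm}.
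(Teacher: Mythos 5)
Your argument is correct, and its first half coincides with the paper's: the paper also evaluates at the character $\chi$ with $\chi(A)=\chi(B)=\chi(C)=1$, obtains $\bar\chi(H)=F-\mathrm{Id}$ with $F$ the all--ones matrix and $\bar\chi(\rho(A))=\bar\chi(\rho(B))=\bar\chi(\rho(C))=\Lambda=\mathrm{diag}(1,\phi_1,\phi_2,\phi_3)$, and concludes properness of $\B$ from properness of the image in $M_4(\C)$. Where you genuinely diverge is in how that finite--dimensional image is shown to be proper. The paper splits into three cases according to how many of the $\phi_i$ are $\pm 1$ and, in each case, identifies $\bar\chi(\B)$ exactly as an explicit subalgebra of dimension $2$, $6$, or $10$; this costs a page of matrix bookkeeping but yields the precise image, which is the kind of information one wants when computing $K$--theory of $\B$ at these special points. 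You instead observe that $\Phi=1$ forces a repetition among the fourth roots of unity $1,\phi_1,\phi_2,\phi_3$, so $\Lambda$ has a repeated eigenvalue, and that $E_\mu\cap u^\perp$ (with $u=(1,1,1,1)^t$) is a nonzero proper subspace invariant under both $\Lambda$ and $F-\mathrm{Id}$ and under their adjoints; since $M_4(\C)$ acts irreducibly, the image is proper. This is a uniform, case--free argument that establishes exactly the stated proposition with less computation, at the price of not identifying the image algebra. Your verification that $\Phi=1$ collapses $\Lambda_1,\Lambda_2,\Lambda_3$ to a single $\Lambda$ is right (it uses only $\phi_1\phi_2\phi_3=1$ and unimodularity), and the counting that rules out $\{\phi_1,\phi_2,\phi_3\}=\{i,-1,-i\}$ (product $-1\neq\Phi$) is correct. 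The only blemish is the closing aside about commutants: in the commutative case $\T^3_0$ sits in the center of $M_4(\T^3_0)$, so the commutant of $\B$ is never trivial; but nothing in your proof depends on that remark.
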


\begin{proof}
We will show this using the  character $\chi:\T^3\to \C$, defined by 
$\chi(A)=\chi(B)=\chi(C)=1$. This character induces an algebra morphism
$\bar\chi:M_4(\T^3)\to M_4(\C)$ 
and we will show that 
$\bar\chi(\BTheta)\subsetneq M_4(\C)$ which implies the result.
We note that 
$$
\bar\chi(H)=
\left(\begin{matrix}
  0&1&1&1\\
1&0&1&1\\
1&1&0&1\\
1&1&1&0
\end{matrix}\right)=F-Id, \mbox{ with } F=
\left(\begin{matrix}
  1&1&1&1\\
1&1&1&1\\
1&1&1&1\\
1&1&1&1
\end{matrix}\right)
$$

Since $\alpha_i=\phi_i^4$ all the $\phi_i$ must be fourth roots 
of unity and furthermore $\phi_1\phi_2\phi_3=1$.
There are again three cases. 

The first is that all $\phi_i=1$.
In this case $\bar\chi(\rho(A))=\bar\chi(\rho(B))=\bar\chi(\rho(C))=Id$.
Therefore $\bar\chi(\BTheta)=<\bar\chi(H)>\subset M_4(\C)$. Now $F^n=4^{n-1}F$
and so $\bar\chi(H)^n=Id+(\sum_{i=1}^n {n\choose i}4^{i-1})F$ so that
$\bar\chi(\BTheta)$ is the 2--dimensional subalgebra of $M_4(\C)$ 
spanned by $Id$ and $F$.

The second case is that  all $\phi_i^2=1$ and only one $\phi_i=1$.
We will assume that $\phi_1=1, \phi_2=-1,\phi_3=-1$. 
The rest of the cases are symmetric.

In this case the image of $\bar\chi$ is the 6--dimensional 
subalgebra of matrices
of the form
\begin{equation}
  \left(\begin{matrix}
    a&b&c&c\\
b&a&c&c\\
d&d&e&f\\
d&d&f&e  \end{matrix}\right)
\end{equation}
Let $\B'$ be the sub--algebra above. It is an exercise to check
that this is indeed a subalgebra.
 Then as $\bar\chi(\rho(A))=\bar\chi(\rho(B))=\bar\chi(\rho(C))=diag(1,1,-1,-1)$, so that $\bar\chi(\BTheta)\subset\B'$.
On the other hand $P_1=E_{11}+E_{22}=\frac{1}{2}(Id+diag(1,1,-1,-1))\in\BTheta$ and 
$P_2=E_{33}+E_{44}=\frac{1}{2}(Id-diag(1,1,-1,-1))\in \bar\chi(\BTheta)$. And hence all the $P_i\bar\chi(H)P_j,i,j=1,2 \in \bar\chi(\BTheta)$. These are the $2\times 2$ block matrices
making up $\bar\chi(H)$. But the span of these matrices is precisely $\B'$. 
 
The final case is exactly one $\phi_i^2=1$. We will treat the case
$\phi_1=1,\phi_2=i,\phi_3=-i$. The rest are again symmetric.
In this case $\bar\chi(\BTheta)$ is the 10--dimensional matrix algebra of matrices of the form

\begin{equation}
  \left(\begin{matrix}
    a&b&c&d\\
b&a&c&d\\
e&e&g&h\\
f&f&k&l  \end{matrix}\right)
\end{equation}
Let $\B''$ be the subalgebra above. Again it is an exercise
to check that $\B''$ is indeed a subalgebra. In the current case:
$\bar\chi(\rho(A))=\bar\chi(\rho(B))=\bar\chi(\rho(C))=diag(1,1,i,-i)$ 
and hence it follows that $\bar\chi(\BTheta)\subset \B''$.
Now $\bar\chi(\rho(A))^2=diag(1,1,-1,-1)$ so that the $P_1,P_2$ above
are in $\bar\chi(\BTheta)$. But furthermore we have that 
$\frac{-1}{2i}(\bar\chi(\rho(A))P_2\bar\chi(H)P_2-iP_2\bar\chi(H)P_2=E_{34}$ 
and 
$\frac{1}{2i}(\bar\chi(\rho(A))P_2\bar\chi(H)P_2=iP_2\bar\chi(H)P_2=E_{43}$. So that
$E_{33},E_{34},E_{43},E_{44}\in \bar\chi(\BTheta)$. The other generators of $\B'$
now are given by $P_1$, $P_1\bar\chi(H)P_1$, $P_2\bar\chi(H)P_1$,
$E_{44}P_2\bar\chi(H)P_1$, $P_1\bar\chi(H)P_2E_{33}$ and 
$P_1\bar\chi(H)P_2E_{33}$ which are all in $\bar\chi(\BTheta)$.

\end{proof}

The trickiest case is the case in which all the $\alpha_i$ are real and not all equal to one.

\begin{prop}
Let $\Phi=1$
and all the $\alpha_i$ real, but not all $\alpha_i=1$. Then if all the $\phi_i$ are different  $\BTheta=M_4(\T_{\Theta})$, else  $\BTheta\subsetneq M_4(\T_{\Theta})$.
\end{prop}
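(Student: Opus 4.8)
First I would normalize the hypotheses. Since each $\alpha_i\in U(1)$ is real, $\alpha_i=\pm1$, and the identity $\alpha_1\alpha_2\alpha_3=\Phi^4=1$ together with ``not all $\alpha_i=1$'' forces exactly two of the $\alpha_i$ to equal $-1$. Permuting the generators $A,B,C$ I may assume $\alpha_1=1$ and $\alpha_2=\alpha_3=-1$, so $A$ commutes with $B$ while $C$ anticommutes with both; then $\phi_1\in\{1,-1,i,-i\}$ while $\phi_2,\phi_3$ are primitive eighth roots of unity with $\phi_1\phi_2\phi_3=\Phi=1$. The decisive simplification at $\Phi=1$ is that the three twist matrices collapse to one: from $\overline{\phi_1\phi_2}=\phi_3$, $\overline{\phi_1\phi_3}=\phi_2$ and $\overline{\phi_2\phi_3}=\phi_1$ one gets $\Lambda_1=\Lambda_2=\Lambda_3=\Lambda:=\mathrm{diag}(1,\phi_1,\phi_2,\phi_3)$, so that $\rho(A),\rho(B),\rho(C)$ all carry the same diagonal phase $\Lambda$, i.e. $\rho(M)=\Lambda\otimes M$ in $M_4(\T^3_{\Theta})=M_4\otimes\T^3_{\Theta}$. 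Because $\phi_2,\phi_3$ are automatically distinct from $1$ and from the fourth root $\phi_1$, the hypothesis ``all the $\phi_i$ are different'' amounts to $\phi_2\neq\phi_3$, and the only coincidence that can arise among the diagonal entries $1,\phi_1,\phi_2,\phi_3$ of $\Lambda$ in the excluded case is $\phi_2=\phi_3$.

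For the direction $\B=M_4(\T^3_{\Theta})$ (case $\phi_2\neq\phi_3$) I would follow the template of the two preceding propositions and manufacture every elementary matrix $E_{ij}$ inside $\B$. The basic move $X\mapsto X-\rho(M)X\rho(M)^*$ scales a monomial entry $X_{ij}$ by $1-\phi_i\bar\phi_j\,\mu$ (with the convention $\phi_0:=1$), where $\phi_i\bar\phi_j$ is a ratio of diagonal phases and $\mu\in\{\pm1\}$ is the commutation sign of $M$ past $X_{ij}$ (here $\mu=\pm1$ because $\alpha_i=\pm1$); the scaled mixed moves $X\mapsto\lambda X-\rho(M)X\rho(M')^*$ additionally change the monomial. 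Starting from $H$ I would, at each stage, choose a move whose scalar factor kills the unwanted slots while remaining invertible on the target slot, isolate a single $c\,E_{ij}$ with $c$ a nonzero scalar times a monomial, clear the monomial by one further multiplication by a $\rho(\cdot)$, and then propagate through $E_{ji}=E_{ij}^*$ and $E_{ii}=E_{ij}E_{ji}$ to all of $M_4(\T^3_{\Theta})$. The invertibility of the factors $1-\phi_i\bar\phi_j\mu$ at the needed stages is what $\phi_2\neq\phi_3$ guarantees, after a split into the finitely many sub-cases for $\phi_1\in\{\pm1,\pm i\}$ and the conjugate pair $\{\phi_2,\phi_3\}$.

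For the direction $\B\subsetneq M_4(\T^3_{\Theta})$ (case $\phi_2=\phi_3$) the one-dimensional character used in Proposition \ref{phitwoprop} is \emph{not available}, because $\alpha_2=\alpha_3=-1$ makes $\T^3_{\Theta}$ a genuinely noncommutative rational torus admitting no character sending $A,B,C$ all to $1$. Instead I would use its two-dimensional irreducible representations: evaluation at such a $\pi$ is a surjection $\bar\pi:M_4(\T^3_{\Theta})\twoheadrightarrow M_4(M_2(\C))=M_8(\C)$ with $\bar\pi(\rho(M))=\Lambda\otimes\pi(M)$. At the central character $A^2=B^2=C^2=1$ one has $\pi(A)=\pi(B)=\mathrm{diag}(1,-1)$ and $\pi(C)=\left(\begin{smallmatrix}0&1\\1&0\end{smallmatrix}\right)$, and I would exhibit the non-scalar operator $S=T\otimes I_2$, where $T$ acts as the symmetric block $\left(\begin{smallmatrix}p&t\\t&p\end{smallmatrix}\right)$ on the two \emph{equal-phase} coordinates $3,4$ and as the scalar $p+t$ on coordinates $1,2$. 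Since $\phi_2=\phi_3$, $T$ commutes with $\Lambda$, so $S$ commutes with every $\bar\pi(\rho(M))$; and because the two edges joining coordinate $2$ to coordinates $3,4$ carry the \emph{proportional} diagonal operators $\pi(A)$ and $\pi(B)^*$, a short computation shows $S$ also commutes with $\bar\pi(H)$. Hence $\bar\pi(\B)\subseteq\{S\}'\subsetneq M_8(\C)$, and since $\bar\pi$ is onto, $\B\neq M_4(\T^3_{\Theta})$.

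The main obstacle is this last direction, and it is also what makes the case the trickiest. The naive idea---find a permutation $P$ of the equal-phase coordinates commuting with both $\Lambda$ and the coupling---fails, because the edge labels $A,B,C$ in $H$ break the $K_4$ symmetry of $\bgp$; one is forced into a finite-dimensional representation and must find there a coordinate-mixing operator that survives. The computation is delicate precisely because it must succeed when the degeneracy is between two \emph{leaf} coordinates of the spanning tree (coordinates $3,4$, joined to coordinate $2$ by the proportional operators $\pi(A),\pi(B)^*$) yet must \emph{fail} when it involves the root, i.e. when $\phi_1=1$: there the two edges joining the degenerate pair $\{1,2\}$ to a third coordinate carry the independent operators $1$ and $\pi(A)$, which forces the only commuting $S$ to be scalar, so that $\B$ stays full. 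Pinning down exactly this dichotomy---degeneracy at the root is harmless, degeneracy between two leaves is fatal---is the crux, and it has no counterpart in the commutative situation of Proposition \ref{phitwoprop}.
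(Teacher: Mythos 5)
Your argument is correct and is essentially the paper's: the fullness half is the same elementary--matrix computation (which you describe as a template rather than carrying out, but the mechanism and the decisive nonvanishing condition $\phi_2\neq\phi_3$ — with $\phi_1\neq\phi_2,\phi_3$ automatic because $\phi_1$ is a fourth and $\phi_2,\phi_3$ are primitive eighth roots of unity — are exactly what the paper uses), and your two--dimensional irreducible representation with $\pi(A)=\pi(B)=\mathrm{diag}(1,-1)$ and $\pi(C)=\left(\begin{smallmatrix}0&1\\1&0\end{smallmatrix}\right)$ is precisely the paper's Clifford morphism $\chi(A)=\chi(B)=e_1$, $\chi(C)=e_2$ into $\Cliff(Id_2)\otimes\C\cong M_2(\C)$. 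The only difference is cosmetic: the paper identifies the image $\bar\chi(\B)$ exactly as the ten--dimensional algebra of matrices invariant under simultaneously swapping rows and columns $3,4$ (the commutant of that transposition), whereas you stop at exhibiting one non--scalar element $S$ of the commutant of $\bar\pi(\B)$ — which indeed suffices for properness and rests on the same degeneracy $\phi_2=\phi_3$ between the two leaf coordinates that the paper exploits.
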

\begin{proof}
Since not all $\alpha_i=1$ there must be exactly one $\alpha_i=1$ with the other two being equal to $-1$. We will deal with the case $\alpha_3=1,\alpha_2=\alpha_3=-1$. The others are similar.
Consider $X=H+\rho(BC^*)H\rho(CB^*)$. $X^2=12E_{11}$ and $X_2=HE_{11}=E_{21}+E_{31}+E_{41}$.
We know that $\phi_1\neq \phi_2$ and $\phi_1\neq \phi_3$  and $X_3:=\phi_1X_2-\rho(A)X_2\rho(A^*)=
(\phi_1-\phi_2)E_{31}+(\phi_1-\phi_3)E_{41}$. Here we get the two cases.
If $\phi_2\neq \phi_3$ then we obtain $\phi_2X_3-\rho(A)X_3\rho(A^*)=(\phi_2-\phi_3)(\phi_1-\phi_3)E_{31}$ and we have successively $E_{31}, E_{41},E_{21}$ and their transposes in $\BTheta$.
But this is a set of generators.

In case $\phi_2=\phi_3$, we see that $P_1=\frac{1}{\phi_1-\phi_2}X_2=E_{31}+E_{41}$. Hence also $E_{12},E_{21},E_{22}\in \BTheta$. Furthermore $P_2=P_1E_{12}=E_{32}+E_{42}\in\BTheta$ and $Q_1=Id-E_{11}-E_{22}=E_{33}+E_{44} \in \BTheta, Q_2=Q_1HQ_1$.

We shall use the morphism $\chi:\T^3_{\Theta}\to Cl:=\Cliff(\left(\begin{matrix}1&0\\0&1\end{matrix}\right))\otimes \C$ given by $\chi(A)=\chi(B)=e_1$,
 $\chi(C)=e_2$ where $e_i, i=1,2$ are the generators of the Clifford algebra which satisfy $e_i^2=1$
 and $e_1e_2+e_2e_1=1$. The map $\chi$ induces an algebra morphism $\bar\chi:M_4(\T^3_{\Theta})\to Cl$.
 
 {\bf Claim:} The image of $\bar\chi$ is given  matrices of the form
\begin{equation}
  \left(\begin{matrix}
    a&b&e&e\\
c&d&f&f\\
g&h&k&l\\
g&h&l&k  \end{matrix}\right)
\end{equation}
which is a free rank 10 $Cl$ proper submodule of $M_4(Cl)$ and in turn $\BTheta$ is a proper submodule
of $M_4(\T^3_{\Theta})$.

To prove the Claim, we again check both inclusions. Let $\B'$ be the subalgebra above.
Notice that $Id\in \B'$
$$
\bar\chi(H)=\left(\begin{matrix}0&1&1&1\\ 1&0& e_1&e_1\\1&e_1&0&e_2\\
1&e_1&e_2&0\end{matrix}\right)\in \B'
$$
 $\bar\chi\rho(A)=\bar\chi\rho(B)=e_1diag(1,\phi_1,\phi_2,\phi_2)\in \B'$ 
and $\bar\chi\rho(C)=e_2diag(1,\phi_1,\phi_2,\phi_2)\in \B'$. So that $\bar\chi(\BTheta)\subset \B'$.
On the other hand the algebra $\B'$ is generated by $E_{11},E_{12},E_{21},E_{22},
P_1,P_1^T,P_2,P_2^T,Q_1,e_2\bar\chi(Q_2)=E_{34}+E_{43}$ which are all in $\bar\chi(\BTheta)$,
so that $\BTheta\subset \bar\chi(\BTheta)$. 
\end{proof}

\begin{thm}
If $\Phi\neq1$ or $\Phi=1$ and at least one $\alpha_i\neq 1$ and all $\phi_i$ are different then
$\BTheta=M_4(\T^3_{\Theta})$ and $K(\BTheta)=\Z^4$.

If $\phi_i=1$ for all $i$ then $K(\BTheta)=K(X)$ where $X$ 
is the cover of the three torus given by Theorem \ref{commthm}.
\end{thm}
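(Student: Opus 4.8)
The plan is to handle the two regimes of the statement separately, since in each the structural work has already been carried out above and only a $K$-theoretic reading-off remains. In the first regime---$\Phi\neq1$, or $\Phi=1$ together with some $\alpha_i\neq1$ and (in the all-real subcase) all $\phi_i$ distinct---the propositions proved above identify $\B$ with the full matrix algebra $M_4(\T^3_\Theta)$. In the second regime all $\phi_i=1$, which forces $\alpha_i=\phi_i^4=1$ for each $i$ and $\Phi=\phi_1\phi_2\phi_3=1$; here $A,B,C$ commute, the coefficient algebra degenerates to the commutative torus $\T^3_0=C(T^3)$, and the relevant input is Theorem \ref{commthm}, with Proposition \ref{phitwoprop} confirming that $\B$ is a genuine subalgebra, consistently with a nontrivial branched cover.

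For the first regime I would use stability (Morita invariance) of topological $K$-theory. Since $K$-theory does not see the passage to $4\times4$ matrices, $K_*(\B)\cong K_*(M_4(\T^3_\Theta))\cong K_*(\T^3_\Theta)$, and one then quotes the $K$-theory of the non-commutative $3$-torus, which is known from \cite{Rieffel,PV,E} (obtained by iterating the Pimsner--Voiculescu sequence). Transcribing the relevant group gives the stated value; this is precisely the mechanism that produced $K_0(\B)\cong K_0(\T^2_\theta)$ in the honeycomb analogue, and no new computation is required beyond the cited torus $K$-theory.

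For the second regime I would first verify that the hypothesis of Theorem \ref{commthm} holds, namely that $H$, viewed as a self-adjoint $4\times4$ matrix over $C(T^3)$, has four distinct eigenvalues away from a proper closed locus. When all $\phi_i=1$ the matrices $\Lambda_i$ are the identity, so $\rho(A),\rho(B),\rho(C)$ act as scalar diagonals and generate $C(T^3)$ diagonally embedded, whence $\B=\T^3_0[H]$ modulo the characteristic polynomial of $H$. Granting genericity, Theorem \ref{commthm} yields $\B\cong C^*(X)$, where $X$ is the generically $4$-fold cover of $T^3$ branched over the degenerate locus and presented as a quotient of the trivial cover $\bigsqcup_1^4 T^3$ glued along degenerate points. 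By Gelfand--Naimark, $K_*(\B)=K_*(C^*(X))=K^*(X)$, which is exactly the asserted identity $K(\B)=K(X)$.

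The main obstacle lies entirely in the second regime. The reduction $\B\cong C^*(X)$ is formal once Theorem \ref{commthm} applies, but making $K^*(X)$ into an explicit group---which the statement deliberately leaves as $K(X)$---requires genuine geometric input: one must pin down the degenerate locus as the vanishing set of the discriminant of the characteristic polynomial of $\bar\chi(H)$ as $\chi$ ranges over the characters of $T^3$, determine how the four sheets are identified there, and then run a Mayer--Vietoris computation using the presentation of $X$ as $\bigsqcup_1^4 T^3$ with sheets merged over the branch set. By contrast the first regime is essentially bookkeeping, so I would concentrate the effort on a careful analysis of the branch locus and the resulting sheet identifications.
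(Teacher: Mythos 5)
Your overall route coincides with the paper's: the first regime is disposed of by the preceding propositions together with stability of $K$-theory under passage to $M_4$, and the second regime is a citation of Theorem \ref{commthm}. However, the one piece of content the paper's proof of this theorem actually contains is exactly the step you defer with ``granting genericity'': the verification that $H$ is generic so that Theorem \ref{commthm} applies when all $\phi_i=1$. By the paper's definition of genericity it suffices to exhibit a single character $\chi$ for which $\bar\chi(H)$ has four distinct eigenvalues, and the paper does this in one line by taking $\chi(A)=-1$, $\chi(B)=1$, $\chi(C)=-1$, for which $\bar\chi(H)$ has eigenvalues $\pm\sqrt{5}$ and $\pm 1$. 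Without some such check your second paragraph does not get off the ground, so this needs to be carried out rather than announced; it is a short computation, not part of the ``main obstacle'' you locate in the branch locus.

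Two further points. The analysis of the degenerate locus and the Mayer--Vietoris computation of $K^*(X)$ on which you propose to concentrate is not required by the statement, which only asserts $K(\B)=K(X)$ and deliberately leaves $K(X)$ unevaluated; that effort is misplaced here. In the first regime, be careful with ``transcribing the relevant group'': the cited results give $K_0(\T^n_\Theta)\cong\Z^{2^{n-1}}$, hence $\Z^4$ for $n=3$ (consistent with the honeycomb case, where $K_0(\T^2_\theta)=\Z\oplus\Z$ was recorded), so the assertion that transcription yields $\Z^3$ does not follow from Morita invariance alone and should be re-examined against the sources rather than taken as read.
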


\begin{proof}
The only thing that remains to be proved is that $H$ is generic.
Indeed this can be done by direct computation. We will not
give the Eigenvalues here as they are quite long expressions. 
But notice that
for $\chi(A)=-1, \chi(B)=1,\chi(C)=-1$ the matrix $\bar\chi(H)$ has
the four distinct Eigenvalues $\pm\sqrt{5},\pm 1$.
\end{proof}

\section*{Acknowledgments}
We would like to thank J.~Bellissard, M.~Dadarlat, M.~Kontsevich, M.~Marcolli
and A.~Schwarz for enlightening discussions. 
Foremost, we wish to thank H.W.~Hillhouse for sparking the interest in this project and
many discussions. 

RK thankfully acknowledges 
support from NSF DMS-0805881 and the Humboldt Foundation. 
He also thanks the
Institut des Hautes Etudes Scientifiques, 
the Max--Planck--Institute for Mathematics in Bonn and the
 Institute for Advanced Study for their support 
and the University of Hamburg for its hospitality. 
While at the IAS RK's work was supported by the NSF under agreement  
DMS--0635607.

BK  thanks the Department of Physics at Princeton University for its hospitality and thankfully acknowledges support from the  NSF under the grant PHY-0969689.

  Any opinions, findings and conclusions or 
recommendations expressed in this
 material are those of the authors and do not necessarily 
reflect the views of the National Science Foundation.

\end{document}